\newcommand{\AckGiorgio}{Giorgio Orsi has also been supported by the Oxford Martin School's grant no. LC0910-019.}
\newcommand{\AckStandard}{The research leading to these results has received
    funding from the European Research Council under the European Community's
    Seventh Framework Programme (FP7/2007--2013) / ERC grant agreement DIADEM,
    no.~246858. }
\newcommand{\ACKNOWLEDGEMENTS}{\AckStandard\AckGiorgio}
\definecolor{purple}{rgb}{0.65, 0.12, 0.82}
\definecolor{lightblue}{rgb}{0,0,.7}
\definecolor{orange}{rgb}{1,.7,0}
\definecolor{darkorange}{rgb}{1,.4,0}
\definecolor{darkgreen}{rgb}{0,.8,0}
\definecolor{darkblue}{rgb}{0,0,.4}
\definecolor{darkred}{rgb}{.4,0,0}
\definecolor{gray}{rgb}{.2,.2,.2}
\definecolor{darkgray}{rgb}{.4,.4,.4}
\definecolor{shadecolor}{gray}{0.975}
\newif\if@restonecol
\lstdefinelanguage{XQuery}{%
  morekeywords=[1]{for,in,let,where,return,if,then,else,case,satisfies,default,typeswitch},%
  morekeywords=[1]{child,descendant,attribute,self,descendant-or-self,following-sibling,following,parent,ancestor,preceding-sibling,preceding,ancestor-or-self,},%
  morekeywords=[2]{cast,as,castable,instance,of,some,every,in,unordered,ordered,stable,order,by,descending,empty,collation,derives-from,nilled,nillable,node,schema-element,document-node},%
  morekeywords=[3]{element,function,variable,comment,processing-instruction,text,document,attribute,to},%
  morekeywords=[4]{xquery,declare,namespace,module,option,ordering,version,encoding,order,last,copy-namespaces,preserve,no-inherit,import,schema,at,external,construction,strip,base-uri,boundary-space},%
  morekeywords=[3]{fn:count,fn:sum,fn:empty,fn:max,fn:doc,fn:avg,fn:node-name,fn:last,fn:position,fn:error,position,last,size,count,substring,fn:substring,string,fn:string,contains,fn:contains,concat,fn:concat,substring-after,fn:substring-after,fn:string-join,string-join},%
  morekeywords=[5]{mod,eq,or,and,idiv,true,false,is,div,union,intersect,except,gt,lt,},%
  morestring=[b]",%
  morecomment=[s]{(:}{:)},%
  alsoletter={-},
  sensitive%
}[keywords,comments,strings]
\lstdefinelanguage{XPath1}{%
  morekeywords=[1]{child,descendant,attribute,self,descendant-or-self,following-sibling,following,parent,ancestor,preceding-sibling,preceding,ancestor-or-self,desc,foll,prec},%
  morekeywords=[2]{comment,processing-instruction,text,attribute},%
  morekeywords=[3]{fn:count,fn:sum,fn:empty,fn:max,fn:doc,fn:avg,fn:node-name,fn:last,fn:position,fn:error,position,last,size,count,substring,fn:substring,string,fn:string,contains,fn:contains,concat,fn:concat,boolean},%
  morekeywords=[5]{or,and,true,false,union,intersect,except},%
  morestring=[b]",%
  morestring=[b]',%
  morecomment=[s]{(:}{:)},%
  alsoletter={-},
  literate={<----}{$\longleftarrow\;\,\,$}{3},
  sensitive%
}[keywords,comments,strings]
\lstdefinelanguage{OXPath}[]{XPath1}{%
  morekeywords=[1]{style},%
  morekeywords=[2]{field,any-field,location},%
  morekeywords=[3]{},%
  morekeywords=[4]{},%
  morekeywords=[5]{},%
  morecomment=[s]{\{}{\}},%
  commentstyle=\bfseries\color{red},
}[keywords,comments,strings]
\lstdefinelanguage{CSS}{%
  morekeywords=[1]{>,+,*,\#,|,~,not,nth-child,attribute,first-of-type,first-line,last-of-type,last-child,empty,nth-of-type},%
  morekeywords=[2]{@import,@media,@charset},
  morekeywords=[3]{color,font-weight,margin,margin-left,padding-left,border-left,background-color,font-family,text-transform,font-size,border,text-align,background,font-style},%
  morekeywords=[4]{sans-serif,rgb,solid,none,normal,uppercase,bold,print},%
  morestring=[b]",%
  alsoletter={>,+,\#,|,~,-,@},
  alsoother={:,::},
  sensitive%
}[keywords,comments,strings]
\lstdefinelanguage{XMLSchema}[]{XML}{
  morekeywords=[1]{xs:complexType,xs:element,xs:sequence,xs:choice,xs:simpleType,xs:key,xs:selector,xs:unique,xs:keyref,xs:field,xs:restriction,xs:minExclusive,xs:attribute,xs:extension,xs:complexContent,xs:schema},
  morekeywords=[3]{base,name,minOccurs,maxOccurs,type,xpath,refer,use}
}
\lstdefinelanguage{FancyXML}[]{XML}{
  morecomment=[s]{<}{>},%
  commentstyle=\bfseries\color{darkblue},
}
\lstdefinelanguage{SPARQL}{%
  morekeywords=[1]{CONSTRUCT,WHERE,SELECT},
  morekeywords=[2]{AND,FILTER,UNION,OPT,OPTIONAL,MINUS},%
  morekeywords=[3]{sameTerm,isBLANK,isLITERAL,isIRI,BOUND},
  morekeywords=[4]{},
  morekeywords=[5]{},
  morestring=[b]",%
  alsodigit={-},%
}[keywords,strings]
\lstdefinelanguage{TurtleTwo}[]{SPARQL}{%
  morekeywords=[1]{@prefix},
  morekeywords=[2]{rdf,rdfs,fb},%
  morekeywords=[5]{type,subClassOf,domain,range,subPropertyOf},
  morekeywords=[6]{hasPart,linksTo,title,genre,author},
}[keywords,strings]
\definecolor{purple}{rgb}{0.65, 0.12, 0.82}
\definecolor{flexred}{rgb}{0.65, 0.01, 0.01}
\definecolor{flexgreen}{rgb}{0, 0.6, 0}
\definecolor{flexgray}{rgb}{0.25, 0.37, 0.75}
\definecolor{flexblue}{rgb}{0, 0.2, 1}
\definecolor{flexfunction}{rgb}{0.2, 0.6, 0.4}
\definecolor{flexvar}{rgb}{0.4, 0.6, 0.8}
\lstdefinelanguage{JavaScript} {
  sensitive=true,
  morecomment=[l][\color{flexgreen}\ttfamily]{//},
  morecomment=[s][\color{flexgreen}\ttfamily]{/*}{*/},
  morecomment=[s][\color{flexgray}\ttfamily]{/**}{*/},
  morestring=[b]",
  stringstyle=\color{flexred}\textbf,
  commentstyle=\color{flexgreen},
  showstringspaces=false,
  numberstyle=\scriptsize,
  numberblanklines=true,
  showspaces=false,
  breaklines=true,
  showtabs=false,
  emph ={[1]class, package, interface, prototype},
  emphstyle={[1]\color{purple}\textbf},
  emph ={[2]internal, public, protected, private,
    super, this, import, new, extends, implements,
    void, true, false, as
  },
  emphstyle={[2]\color{flexblue}\textbf},
  emph =
  {[3]
    function
  },
  emphstyle={[3]\color{flexfunction}\textbf},
  emph =
  {[4]
    var
  },
  emphstyle={[4]\color{flexvar}\textbf}
}
    \lstdefinelanguage{ASN1}{%
      morekeywords=[1]{SEQUENCE,INTEGER,OPTIONAL,NumericString,IA5String,FALSE,ENUMERATED},
      morekeywords=[3]{},%
      morekeywords=[4]{},%
      morestring=[b]",%
     alsoletter={},
     alsoother={},
     sensitive%
    }[keywords,comments,strings]
\lstdefinelanguage{XHTML5}[]{HTML}{%
      morekeywords=[1]{mark,header,hgroup,nav,section,article},
      deletekeywords=[1]{width,height},
      morekeywords=[2]{width,height,svg:svg,svg:circle,cx,r,stroke,fill,xmlns:svg},
      alsoletter={:},
	morestring=[b]',
    }[keywords,comments,strings]
\lstdefinelanguage{pprolog}{%
  morestring=[b]",%
  literate=*{:-}{$\Leftarrow\;\,\,$}{2} {,\ }{$\ \land\ $}{2}
    {not}{$\neg$}{3} {!=}{$\neq$}{2}, 
}[keywords,strings]
\lstdefinelanguage{OPAL}[]{Prolog}{%
  morekeywords=[1]{TEMPLATE,INSTANTIATE,using},
  morekeywords=[2]{child,descendant,adjacent,following,follows,next},
  morekeywords=[3]{form,link},
  morekeywords=[4]{concept,segment},
  morestring=[b]",%
  alsoother={@},
  sensitive,
  literate=* {:-}{$\Leftarrow\;\,\,$}{2} {not\ }{$\neg$}{4} {and}{$\ \land\ $}{3}  
  {all}{$\forall\:$}{3} {\ or\ }{$\lor$}{4}  {!}{$\neg$}{1} {<<}{\!\!\!\!\!$\prec$\;}{2}
  {!=}{\!\!\!$\neq$\;}{2} {...}{$\ldots$}{3} 
}[keywords,strings]
\lstdefinestyle{nonumbers}{numbers=none}
\lstdefinestyle{smaller}{basicstyle=\normalfont\ttfamily\footnotesize}
\lstdefinestyle{OPAL}{language=OPAL, numbers=none, mathescape=true,breaklines=false,
  emphstyle=[1]{\normalfont\sffamily\scshape},
  emphstyle=[2]{\normalfont\sffamily\itshape}}
\newcommand{\mathlit}[1]{\:\textquoteleft\ensuremath{\color{blue}#1}\textquoteright\:}
\lstdefinestyle{nonumbers}{numbers=none}
\lstdefinestyle{smaller}{basicstyle=\normalfont\ttfamily\footnotesize}
\lstdefinestyle{OPAL}{language=OPAL, numbers=none, mathescape=true,breaklines=false,
  emphstyle=[1]{\normalfont\sffamily\scshape},
  emphstyle=[2]{\normalfont\sffamily\itshape}}
\lstdefinestyle{OXPath}{language=OXPath,numbers=none,mathescape=false,breaklines=false}
\bfseries\color{darkblue},%
\bfseries\color{red},%
\itshape\color{darkgray},%
\newcommand{\deskip}[1]{\vspace*{-#1\baselineskip}}
\newcommand{\smalldeskip}{\deskip{.4}}
\newenvironment{compactdef}{\smalldeskip\begin{definition}}%
        {\end{definition}\smalldeskip}
        {\end{theorem}\smalldeskip}
        {\end{proposition}\smalldeskip}
\newcommand{\fctn}[1]{\textsf{#1}\xspace}
\newcommand{\VALUES}{\fctn{values}}
\newcommand{\TRUE}{\textbf{true}\xspace}
\newcommand{\htmltag}[1]{\texttt{#1}}
\newcommand{\type}[1]{\textsf{\textit{#1}}}
\newcommand{\TYPE}[1]{\textsf{\textsc{#1}}}
\newcommand{\tool}[1]{\textsf{#1}\xspace}
\newcommand{\systemName}[1]{\textsc{#1}}
\newcommand{\OPAL}{\systemName{opal}\xspace}
\newcommand{\OPALlong}{\textbf{o}ntology based web
    \textbf{p}attern \textbf{a}nalysis with \textbf{l}ogic\xspace}
\newcommand{\PARTOF}{\ensuremath{\textrm{\raisebox{.04em}{\textendash}}\hspace*{-.02em}\diamond}}
\newcommand{\ALLOWEDNODES}{\textsf{Allowed}\xspace}
\newcommand{\FIELDS}{\textsf{Fields}\xspace}
\newcommand{\MATCHINGLABELS}{\textsf{M}\xspace}
\newcommand{\BLOCKEDLABELS}{\textsf{Block}\xspace}
\newcommand{\sem}[1]{\ensuremath{\left\llbracket\,
      #1\,\right\rrbracket}}
\newcommand{\unary}{\ensuremath{\textsf{Unary}}\xspace}
\newcommand{\FLAB}{\ensuremath{\mathfrak{La}}\xspace}
\newcommand{\FREP}{\ensuremath{\mathfrak{Re}}\xspace}
\newcommand{\TAG}{\fctn{tag}\xspace}
\newcommand{\TEXT}{\fctn{text}\xspace}
\newcommand{\BOX}{\fctn{box}\xspace}
\newcommand{\PROPER}{\fctn{isLabel}\xspace}
\newcommand{\VALUE}{\fctn{isValue}\xspace}
\newcommand{\PRECEDENCE}{\ensuremath{\prec}\xspace}
\newcommand{\CONTAINS}{\ensuremath{\lhd}\xspace}
\newcommand{\ALIGNED}{\fctn{aligned}\xspace}
\newcommand{\SUBCLASS}{\ensuremath{\sqsubset}\xspace}
\newcommand{\annotationTypes}{\mathcal{A}}
\newcommand{\childTypes}{\textit{child-}\mathcal{T}}
\newcommand{\OPALlang}{\textsc{opal-tl}\xspace}
\DeclareMathAlphabet{\mathcal}{OMS}{cmsy}{m}{n}
\begin{document}



\title{The Ontological Key: Automatically Understanding and \\Integrating Forms
  to Access the Deep Web}


\author{Tim Furche \and Georg Gottlob \and Giovanni Grasso \and\\
  Xiaonan Guo \and Giorgio Orsi \and Christian Schallhart}
  
\institute{Department of Computer Science, Oxford University, Wolfson
  Building, Parks Road, Oxford OX1 3QD\\
  \email{firstname.lastname@cs.ox.ac.uk} }

\journalname{}
\date{25 Sep 2012}

\maketitle

\begin{abstract}
  Forms are our gates to the web. They enable us to access the deep content of
  web sites. Automatic form understanding provides applications, ranging from
  crawlers over meta-search engines to service integrators, with a key to this
  content. Yet, it has received little attention other than as component in
  specific applications such as crawlers or meta-search engines.  No
  comprehensive approach to form understanding exists, let alone one that
  produces rich models for semantic services or integration with linked open
  data.
  
  In this paper, we present \OPAL, the first comprehensive approach to form
  understanding and integration. We identify form labeling and form
  interpretation as the two main tasks involved in form understanding. On both
  problems \OPAL pushes the state of the art: For form labeling, it combines
  features from the text, structure, and visual rendering of a web page. In
  extensive experiments on the ICQ and TEL-8 benchmarks and a set of $200$
  modern web forms \OPAL outperforms previous approaches for form labeling by a
  significant margin. For form interpretation, \OPAL uses a schema (or ontology)
  of forms in a given domain. Thanks to this domain schema, it is able to
  produce nearly perfect ($>97\%$ accuracy in the evaluation domains) form
  interpretations. Yet, the effort to produce a domain schema is very low, as we
  provide a Datalog-based template language that eases the specification of such
  schemata and a methodology for deriving a domain schema largely automatically
  from an existing domain ontology.  We demonstrate the value of \OPAL's form
  interpretations through a light-weight form integration system that
  successfully translates and distributes master queries to hundreds of forms
  with no error, yet is implemented with only a handful translation rules.
\end{abstract}

\section{Introduction}
\label{sec:introduction}

Unlocking the vast amount of data in the deep web for automatic processing has
been a central part of ``web as a database'' visions in the past. The web offers
unprecedented choice and variety of products, but we lack tools to make these
wealth of offers easily manageable. Say you are looking for a
flat. \emph{Aren't you tired of filling registration forms with your search
  criteria on the websites of hundreds of local agencies?  You fear to miss the
  site with the very best offer? Wouldn't you wish to automatize these tiresome
  tasks?} To unlock this data for automatic processing requires two keys: a key
that allows us through the human-centric, scripted form interfaces of the web
and a key to identify offers among all the other data on the web. In this paper,
we focus on the former: A key to web forms, the gates to the deep web. Since
these gates are designed for human admission, they pose a plethora of challenges
for automatic processing: Even web forms within a single domain denote search
criteria differently, e.g., ``address'', ``city'', ``town'', and
``neighborhood'' all refer to locations, while other terms denote different
criteria ambiguously, e.g., ``tenure'' might refer to the choice either between
``freehold'' vs.~``leasehold'' or between ``buy'' vs.~``rent''.  Moreover, web
forms present their criteria in different manners, e.g., for a choice among
several options, a form may contain either a drop-down lists or a set of check
boxes. Automatically understanding these variants to pass through forms is
needed by a broad range of applications:
\begin{inparaenum}[]
\item crawling and surfacing the deep web~\cite{raghavan01:_crawl_hidden_web,%
    madhavan08:_googl_deep_web_crawl,%
    DBLP:journals/sigmod/CafarellaCFHHLMM08},
\item interface and service integration~\cite{zhen04:_under_web_query_inter},
\item matching interfaces across domains~\cite{DBLP:conf/icde/BilkeN05,%
    wu09:_model_and_extrac_deep_web_query_inter},
\item classifying the domain of web databases~\cite{DBLP:conf/www/BarbosaF07}
  for web site classification,
\item sampling the contents of web databases~\cite{DBLP:conf/sigmod/MaitiDZD09,%
    DBLP:journals/jacm/Bar-YossefG08}, 
\item ontology enrichment and knowledge-base construction~\cite{re-knowledge-base-construction},
\item question answering for the deep web~\cite{lehmann12:_deqa}.
\end{inparaenum}
In web engineering, automated form understanding contributes, e.g., to web
accessibility and usability~\cite{kalijuvee01:_effic_web_form_entry_pdas}, web
source integration~\cite{dragut09:_hierar_approac_to_model_web}, automated
testing on form-related web applications.

The form understanding problem has attracted a number of
approaches~\cite{zhen04:_under_web_query_inter,%
  wu09:_model_and_extrac_deep_web_query_inter,%
  dragut09:_hierar_approac_to_model_web,%
  nguyen08:_learn_to_extrac_from_label,%
  DBLP:conf/cikm/KhareA09}, for a recent survey see
\cite{khare10:_under_deep_web_searc_inter,DBLP:series/synthesis/2012Dragut}. These approaches turn observations
on common features of web forms (in general, across domains) into specifically
tailored algorithms and heuristics, but generally suffer from three major
limitations:
\begin{asparaenum}[\bfseries (1)]
\item Most approaches are \emph{domain independent} and thus limited to
  observations that hold for forms across all domains. This limitation is
  acknowledged in
  \cite{zhen04:_under_web_query_inter,nguyen08:_learn_to_extrac_from_label,DBLP:conf/cikm/KhareA09},
  but addressed only through domain specific training data, if at all.
  Our evaluation supports \cite{DBLP:conf/cikm/KhareA09} in that a set of
  generic design rules underlies all domains, but that specific domains
  parameterise or adapt to these rules in ways uncommon to other domains.
\item Most approaches are limited in the \emph{classes of features} they use in
  their heuristics and often based on a single sophisticated heuristics using
  one class of features, e.g., only visual features
  \cite{dragut09:_hierar_approac_to_model_web} or textual and field type
  features in \cite{DBLP:conf/cikm/KhareA09}.
\item Heuristics are translated into monolithic algorithms limiting
  maintainability and adaptability. For
  example,~\cite{wu09:_model_and_extrac_deep_web_query_inter}
  and~\cite{nguyen08:_learn_to_extrac_from_label} encode specific assumptions on
  the spatial distance and alignment of fields and
  labels,~\cite{DBLP:conf/cikm/KhareA09} employs hard-coded token classes for
  certain concepts such as ``min'', ``from'' vs.~``max'', ``to''.
\end{asparaenum}

To overcome these limitations, we present \OPAL (\OPALlong), a domain-aware form
understanding system that combines visual, textual, and structural features with
a thin layer of domain knowledge. The visual, textual, and structural features
are combined in a domain-independent analysis to produce a highly accurate form
labeling. However, for most applications what is actually needed is a form model
consistent with a (reference) schema of the forms in the given domain, where all
the fields are associated with given types. In \OPAL, the domain schema is not
only used to classify the fields and segments of the form model, but also to
improve the form model based on a set of structural constraints that describe
typical fields and their arrangement in forms of the domain, e.g., how price
ranges are presented in forms. To ease the development of these domain
ontologies, \OPAL extends Datalog with templates to enable reuse of common form
patterns in forms, e.g., how ranges (of any type) are presented in forms. With
this approach, \OPAL achieves nearly perfect analysis results ($>97\%$
accuracy).

In contrast to previous approaches, \OPAL produces rich form models, typed to
the given domain schema: The models contain not only types (and individual)
constraints for form fields, but group those fields into semantic segments,
possibly with inter-field constraints. These rich models ease the development of
applications that interact with these forms. To demonstrate this, we have
developed a light-weight form integration system on top of \OPAL that fully
automatically translates queries to the domain schema into queries to the
concrete forms. 

\begin{figure*}[tbp]
  \centering
  \subfloat[Full search
  form (highlighting added)]{\label{fig:cmea-full}\includegraphics[width=0.38\textwidth]{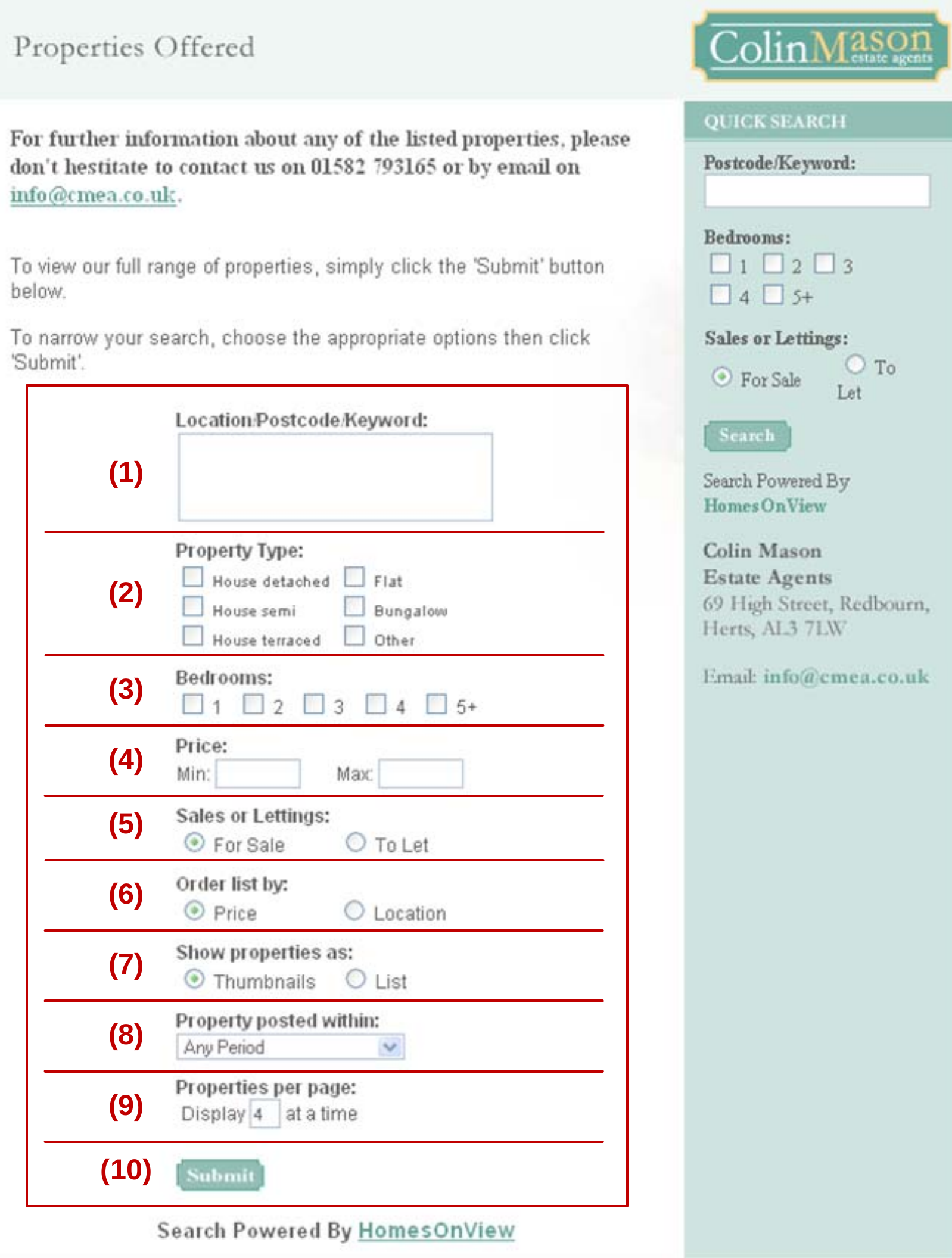}}
  \quad\quad\quad
  \subfloat[Labeling 1: Field
  labels]{\label{fig:cmea-field}\includegraphics[width=0.216\textwidth]{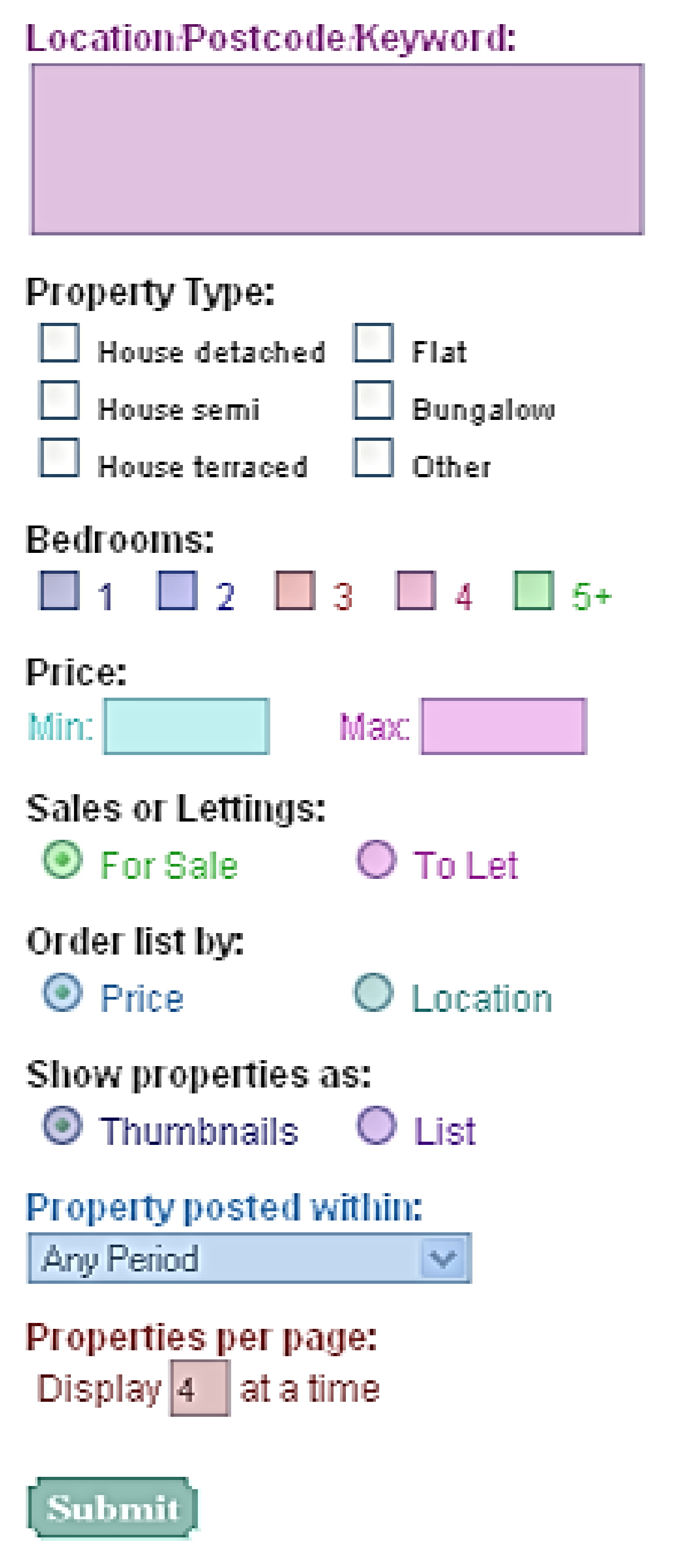}}
  \quad\quad\quad
  \subfloat[Interpretation]{\label{fig:cmea-domain}\includegraphics[width=0.275\textwidth]{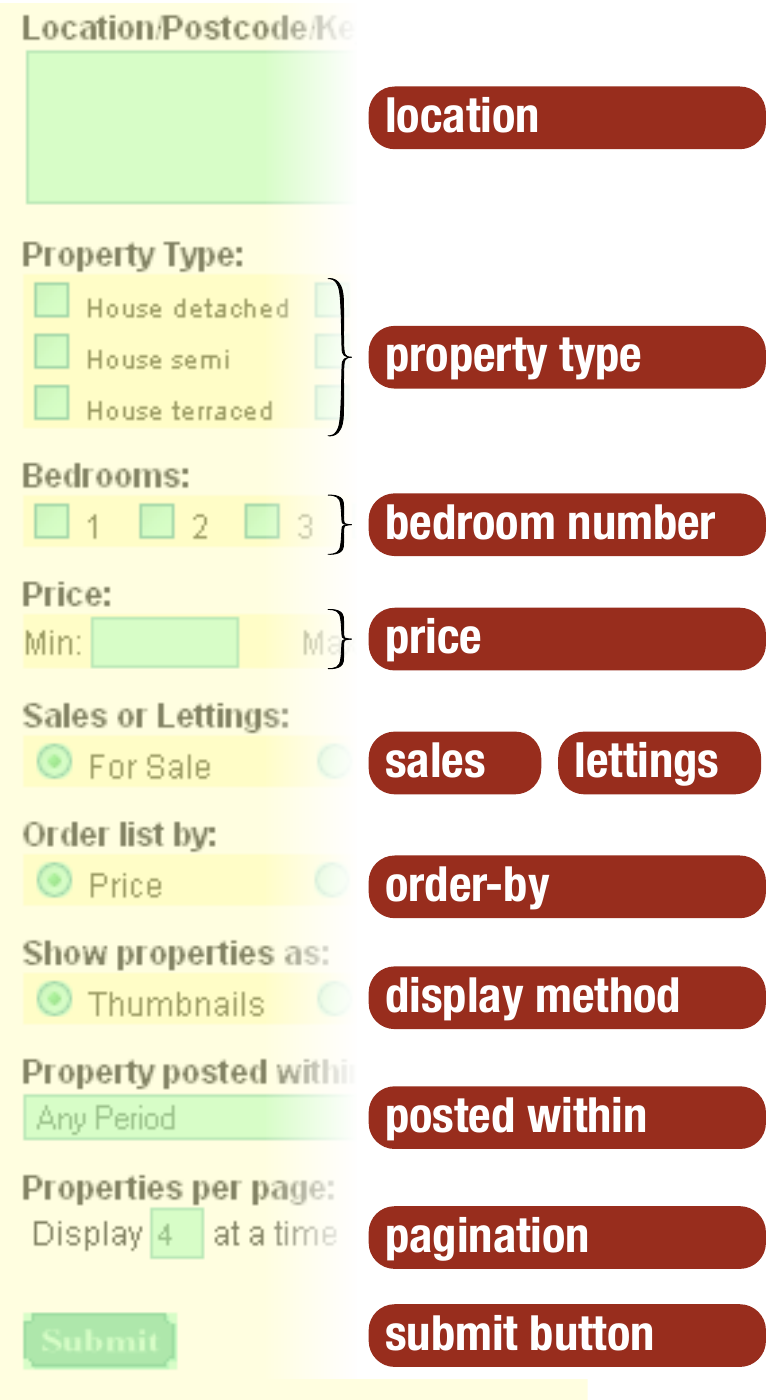}}
  \caption{\emph{Colin Mason} with \OPAL (see Figure~\ref{fig:cmea-segment-result} for
    segment scope)}
  \label{fig:cmea}
\end{figure*}

\subsection{Contributions} \OPAL's main contributions are:
\begin{asparaenum}[\bfseries (1)]
\item \emph{Multi-scope domain-independent analysis}
  (Section~\ref{sec:domain-independent}) that combines structural, textual, and
  visual features to associate labels with fields into a form labeling using
  three sequential ``scopes'' increasing the size of the neighbourhood from a
  subtree to everything visually to the left and top of a field.
  \begin{inparaenum}[\sffamily (i)] 
  \item At \emph{field} scope, we exploit the structure of the page between fields and
    labels;
  \item at \emph{segment} scope, observations on fields in groups of
    similar fields, and
  \item at \emph{layout} scope, the relative position of fields and texts in
    the visual rendering of the page.
  \end{inparaenum}
  We impose a strict preference on these scopes to disambiguate competing
  labelings and to reduce the number of fields considered in later scopes.
\item \emph{Domain awareness.} (Section~\ref{sec:domain-dependent}) \OPAL is
  domain-aware while being as domain-independent as possible
  without sacrificing accuracy. This is based on the observation that generic
  rules contribute significantly to form understanding, but nearly perfect
  accuracy is only achievable through an additional layer of domain knowledge.  To this
  end, we add an optional, domain-dependent classification and form model repair
  stage after the domain-independent analysis. Driven by a domain schema, \OPAL
  classifies form fields based on textual annotations of their labels and values
  assigned in the domain-independent form labeling, as well as the structure of
  that form labeling. This classification is often imperfect due to missing or
  misunderstood labels. \OPAL addresses this in a repair step, where structural
  constraints are used to disambiguate and
  complete the classification and reshape the form segmentation.
\item \emph{Template Language \OPALlang.}
  (Section~\ref{sec:schema-design:-opall}) To specify a domain schema, we
  introduce \OPALlang. It extends Datalog to express common patterns as
  parameterizable templates, e.g., describing a group consisting of a minimum
  and maximum field for some domain type. Together with some convenience
  features for querying the field labeling and its annotations, \OPALlang
  allows for very compact, declarative specification of domain schemata.  We
  also provide a template library of common phenomena, such that the adaption to
  new domains often requires only instantiating these templates with domain
  specific types. \OPALlang preserves the complexity of Datalog.
\item \emph{Methodology for Deriving Domain Schemata.}
  (Section~\ref{sec:domain:methodology-example}) To ease the derivation of an
  \OPAL domain schema, we present a simple, step-by-step methodology how to
  derive such a schema from a standard domain ontology. It is
  based on the observation that often the types of the properties (such
  as price or mileage of a car) in the domain ontology determine the
  configuration of form fields for that type.
\item \emph{Light-weight Form Integration.} (Section~\ref{sec:filling}) To
  demonstrate the value of \OPAL's rich form models, we implement a form
  integration system on top of \OPAL that automatically translates a master
  query to hundreds of concrete forms. As shown in the evaluation, even with rather
  simple translation rules, we achieve accurate form filling.
\item \emph{Extensive Evaluation.} (Section~\ref{sec:evaluation}) In an
  evaluation on over $700$ forms of four different datasets, we show
  that \OPAL achieves highly accurate ($>95\%$) form labelings and,
  with a suitable domain schema, near perfect accuracy in form classification
  ($>97\%$).  To compare with existing approaches (which only perform form
  labeling), we show that \OPAL's domain-independent analysis achieves $94-100\%$
  accuracy on the ICQ benchmark and $92-97\%$ on TEL-8. Thus,
  even without domain knowledge \OPAL outperforms existing approaches by at
  least $5\%$. We also show that the form integration system developed on top of
  \OPAL is able to fill forms correctly in nearly all cases ($>93\%$)
\end{asparaenum}

We believe that \OPAL offers a comprehensive solution to form understanding for
most applications, but also discuss, in Section~\ref{sec:conclusion}, the two
major remaining challenges for \OPAL (and form understanding, in general): highly
scripted, interactive forms, increasingly also using customised 
form widgets, as well as richer integrity constraints and access restrictions, in
particular for applications that aim to extract all of the data behind a form.

This paper is based on \cite{Furche:2012:OAF:2187836.2187948}, but has been
significantly extended in every part, in particular in the following three
aspects: First, \OPALlang is only sketched in
\cite{Furche:2012:OAF:2187836.2187948}. Section~\ref{sec:domain-dependent} is
the first formal definition of \OPALlang, including a full rewriting
semantics. It has also been extended significantly, most importantly in the
supported template features (predicate variables and template groups). Second,
we have added a more detailed description of an \OPAL domain schema and form model to
better illustrate how \OPAL operates and what the output of form understanding
looks like. Finally, we have implemented a full, though light-weight, form
integration and filling system on top of \OPAL (Section~\ref{sec:filling}) to
demonstrate the value of \OPAL's rich models. We have also significantly
extended the evaluation to show the results of the form integration, as well as
to discuss where and why a small portion of forms still pose a challenge to
\OPAL.


\begin{figure*}[tbp]
	\centering 
	\subfloat[Segmentation]{\label{fig:cmea-group}\includegraphics[width=0.37\textwidth]{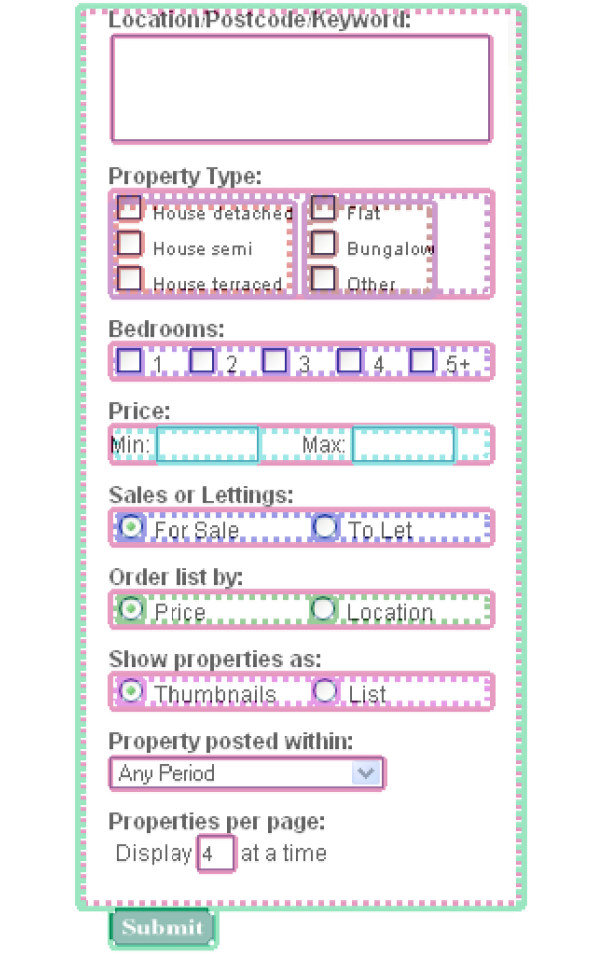}}
	\subfloat[Labeling 2a: Segments]{\label{fig:cmea-grouplabel}\includegraphics[width=0.37\textwidth]{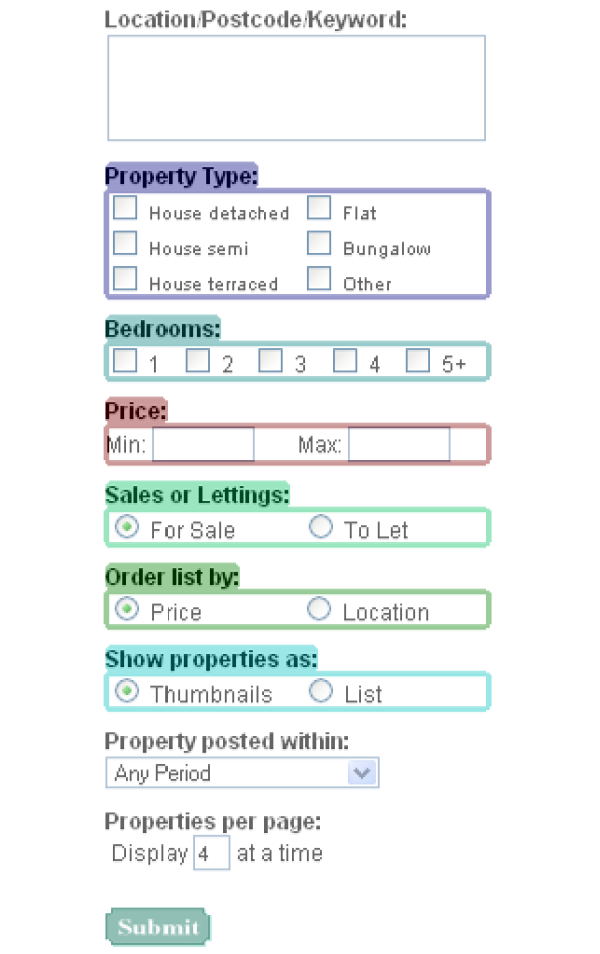}}
	\subfloat[Labeling 2b: Fields by Segment]{\label{fig:cmea-segment}\includegraphics[width=0.254\textwidth]{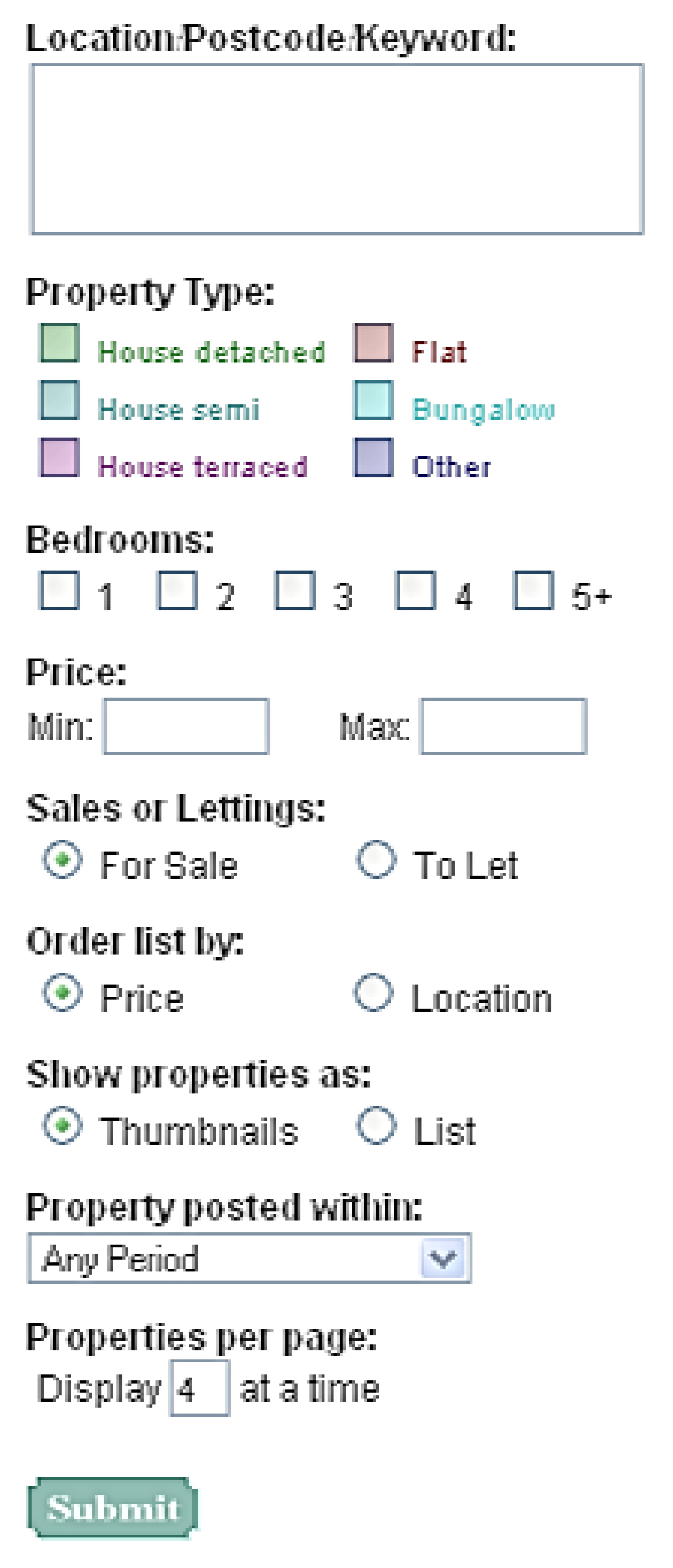}}
	\caption{\emph{Colin Mason} segment scope}
	\label{fig:cmea-segment-result}
\end{figure*}

\subsection{\OPAL: A Walkthrough}
\label{sec:example}


We present the \OPAL approach to form understanding using the form from the UK
real estate agency Colin Mason
(\url{cmea.co.uk/properties.asp}). Figure~\ref{fig:cmea-full} shows the web page
with its simplified CSS box model. The page contains two forms (center and
left): one for detailed search and the other for quick search. \OPAL is able to
identify, separate, label, and classify both forms correctly yielding two
(real-estate) form models. The following discussion focuses on the search form in
the center of Figure~\ref{fig:cmea-full}, in which each of the components
(1)-(10), each of the fields (3)-(7) and the two columns of checkboxes in (2) are
enclosed in a \htmltag{table}, \htmltag{tr}, or \htmltag{td} element. Labels for
each of the components such as ``Bedrooms:'' appear in separate \htmltag{tr}'s.


\OPAL's form understanding operates in two parts: Form labeling and form
interpretation. In the form labeling phase fields and groups of fields (called
segments) are assigned text labels. In the form interpretation phase
those text labels are used to classify the fields and segments on the page,
eventually verifying and repairing the label assignment and producing a form
model in line with the given domain schema. Form labeling itself is split into
field, segment, and layout scope, each assigning successively labels to more 
fields and segments of a form.

\textbf{Field scope.} (Section~\ref{subsec:field-scope}) \OPAL starts by
analysing individual fields assigning labels in two ways: First, we add labels
that explicit reference the field (using the \htmltag{for} attribute).  Second,
we add labels where the common ancestor with a field has no other fields as
descendant. In our example from Figure~\ref{fig:cmea-full}, no explicit
references occur, but the second approach correctly labels all fields except the
checkboxes in (2). In Figure~\ref{fig:cmea-field} we show this initial form
labeling using same color for fields and their labels.

\textbf{Segment scope.} (Section~\ref{subsec:segment-scope}) In segment scope,
we increase the scope of the analysis from form fields to groups of similar
fields (called \emph{segments}). \OPAL constructs these segments from the HTML
structure, but eliminates segments that likely have no semantic relevance and
are only introduced, e.g., for formatting reasons. This elimination is primarily
based on semantic similarity between contained fields approximated via semantic
attributes such as \texttt{class} and visual similarity.  In our example,
components (2)-(7) become segments, with (2) further divided into two segments
for each of the vertical checkbox groups, as shown in
Figure~\ref{fig:cmea-group}. This rough, approximate segmentation may later be
corrected in the form interpretation.

For each \emph{segment as a whole}, \OPAL associates text nodes to create
segment labels. Segment labels can be useful to repair the form model and to
classify fields that have no labels otherwise. In this example, \OPAL assigns
the text in bold face appearing atop each segment as the label, e.g., ``Price:''
becomes the label for (4), see Figure~\ref{fig:cmea-grouplabel}. Furthermore,
\emph{within} each segment, \OPAL identifies repeated groups of interleaving
fields and texts. In the example, each check box in (2) is labeled with the text
appearing after it, as shown in Figure~\ref{fig:cmea-segment}. 

\textbf{Layout scope.} (Section~\ref{sec:layout-scope}) In the layout scope,
\OPAL further enlarges the scope of the analysis to all fields visually to the
left and above a field. The primary challenge in this scope is
``overshadowing'', i.e., if other fields appear in the quadrants to the left
and above a field. In this example the layout scope is not needed.

The result of the layout scope is the form labeling. Notice, that the form
labeling is entirely domain independent.

\textbf{Domain scope.}  If a form model is required, the final step in \OPAL
produces a \emph{form model} that is consistent with a given domain schema.  How
to derive such a domain schema and the necessary annotators is discussed in
Section~\ref{sec:domain:methodology-example}.  It uses domain knowledge to
classify and repair the labeling and segmentation from the form labeling. In the
classification step, \OPAL annotates fields and segments with types based on
annotations of the text labels. The verification step repairs and verifies the
domain model if needed. For both steps, \OPAL uses constraints specified in
\OPALlang. These constraints model typical representations of types in a
domain. E.g., the first field in (4) is classified as \TYPE{min\_price} as we
recognise this segment as an instance of a price range template. These
constraints also disambiguate between multiple, conflicting annotations, e.g.,
fields in (6) are annotated with \type{order\_by} and \type{price}, but the
\type{price} annotation is disregarded due to the group label. Even without the
group label, \type{price} would be disregarded as the domain schema gives
precedence to \type{order\_by} over \type{price} due to the observation that if
they occur together, the field is likely about ``order by price'' and not about
actual prices. Finally, a single repair is performed in this case: We collapse
the two checkbox segments in (2) as they are the only children of their parent
segment and both of the same type. Figure~\ref{fig:cmea-domain} shows the final
field classification as produced by \OPAL.

%
\textbf{Form integration and filling.} Using the form interpretation constructed
in the preceding stages, \OPAL is able to map a master query formulated on the
domain schema into both of the concrete forms on this page (see
Figure~\ref{fig:cmea-full}).  For location, the values are typed in
directly. For price, the range in the master query can also be directly entered,
as the concrete forms use text inputs for prices and \OPAL's form interpretation
identifies the min and max price field successfully. For the bedroom number, the
value from the master query is compared with the members of the check box list
and the most similar is selected.


\section{Approach}
\label{sec:approach}

\OPAL constructs a model of a form consistent with a \emph{domain schema.}
A domain schema describes how forms in a given conceptual
domain, such as the UK real estate domain, are structured.
\OPAL divides this problem (``form understanding'') into \emph{form labeling}
and \emph{form interpretation.}
The form labeling identifies forms and their fields, arranges the fields into a
tree, and labels the found fields, segments, and forms with text nodes from the
page. The form interpretation aligns a form labeling with the given domain
schema and thereby classifies the form fields based on their labels.


\begin{figure*}[tbp]
  \centering
  \includegraphics[width=.8\linewidth]{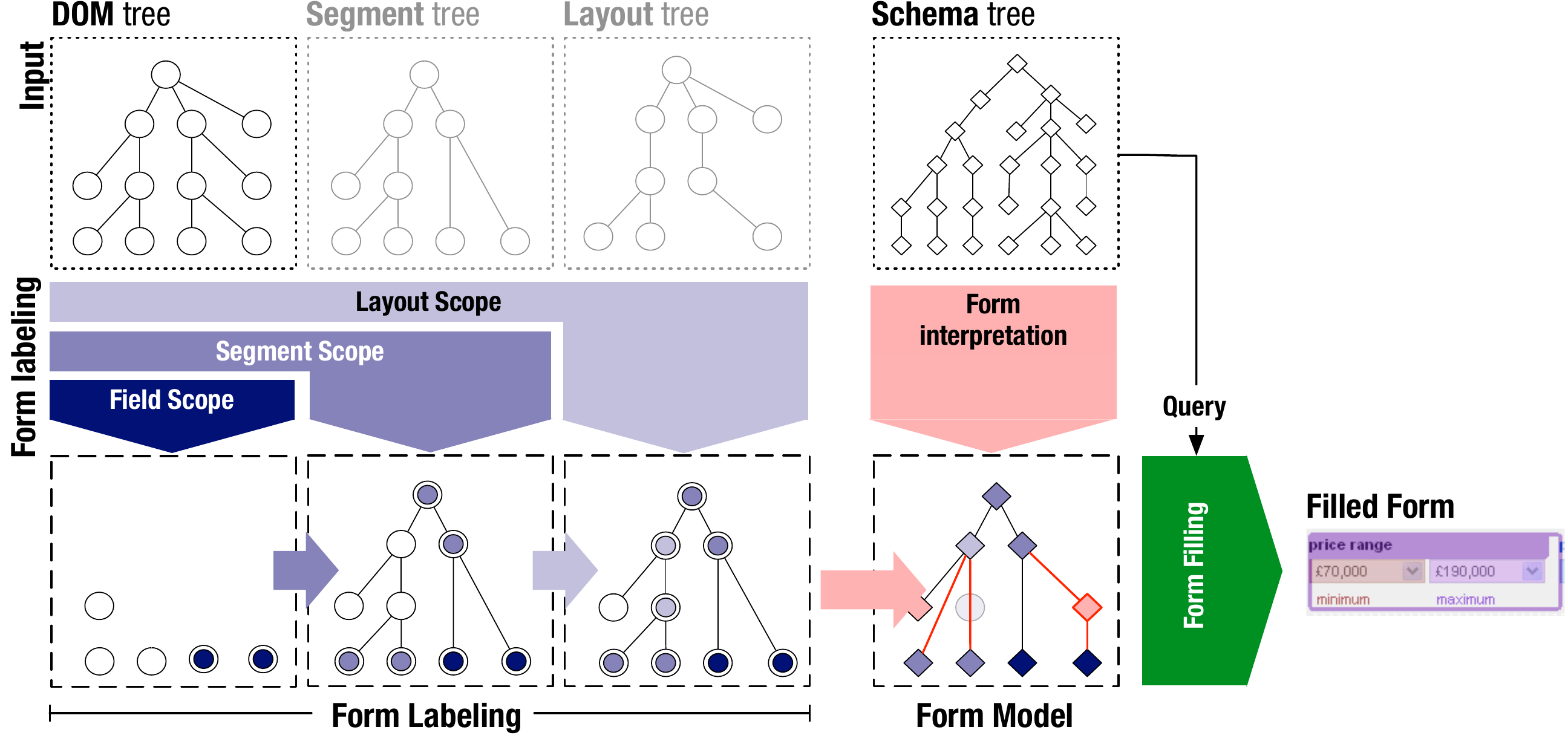} 
  \caption{\OPAL Overview}
  \label{fig:system-overview}
\end{figure*}

\subsection{Problem Definition}
\label{sec:sett-probl-defin}
\paragraph{Form Labeling.}
A \textbf{web page} is a DOM tree $P = \bigl((U)_{U\in\unary},$
$R_{\textsf{child}}, R_{\textsf{next-sibl}},
R_{\textsf{attribute}}\bigr)$ where $(U)_{U \in \unary}$ are unary
type and label relations, $R_{\textsf{child}}$ is the parent-child,
$R_{\textsf{next-sibl}}$ the direct next sibling, and
$R_{\textsf{attribute}}$ the attribute relation.
Further XPath relations (such as \textsf{descendant}) are derived from these
basic relations as usual \cite{Benedikt2007XPath-Leashed}.
$U$ contains relations for types as in XPath (element, text, attribute, etc.)
and three kinds of label relations, namely $\TAG^t$ for tags of elements and
attributes, $\TEXT^l$ for text nodes containing string $l$, and $\BOX^b$ for
elements with bounding box $b$ in a canonical rendering of the page. For
consistency with elements, we represent the value of an attribute as text child
node of the attribute.

\begin{compactdef}
  A \textbf{form labeling} of a web page $P$ is a tree $F$ with functions $\FREP$
  (representative) and $\FLAB$ (label), such that $\FREP$ maps the nodes of $F$
  into $P$. Leafs in $F$ are mapped to form fields and inner nodes to \emph{form
    segments}, that is a DOM element grouping a set of fields. Each node $n$ in
  $F$ is also mapped to a set $\FLAB(n)$ of text nodes, the \emph{labels} of
  $n$.
\end{compactdef}

A node can be labeled with no, one, or many labels via $\FLAB$. The form
labeling contains a representative (via $\FREP$) for each form. A representative
contains all fields (and segments) of that form. This allows \OPAL to distinguish
multiple forms on a single page, even if no form element is present or multiple
forms occur in a single form element.

\begin{compactdef}
  Given a web page $P$, the \textbf{form labeling problem} (or
  schema-less form understanding problem) asks for a form
  labeling $F$ where for each form $f$ in $P$
  \begin{compactenum}[\bfseries(1)]
  \item there is a node $r \in F$ such that $\FREP(r)$ is a suitable
    representative of $f$ and 
  \item for each field $e$ in $f$, there exists a leaf node $n_e \in F$
    such that $n_e$ is a descendant of $r$ and $\FREP(n_e) = e$ where $\FLAB(n_e)$
    is a suitable label set for $e$.
  \item for each inner node $s$ in $F$ (form segment), $\FLAB(s)$ is a suitable
    set of labels for the set of fields contained in $s$.
  \end{compactenum}
\end{compactdef}

The suitability of a form representative $\FREP(r)$ and a label set $\FLAB(n_e)$
is not defined formally, but needs to be evaluated by human annotators (which
this, after all, aims to simulate). Our evaluation
(Section~\ref{sec:evaluation}) shows that \OPAL produces form labelings $F_f$
that match the gold standard in nearly all cases ($>95\%$ without using any
domain knowledge).

We call a form labeling \emph{complete} for a web page, if, for all $e$,
$\FLAB(n_e)$ contains \emph{all} text nodes suitable as labels for $e$. Finding
such a form labeling is correspondingly called the \emph{complete form labeling}
problem.

\paragraph{Form Interpretation.}
To define the form interpretation problem, we formalize the notion of domain
schema and introduce a form model as a form labeling extended with type
information consistent with a given domain schema.  First, we define the part of
a domain schema that provides the necessary knowledge to interpret text nodes
(``annotation schema''): 

\begin{compactdef}
  An \textbf{annotation schema} $\Lambda = \bigl(\mathcal{A},
  \SUBCLASS,$\linebreak $\PRECEDENCE, (\PROPER_a,$ $\VALUE_a: a \in
  \mathcal{A})\bigr)$ defines a set $\mathcal{A}$ of annotation types,
  a transitive, reflexive subclass relation $\SUBCLASS$, a
  transitive, irreflexive, antisymmetric precedence relation
  $\PRECEDENCE$, and two characteristic functions $\PROPER_a$ and
  $\VALUE_a$ on text nodes for each $a \in \mathcal{A}$.
\end{compactdef}

For each annotation type $a\in\mathcal{A}$, we distinguish \emph{proper
labels} and \emph{values}, with $\PROPER_a$ and $\VALUE_a$ as 
corresponding characteristic functions. 
Proper labels are text nodes, such as \emph{``Price:''}, describing
the field type; values, such as \emph{``more
  than \textsterling 500''}, contain possible values of the field. Hence
$\PROPER_{\type{price}}(\mbox{``Price:''})$ and
$\VALUE_{\type{price}}(\mbox{``more than \textsterling 500''})$ hold.

The $\SUBCLASS$ relation holds for subtypes, e.g.,
$\type{postcode}\SUBCLASS\type{location}$, and the $\PRECEDENCE$ relation
defines precedence on annotation types used to disambiguate competing
annotations.
For example, an unlabeled select box with options \emph{``Choose
  sorting order''}, \emph{``By price''}, and \emph{``By postcode''} may be annotated
with \type{order-by}, \type{price}, and 
\type{postcode}. If $\type{order-by} \PRECEDENCE
\type{price}$ and $\type{order-by} \PRECEDENCE \type{postcode}$, we
pick \type{order-by}.

\begin{compactdef}
  A \textbf{domain schema} $\Sigma = (\Lambda, \mathcal{T}, \PARTOF,
  \mathcal{C_T}, \mathcal{C}_\Lambda)$ defines an annotation schema $\Lambda$, a
  set of domain types $\mathcal{T}$ with (transitive, reflexive) part-of
  relation $\PARTOF$, and $\mathcal{C_T}$ and $\mathcal{C}_\Lambda$ that map
  domain types to classification and structural constraints.
\end{compactdef}

For example, $\mathcal{C}_\Lambda(\TYPE{price})$ requires an annotation
\type{price} and prohibit any annotation of a type with precedence over
\type{price} (such as \type{order-by} above).
The set of structural constraints $\mathcal{C_T}(\TYPE{price-range})$ for a
\TYPE{price-range} segment requires a \TYPE{min-price} and 
\TYPE{max-price} field or a \TYPE{price-range} field.
We write $S \models C$, if a constraint set $C$ is satisfied by a set $S$ of
annotation or domain types. The empty constraint set is always
satisfied. $\PARTOF$ plays an important role in the definition of the
constraints, as it prescribes the structure of the types in the domain. For
details on constraints and how to define them, see
Section~\ref{sec:domain-dependent}.

Formally, a \emph{form interpretation} $(F,\tau)$ is a form labeling
$F$ with a partial type-of relation $\tau$, relating nodes in $F$ with
the types $\mathcal{T}$ of $\Sigma$.
Given a node $n$ in $F$, we denote with $\annotationTypes(n) =
\{a \in \mathcal{A}_\Lambda: \exists l\in\FLAB(n)\mbox{ with } 
  \VALUE_a(l) \mbox{ or } \PROPER_a(l)\}$ the set of annotation types
associated with $n$ via its labels, and with $\childTypes(n) =
\bigcup_{(n,n') \in F} \tau(n')$ the set of domain types of the
children of $n$.

\begin{compactdef}
  A form interpretation $(F,\tau)$ is a \textbf{form model} for
  $\Sigma$, iff $\annotationTypes(n) \models \mathcal{C}_\Lambda(t)$ and
  $\childTypes(n) \models \mathcal{C}_{\mathcal{T}}(t)$ for all $n \in F$, $t \in \tau(n)$.
\end{compactdef}
\begin{compactdef}
  Given a domain schema $\Sigma$ and a form labeling $F$, the
  \textbf{form interpretation problem} asks for a form model $(F',
  \tau)$ for $\Sigma$ such that $F'$ differs from $F$ only in inner
  nodes.
\end{compactdef}
Thus, form representatives, fields, and labels are shared between $F$ and
$F'$, but the form segments may be rearranged to conform with the
structural constraints of $\Sigma$.

\begin{compactdef}
  Given a domain schema $\Sigma$ and a web page $P$, the (schema-based) \textbf{form
    understanding} \textbf{problem} asks
  for a form model $(F, \tau)$ of $P$ under $\Sigma$, such that $F$ is a
  solution of the complete form labeling problem for $P$.
\end{compactdef}

\paragraph{Form Integration and Filling.}
In web interface integration a query against a global domain schema is
translated and executed on concrete forms. The returned data is translated into
the domain schema and returned. We focus here on the first part of the
integration problem, the query translation or \emph{form integration} problem,
and more specifically on its optimistic variant:

Let $\Sigma$ be a domain schema. Then a query $Q$ on $\Sigma$ is a set of unary
constraints on $\mathcal{T}$, the domain types in $\Sigma$.  We consider three
types of constraints:
\begin{inparaenum}[\bfseries(1)]
\item \emph{Equality constraints} such as $\TYPE{postcode} = $ OX1;
\item \emph{range constraints} such as $\TYPE{price} \in [700,1250]$;
\item \emph{inclusion constraints} such as $\TYPE{colour} \in \{$red, green,
  black$\}$. 
\end{inparaenum}

\begin{compactdef}
  Given a domain schema $\Sigma$, a query $Q$ on $\Sigma$, and a concrete form
  $F$, the \textbf{form integration problem} is the problem to translate $Q$ into
  a (single) query $Q'$ on $F$ such that $Q'$ returns all results that match $Q$
  and can be retrieved by $F$ and that there is no other query on $F$ with that
  property that returns less results.
\end{compactdef}

Note, that we do not require that $Q'$ returns only results that match $Q$, but
that its result set is minimal among all queries on $F$ that return all matches
for $Q$ that can be retrieved by $F$. This is necessary, as there may be no
query on $F$ that is able to exactly express $Q$.

\subsection{\OPAL Architecture}
\label{sec:system-overview}

\OPAL is divided into three parts. Of those, two form \OPAL's form
understanding: a domain-independent part to address the form labeling problem
and a domain-dependent part for form interpretation according to a domain
schema. The remaining part is devoted to form integration and translates queries
against a domain schema into queries on concrete forms.

\OPAL produces form labelings in a novel multi-scope approach that incrementally
constructs a form labeling combining textual, structural, and visual features
(Figure~\ref{fig:system-overview}). Each of the three labeling scopes considers
features not considered in prior scopes:

\begin{asparaenum}[\bfseries(1)]
\item In \emph{field scope}, we consider only fields and their immediate
  neighbourhood and thus use only the DOM tree as input.

\item In \emph{segment scope}, we detect and arrange form segments into a
  segment tree to interleave the contained text nodes and fields. 

\item In \emph{layout scope}, we broaden the potential labels of a field by
  searching in the layout tree, i.e., the visual rendering of the page, and
  assign text nodes to fields, given a strong visual relation.
\end{asparaenum}

Each scope builds on the partial form labeling of the previous scope and uses the
information from the additional input to find labels for previously unlabeled
fields (or segments). Only the segment scope adds nodes, namely form
segments, whereas field and layout scope only add labels.

Finally, in the \textbf{(4)} \emph{form interpretation}
(Section~\ref{sec:domain-dependent}) we turn the form labeling produced by the
first three scopes into a form model consistent with a given domain schema.
\begin{inparaenum}[\sffamily (i)]
\item The labeling model is extended with (domain-specific) annotations on the
  textual content of proper labels and values.
\item Fields and segments of the form labeling are classified according to
  classification constraints in the domain schema.
\item Finally, violations of structural schema constraints are repaired in a
  top-down fashion.
\end{inparaenum}

Types and constraints of the domain schema are specified using \OPALlang, an
extension of Datalog that combines easy querying of the form labeling and of
annotations with a rich template system. Datalog rules already ease the reuse of
common types and their constraints, but the template extension enables the
formulation of generic templates for such types and constraints that are
instantiated for concrete types of a domain. An example of a type template is
the range template, that describes typical ways for specifying range values
in forms. In the real estate domain it is instantiated, e.g., for price and
various room ranges. In the used car domain, we also find ranges for engine
size, mileage, etc.  Thus, creating a domain schema is in many cases
as easy as importing common types and instantiating templates, see in
Section~\ref{sec:domain-dependent}. 

The form understanding part of \OPAL is complemented with a form integration
part, where we translate a given query on the domain schema into queries on
concrete forms. To do so, we construct an \OPAL form model as above and then use
that form model to map the constraints of the given query to fields on the
concrete form. The form is then filled according to the constraints. Where a
constraint can not be mapped precisely, we use standard similarity techniques to
find the closest, inclusive option (in case of numerical types) or just the
closest option (in case of categorial types), see Section~\ref{sec:filling}. 

\begin{algorithm}[tbp]\small
\SetKwFunction{FieldScopeLabelling}{FieldScopeLabelling}
\SetKwFunction{id}{id}
\SetKwFunction{MarkMultiFieldAncestors}{MarkMultiFieldAncestors}
\SetKw{Break}{break}
\ForEach{field $f$ in $P$}{
  $n \gets f$\;
  \While{$n$ has a parent}{
    \lIf{$n$ is already coloured}{
      colour $n$ \textcolor{red}{red};
      \Break
    }\;
    colour $n$ \textcolor{orange}{orange}\;
    $n \gets$ parent of $n$\;
  }
}
$F \gets $ empty form labeling \;
\ForEach{field $f$ in $P$}{
  $n \gets$ new leaf node in $F$\;
  $\FREP(n) \gets f$\;
  \If{$\exists l \in P$ with \texttt{for} attribute referencing $f$}{
    assign all text node descendants of $l$ as labels to $n$ \;
  }
  $p \gets$ parent of $f$\;
  \While{$p$ not coloured \textcolor{red}{red}}{
    $f \gets p$; $p \gets$ parent of $f$\;
  }
  assign all text node descendants of $f$ as labels to $n$ \;
}
\caption{$\texttt{FieldScopeLabelling}(\textit{DOM } P)$}
\label{algo:label-field-scope}
\end{algorithm}

\section{Form Labeling}
\label{sec:domain-independent}

In \OPAL, form labeling is split into three scopes. Each scope is focused on a
particular class of features (e.g., visual, structural, textual). The form
labeling scopes, \emph{field}, \emph{segment}, and \emph{layout} scope, use
\textbf{domain-independent} labeling techniques to associate form fields or
segments with textual labels, building a form labeling $F$. If a domain schema
is available, the form labeling is extended to a form model in the
domain-dependent analysis (Section~\ref{sec:domain-dependent}).

The form labeling $F$ is constructed bottom-up, applying each scope's technique
in sequence to yet unlabelled fields.  Whenever a field is labelled at a certain
scope level, further scopes do not consider this field again. This precedence
order reflects higher confidence in earlier scopes and addresses
competing label assignments.

\subsection{Field Scope}
\label{subsec:field-scope}

Based on the DOM tree of the input page, the \textbf{field scope} assigns text
nodes in a unique structural relation to individual fields as labels to these
fields (see Algorithm~\ref{algo:label-field-scope}). It relies on the
observation that, if a text node shares a sub-tree of the DOM with a single
field only, then that text node is most likely related to that field. This
simple observation produces a significant portion of form labels, as shown in
Section~\ref{sec:evaluation}, and is designed to produce nearly no false
positives, as also verified in Section~\ref{sec:evaluation}. 

Specifically, Algorithm~\ref{algo:label-field-scope}
\begin{inparaenum}[\bfseries(1)]
\item colours (lines 1--6) all nodes in $P$ that are ancestors of a field
  and do not have other form fields as descendants in
  orange. The least ancestor that violates that condition is
  coloured red.
\item It identifies (line 7--10) all form fields and initialises the form labeling $F$
  with one leaf node for each such field.
\item It considers (lines 11--12) explicit HTML \texttt{label} elements with \emph{direct reference} to a
  form field.
\item It labels (lines 13--16) each field $f$ with all text nodes $t$ whose
  \emph{least common ancestor} with $f$ has no other form field as
  descendant. This includes all text nodes $t$ in the content of $f$ such as its
  values (in case of \texttt{select}, \texttt{input}, or \texttt{textarea}
  elements), since the least common ancestor of $t$ and $f$ is $f$ itself.  We
  find these text nodes in linear time due to the tree colouring.
\end{inparaenum}



\begin{algorithm}[tbp]\small
\SetKwFunction{MarkMultiFieldAncestors}{MarkMultiFieldAncestors}
\SetKwFunction{LookupCounter}{LookupCounter}
\SetKwData{REPRESENTATIVE}{Representative}
$P' \gets P$\;

\While{$\exists n\in P':$ $n$ not a field $\land \bigl(\not\exists$
  field  $d: R_{\textsf{descendant}}(d,n) \in P' \bigr)$}{
  delete $n$ and all incident edges from $P'$\;
}

\While{$\exists n \in P': |\{c\in P': R_{\textsf{child}}(c,n) \in P'\}| = 1$}{
  delete $n$ from $P'$ and move its child to the parent of $n$\;
}

\ForEach{inner node $n$ in $P'$ in bottom-up order}{
  $C \gets \{f: R_{\textsf{child}}(f,n) \in P' \land f \text{ is a field}\}$\;
  $C \gets C \cup \{\REPRESENTATIVE(n'): R_{\textsf{child}}(n',n) \in P'\}$\} \;
  choose $r \in C$ arbitrarily \;
  \If{$\forall r' \in C: r$ style-equivalent to $r'$}{
    $\REPRESENTATIVE(n) \gets r$\;
    delete all non-field children of $n$ and move their children to $n$\;
  }
  \lElse{
    $\REPRESENTATIVE(n) \gets \bot$
  }\;
}

\Return $P'$\;
\caption{$\texttt{SegmentTree}(\textit{DOM } P)$} 
\label{algo:segment-tree}
\end{algorithm}

\subsection{Segment Scope}
\label{subsec:segment-scope}\label{sec:segment-scope}
At \textbf{segment scope}, the labeling analysis expands from
individual fields to form segments, i.e., groups of consecutive fields
with a common parent, forming the segment tree
(Algorithm~\ref{algo:segment-tree}).
These segments are then used to distribute text nodes to unlabeled
fields in that segment (Algorithm~\ref{algo:label-segment-scope}). 
At this scope, we approximate form segments through the DOM structure
and the style of the contained fields. This segmentation is later
adjusted to yield only form segments with a clear semantic. It is
worth noting, that on many forms only very few adjustments are
necessary, supporting the veracity of the approximation of semantic
segments through structure and style.

\paragraph{Segmentation tree.}
We observe that the DOM is often a fair, but noisy approximation of the semantic
form structure, as it reflects the way the form
author grouped fields into segments. Therefore, we start from the DOM structure
to find the form segments, but we eliminate all nodes that can be safely identified
as superfluous: nodes without field descendants, nodes with only one child, and
nodes $n$ where all fields in $n$ are style-equivalent to the fields in the
siblings of $n$. Two fields are \textbf{style-equivalent} if they carry
the same \texttt{class} attribute (indicating a formatting or semantic
class) or the same \texttt{type} attribute and CSS \texttt{style}
information.

If all field descendants of the parent of an inner node $n$ are
style-equivalent, then $n$ should be eliminated from the segment tree, as it
artificially breaks up the sequence of style-equivalent fields and is thus
\emph{equivalence breaking}.

\begin{compactdef}
  The \textbf{segment tree} $P'$ of a form page $P$ is the maximal DOM tree
  included in $P$ (i.e., obtained by collapsing nodes) such that the leaves of
  $P'$ are all fields and, for all its inner nodes $n$,
  \begin{compactenum}[\bfseries(1)]
  \item $\bigl|\{c \in P': R_{\textsf{child}}(c,n)\}\bigr| > 1$,
  \item $n$ is not equivalence breaking. 
  \end{compactenum}
\end{compactdef}

\begin{figure}[tbp]
  \centering 
  \includegraphics[width=1\linewidth]{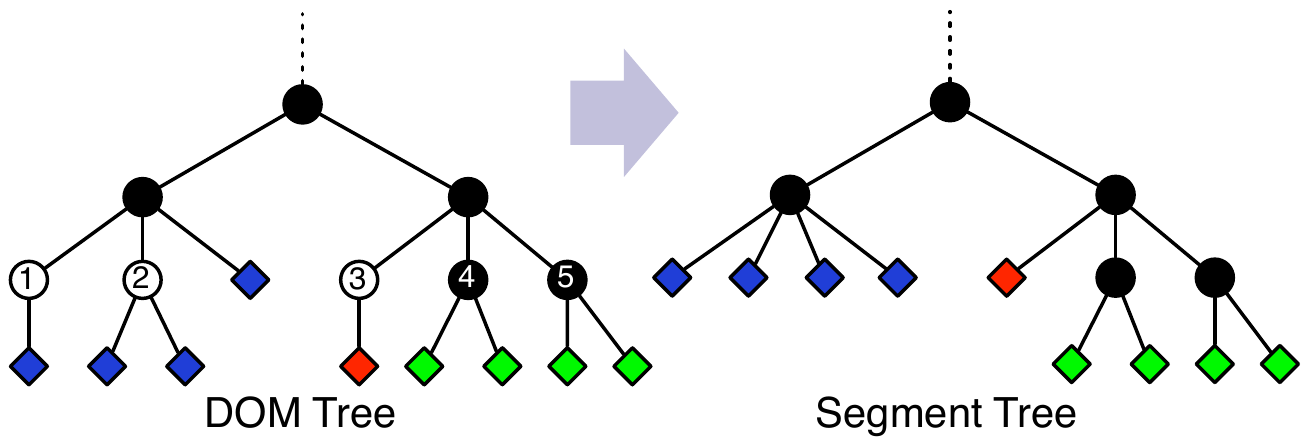}
  \caption{Example DOM and Segment Tree}
  \label{fig:seg-scope-example}
\end{figure}

As an example, consider the DOM tree on the left of
Figure~\ref{fig:seg-scope-example}, where diamonds represent fields and
style-equivalent fields carry the same colour. On the right hand side, we show
\OPAL's segment tree for that DOM. Nodes $1$ and $3$ from the original DOM are
eliminated as they have only one child, and node $2$ as it is equivalence
breaking. Nodes $4$ and $5$ are retained due to the red field.

\begin{theorem}
  The \emph{segment tree} $P'$ of a web page $P$ can be computed in $O(n\times
  d)$ with $n$ size and $d$ depth of $P$. 
\end{theorem}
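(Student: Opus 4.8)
The plan is to show that the segment tree can be computed by implementing the three filtering stages of Algorithm~\ref{algo:segment-tree} — removing nodes without field descendants, collapsing unary nodes, and eliminating equivalence-breaking nodes — each in time $O(n \times d)$, so that their composition stays within the same bound. First I would argue that the input DOM has $O(n)$ nodes and $O(n)$ edges, and that a single bottom-up (postorder) traversal lets us annotate every node with a boolean ``has a field descendant'' flag in $O(n)$ time; deleting the nodes whose flag is false, together with their incident edges, is then $O(n)$. This establishes that the first \textbf{while} loop of Algorithm~\ref{algo:segment-tree} runs in linear time even though it is written as a fixpoint loop.

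Next I would handle the unary-node collapsing. The subtlety here is that contracting a node with a single child can create a new unary node at its former parent, so naively re-scanning after each deletion could cost $O(n^2)$. I would instead observe that each contraction strictly decreases the number of nodes, and that after one bottom-up pass in which we splice out every node that currently has exactly one child (processing children before parents, so cascading chains collapse in the same pass), no unary nodes remain; hence this stage is $O(n)$ as well. Alternatively one can simply note that there are at most $n$ contractions and each can be charged to a distinct deleted node, with pointer updates in $O(1)$ amortized.

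The third and genuinely $d$-dependent stage is the elimination of equivalence-breaking inner nodes, realized by the \textbf{foreach} loop. For each inner node $n$ we must decide whether all field descendants of its parent are style-equivalent — equivalently, whether the representative set $C$ assembled from $n$'s field children and its children's representatives is style-homogeneous. Maintaining a representative (or the distinguished value $\bot$) at each node as we go bottom-up costs $O(1)$ work per node for the style-equivalence check against a fixed arbitrary member of $C$, so each node contributes $O(|C|)$, and $\sum_n |C_n| = O(n)$ over one level. The factor $d$ enters because collapsing an inner node $n$ moves its children up to $n$'s parent, and in the worst case a field may be moved past every ancestor on its root path, i.e. up to $d$ times; charging each such move to one of the at most $d$ ancestors of the field gives $O(n \times d)$ total movement work. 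I expect this amortized movement analysis to be the main obstacle: one has to be careful that the bottom-up order guarantees a node is only ever merged into an already-finalized ancestor, so that no node is reprocessed and the $d$ bound on per-field movements is tight. Combining the three stages, the overall running time is $O(n) + O(n) + O(n \times d) = O(n \times d)$, as claimed. $\qed$
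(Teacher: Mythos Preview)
Your complexity analysis is essentially the paper's: the amortized movement argument for the third stage --- each leaf can be reparented at most $d$ times as ancestors are collapsed, giving $O(n \times d)$ total work --- is exactly the reasoning in the paper, and your bounds for the first two stages are in fact slightly sharper (the paper conservatively states $O(n\times d)$ for the unary-node collapsing as well, where you correctly argue $O(n)$).

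What you omit is the correctness and maximality part. The theorem asserts that \emph{the segment tree} is computed, so a complete proof must verify that Algorithm~\ref{algo:segment-tree} actually produces a tree whose leaves are exactly the fields and whose inner nodes satisfy conditions~(1) and~(2) of the definition, and that this tree is maximal among those included in $P$ (in particular, that the order in which nodes are deleted does not affect the result). The paper devotes the first two paragraphs of its proof to this before turning to complexity; you take it for granted. This is a genuine, if routine, gap --- the bottom-up representative computation with $\bot$ as an absorbing ``not style-equivalent'' value does require a short argument that once a node receives $\bot$ neither it nor any ancestor is ever collapsed, which is what guarantees that exactly the equivalence-breaking nodes are removed.
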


\begin{proof}
  Algorithm~\ref{algo:segment-tree} computes the segment tree $P'$ for any DOM
  tree $P$. Its leafs are fields (as any non field leafs are eliminated in line
  2--3) and any inner node must have more than $1$ child (due to line 4--5), a
  field descendant (due to line 2--3), and not be equivalence breaking (due to
  lines 6--13). In lines 6--13, we compute a \fctn{Representative},
  bearing the style prevalent among the inner node's fields, for each
  inner node in a bottom-up fashion: If all field children (line 7) and the
  representatives of all inner children (line 8) are style-equivalent (line
  9--10), we choose an arbitrary representative and collapse all inner children
  of that node. Note, that it suffices to compare any of the representatives
  with the fields in $C$ as style-equivalence is transitive.  Otherwise, we
  assign $\bot$ as representative, which is style-equivalent neither
  to any node nor
  to itself. Thus it prevents this node (and its ancestors) from ever being
  collapsed.  By construction, these nodes respect (1) and (2) and this property
  is retained in all later steps, as their subtrees are never touched again. 

  $P'$ is maximal: Any tree $P''$ that includes $P'$ but is
  included in $P$ must contain at least one node from $P$ that has been deleted
  by one of the above conditions. Such a node, however, violates at least one of
  the conditions for a segment tree and thus $P''$ is not a segment tree. This
  holds because the order of the node deletions does not affect the nodes
  deleted. 

  Algorithm~\ref{algo:segment-tree} runs in $O(n\times d)$: Lines 2--3
  are in $O(n)$. Lines 4--5 and lines 6--13 are both in $O(n \times d)$ as they
  are dominated by the collapsing of the nodes. At most, we collapse
  $d-2$ inner nodes and move $O(n)$ leaves $d-2$ times.
\end{proof}

\begin{algorithm}[tbp]\small
\SetKwFunction{SegmentTree}{SegmentTree}
\SetKwFunction{FilterLabelledElements}{FilterLabelledElements}
\SetKwFunction{MergeConsecutiveTextNodes}{MergeConsecutiveTextNodes}
\SetKwFunction{OneToOneAlignmentLabeling}{OneToOneAlignmentLabeling}
\SetKwData{textNodeGroup}{textGrp}
\SetKwData{Fields}{Nodes}
\SetKwData{Labels}{Labels}
\SetKw{new}{new}
\SetKwFunction{add}{add}
\SetKwFunction{size}{size}

$S \gets \SegmentTree(P)$ \;
\ForEach{inner node $s$ in $S$ in bottom-up order}{
  create a new segment $n_s$ in $F$\;
  $\FREP(n_s) \gets s$\;
  create an edge $(n_s,c_s)$ in $F$ for every $\FREP(c_s)$ child of $s$\;
}
\ForEach{segment $n$ in $F$}{
  $\Fields, \Labels \gets \new\ List()$\;
  $\textNodeGroup \gets \emptyset$ \;
  \ForEach{$c: R_{\textsf{descendant}}(c,\FREP(n)) \in P$ in document order}{
    \If{$\exists f \in F: \FREP(f) = c \land \FLAB(f)=\emptyset$}{
      \lIf{$\textNodeGroup \neq \emptyset$}{
        $\Labels.\add(\textNodeGroup)$; $\textNodeGroup \gets \emptyset$\;
      }
      $\Fields.\add(c)$\;
      skip all descendants of $c$ in the iteration \;
    }
    \ElseIf{$c$ is a text node $\land \not\exists d \in F: c \in \FLAB(d)$ }{
      $\textNodeGroup \gets \textNodeGroup \cup \{c\}$\;
    }
  }
  \lIf{$\textNodeGroup \neq \emptyset$}{
    $\Labels.\add(\textNodeGroup)$; $\textNodeGroup \gets \emptyset$\;
  }
  \If{$\Labels.\size() = \Fields.\size() + 1$}{
    add $\Labels[0]$ to $\FLAB(n)$\;
    delete $\Labels[0]$ from $\Labels$\;
  }
  \If{$\Labels.\size() = \Fields.\size()$}{
    \lForEach{$i$}{add $\Labels[i]$ to $\FLAB(\Fields[i])$}\;
  }
}
\caption{$\texttt{SegmentLabeling}(\textit{DOM } P, \textit{Form Labeling }
  F)$}
\label{algo:label-segment-scope}
\end{algorithm}


\paragraph{Segment Labeling.}
We extend the existing form labeling $F$ of the field scope with form
segments according to the structure of the segment tree and distribute labels in
regular groups, see Algorithm~\ref{algo:label-segment-scope}.  First (lines
2--5), we create a form segment node $s$ in the form labeling for each inner
node $n_s$ in the segment tree and choose $n_s$ as representative for $s$ ($\FREP(s)
= n_s$). For each segment with regular interleaving of text nodes with
field or
segment nodes, we use those text nodes as labels for these nodes, preserving any
already assigned labels and fields (from field scope). In detail, we iterate
over all descendants $c$ of each segment in document order, skipping any nodes
that are descendants of another segment or field itself contained in $n$ (line
13). In the iteration, we collect all field or segment nodes in \textsf{Nodes},
and all sets of text nodes between field or segment nodes in \textsf{Labels},
except those already assigned in field scope (line 14), as
we assume that these are outliers in the regular structure of the
segment. We assign the $i$-th text node group to the $i$-th field, if
the two lists have the same size (possibly using the first group
as labels of the segment, line 17--19).

\begin{figure}[tbp]
  \centering 
  \includegraphics[width=\columnwidth]{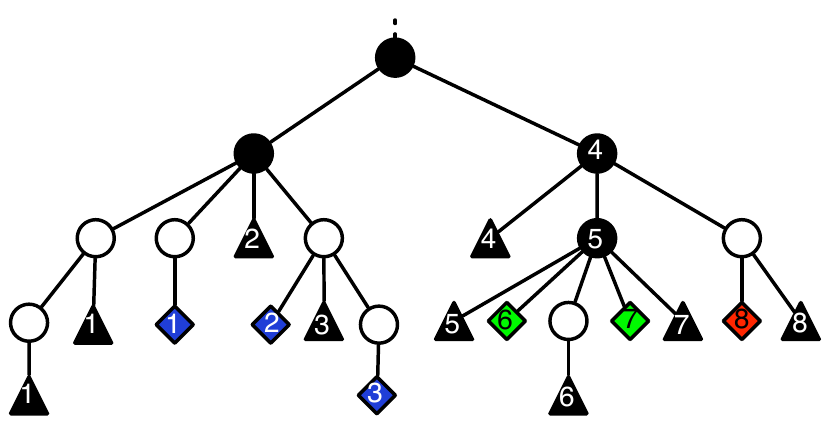}
  \caption{Example for Segment Scope Labeling}
  \label{fig:seg-scope-lab-cases}
\end{figure}

Figure~\ref{fig:seg-scope-lab-cases} illustrates the segment scope labeling with
triangles standing for text nodes, diamonds for fields, black circles for segments, and
white circles for DOM nodes not in the segment tree. The numbers indicate which text
nodes are assigned as labels to which segments or fields. E.g., for the left
hand segment, we observe a regular structure of $($text node$+$, field$)+$ and
thus we assign the $i$-th group of text nodes to the $i$-th field. For the right
hand segment ($4$), we find a subsegment ($5$) and field $8$ that is already
labeled with text node $8$ in the field scope. Thus $8$ is ignored and only
one text node remains directly in $4$, which becomes the segment label. In $5$,
we find one more text node group than fields and thus consider the first text
node group as a segment label. The remaining nodes have a regular structure
$($field, text node$+$$)+$ and get assigned accordingly.

\subsection{Layout Scope}
\label{sec:layout-scope}

At \textbf{layout scope}, we further refine the form labeling for each form
field not yet labelled in field or segment scope, by exploring the visible text
nodes in the west, north-west, or north quadrant, if they are not overshadowed
by any other field. To avoid false positives, we limit this search to the
boundaries of the enclosing form. First, \OPAL constructs a
layout tree from the CSS $\BOX$ labels of the DOM nodes:

\begin{compactdef}
  The \textbf{layout tree} of a given DOM $P$ is a tuple $(N_P, \CONTAINS,
  \fctn{w},\fctn{nw}, \fctn{n}, \fctn{ne}, \fctn{e}, \fctn{se}, \fctn{s},
  \fctn{sw}, \ALIGNED)$
  where $N_P$ is the set of DOM nodes from $P$, $\CONTAINS, \fctn{w},
  \fctn{nw}, \fctn{n}, \ldots$ the ``belongs to'' (containment), west, north-west,
  north, \ldots relations from RCR \cite{rcr-2006}, and $\ALIGNED(x,y)$ holds if
  $x$ and $y$ have the same height and are horizontally aligned. 
\end{compactdef}

We call $\fctn{w}, \fctn{nw}, \ldots$ the neighbour relations. The layout tree is at most
quadratic in size of a given DOM $P$ and can be computed in $O(|P|^2)$. 
For convenience, we write, e.g., $\fctn{w-nw-n}$ to denote the union of the
relations \fctn{w}, \fctn{nw}, and \fctn{n}.

In cultures with left-to-right reading direction, we observe a strong preference
for placing labels in the \fctn{w-nw-n} region from a field. However, forms
often have many fields interspersed with field labels and segment labels. Thus
we have to carefully consider overshadowing. Intuitively, for a field $f$, a
visible text node $t$ is overshadowed by another field $f'$ if $t$ is above $f'$
or also visible from, but closer to $f'$. In the particular case of aligned
fields, the former would prevent any labeling for these fields and thus we relax
the condition. 

\begin{compactdef}
  For a given text node $t$, a field $f'$ \textbf{overshadows} another
  field $f$ if
  \begin{compactenum}[\bfseries(1)]
  \item $f$ and $f'$ are unaligned, $\fctn{w-nw-n}(f',f)$, and
    $\fctn{w-nw-n-ne-e}(t,f')$ or 
  \item $f$ and $f'$ are aligned and 
    \begin{inparaenum}[\sffamily(i)]
    \item $\fctn{w}(t,f')$ or 
    \item $\fctn{nw-n}(t,f')$ and there is a text node $t'$ not overshadowed by
      another field with $\fctn{ne-e}(t',f')$ and $\fctn{w-nw-n}(t',
      f)$.
      %
    \end{inparaenum}
  \end{compactenum}
\end{compactdef}




To illustrate this overshadowing, consider the example in
Figure~\ref{fig:vis-overlap}. For field $F_1$, $T_2$ and $T_4$ are overshadowed
by $F_2$ and $T_3$ by $F_3$, only $T_1$ is not overshadowed, as there is no
other text node that is west, north-west, or north from $F_1$ and not overshadowed by
another field.

\begin{figure}[tbp]
  \centering 
  \includegraphics[width=.6\columnwidth]{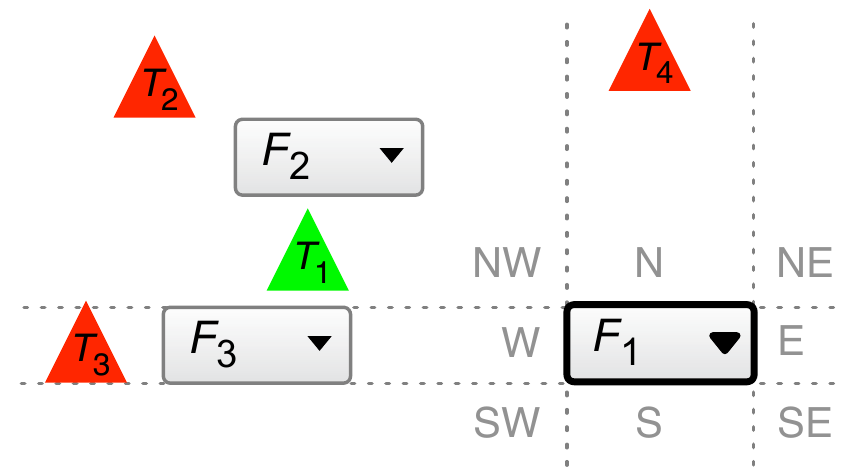}
  \caption{Layout Scope Labeling}
  \label{fig:vis-overlap}
\end{figure}

The layout scope labeling is then produced as follows: For each field $f$, we
collect all text nodes $t$ with $\fctn{w-nw-n}(t,f)$ and add them as labels to
$f$ if they are not overshadowed by another field and not contained in a segment
that is no ancestor of $f$. The latter prevents assignment of labels from
unrelated form segments.

\section{Form Interpretation}
\label{sec:domain-dependent}

\lstset{style=OPAL}
\lstMakeShortInline[style=OPAL]\#
\lstset{emph=[1]{radius,bedroom,room,price,bathroom,reception,buy_rent,buy,rent,geography,area_branch,location,button,link_button,submit,order_by,display_method,property_type,currency,mileage,engine_size,pagination,make,model,door,colour,transmission,used_new,road_tax_bracket,fuel_type,car_type,seat,emissions,seller_type,address,postcode,engine},
  emph=[2]{range_connector,min,max,submit_button}}

There is no straightforward relationship between form fields for domain
concepts, such as location or price, and their structure within a form. Even
seemingly domain-independent concepts, such as price, often exhibit domain
specific peculiarities, such as ``guide price'', ``current offers in excess'',
or payment periods in real estate. \OPAL's domain schemata
allow us to cover these specifics. We recall from Section~\ref{sec:approach}
that a form model $(F',\tau)$ for a schema $\Sigma$ is derived from a form
labeling $F$ by extending $F$ with types and restructuring its inner nodes to
fit the structural constraints of $\Sigma$. 

\OPAL performs form interpretation of a form labeling $F$ in two steps: 
\begin{inparaenum}[\bfseries(1)]
\item the \emph{classification} of nodes in $F$ according to the
  domain types $\mathcal{T}$ to obtain a (partial) typing $\tau_P$. This step
  relies on the annotation schema $\Lambda$ and its typing of 
  labels in $F$;  
\item the \emph{model repair} where the segmentation structure derived in
  the segmentation scope (Section~\ref{subsec:segment-scope}) is aligned with
  the structure constraints of $\Sigma$ to complete the typing. 
\end{inparaenum}

The effort for creating an \OPAL domain schema may, at the first glance, appear
considerable. However, not only do we provide \OPALlang (Section~\ref{sec:schema-design:-opall}) to ease the
specification of a domain schema, we also discuss in
Section~\ref{sec:domain:methodology-example} how all the artefacts needed by
\OPAL for a new domain can be nearly automatically derived from a standard
ontology of a domain (including concept labels) and a set of entity recognisers
(or annotators) for instances of the concepts. We illustrate this methodology
for \emph{domain instantiation} along the example of the used car domain.

\subsection{Schema Design: OPAL-TL}
\label{sec:schema-design:-opall}
\OPAL provides a \textbf{t}emplate \textbf{l}anguage, \OPALlang, for easily
specifying domain schemata reusing common concepts and their constraints
as well as concept templates. To implement a new domain, we only need to
provide 
\begin{inparaenum}[\bfseries(1)]
\item for each annoation type $a$ an annotator implementing
  $\PROPER_a$ and $\VALUE_a$ and
\item an \OPALlang specification of the domain types with their
  classification and structural constraints. The latter can be derived
  almost mechanically from the domain types as discussed in
  Section~\ref{sec:domain:methodology-example}.
\end{inparaenum}

\OPALlang extends Datalog with template capabilities and predefined predicates for
convenient querying of annotations and DOM nodes. An \OPALlang program is
executed against a form labeling $F$ and a DOM $P$. Relations from $F$ and $P$
are mapped in the obvious way to \OPALlang. We only use
\texttt{child} (\texttt{descendant}, resp.) for the child (descendant, resp.)
relation in $F$. We extend document and sibling order from $P$ to $F$:
$\texttt{follows}(X,Y)$ for $X, Y \in F$, if $R_{\textsf{following}}(\FREP(X),
\FREP(Y)) \in P$ and no other node in $F$ occurs between $X$ and $Y$ in document
order; $\texttt{adjacent}(X,Y)$, if $R_{\textsf{next-sibling}}(\FREP(X),
\FREP(Y))\in P$ or vice versa. Finally, we abbreviate $\TEXT^{l}(\FREP(X))$ and
$\TAG^t(\FREP(X))$ as $\texttt{"}l\texttt{"}(X)$ and $t(X)$.

\paragraph{Annotation types and their queries.}
Annotations (instances of annotation types) are characterised by an external
specification of the characteristic functions $\PROPER_a$ and $\VALUE_a$ for
each $a \in \mathcal{A}$. In the current version of \OPAL, these functions are
implemented with simple GATE (\url{gate.ac.uk}) gazetteers and transducers, that
are either provided by human domain experts or realised by access to external
annotators and knowledge bases such as DBPedia and Freebase. Together they
provide annotators for common domain types such as price, location, or
date. Additional entity recognisers or annotators can be added easily, as
described in Section~\ref{sec:domain:methodology-example}.

Annotations are used in annotation queries to select fields based on annotations
on their labels and the labels of their segments:
\begin{compactdef}
  For a form labeling $F$ on a DOM $P$ and an annotation schema
  $\Lambda$ with annotation types $\mathcal{A}$, an
  \textbf{{\OPALlang} annotation query} is an expression of the form
  $X@A\{d,p, e,m\}$ where $X$ is a first-order variable, $A \in
  \mathcal{A}$, and $d$, $p$, $e$, and $m$ are annotation
  \emph{modifiers}.  An annotation query $X@A\mu$ with $\mu \subseteq
  \{d,p,e,m\}$ holds for $X \in \sem{A\mu}$ with
\begin{align*}
  \sem{A\mu} & = 
  \{n \in \FIELDS: \MATCHINGLABELS_\mu(A,n) \neq \emptyset\} \setminus \BLOCKEDLABELS_\mu(A) 
\end{align*}
\vspace*{-1.5\baselineskip}
{\small
\begin{align*}
  \FIELDS & = \{n \in P: \exists \text{ leaf } f \in F: n \in \FREP(f)\} \\
  \MATCHINGLABELS_\mu(A,n) & = 
  \begin{cases}
    %
    \ALLOWEDNODES_\mu(n) \cap \displaystyle\bigcup_{A'\SUBCLASS^* A} \PROPER_{A'} & \text{if $p \in \mu$} \\
    \ALLOWEDNODES_\mu(n) \cap \displaystyle\bigcup_{A'\SUBCLASS^* A} (\PROPER_{A'} \cup \VALUE_{A'}) & \text{otherwise} \\
  \end{cases} \\
  \BLOCKEDLABELS_\mu(A) & = 
  \begin{cases}
    \{n: \exists A' \neq A: |\MATCHINGLABELS_\mu(A,n)| < |\MATCHINGLABELS_\mu(A',n)|\}
    & \text{if $m \in \mu$} \\
    \{n: \exists A' \PRECEDENCE A: |\MATCHINGLABELS_\mu(A,n)| < |\MATCHINGLABELS_\mu(A',n)|\}
    & \text{if $e \in \mu$} \\
    \emptyset & \text{otherwise} \\
  \end{cases} \\
  \ALLOWEDNODES_\mu(n) & = 
  \begin{cases}
    \FLAB(n) & \text{if $d
      \in \mu$}\\ 
    \FLAB(n) \cup \FLAB(\text{parent of $n$})  & \text{otherwise}\\
  \end{cases} \\
\end{align*}
}
\end{compactdef}

Intuitively, an annotation query $X@A$ returns all fields labeled with a label
that is annotated with $A$. If the modifier $d$ (direct) is \emph{not} present,
we also consider the (direct) segment parents, otherwise only \emph{direct}
labels are considered. If the modifier $p$ (\emph{proper}) is present, only $\PROPER_A$
is used, otherwise also $\VALUE_A$. If the modifier $e$ (\emph{exclusive}) is present, a
node that fullfils all other conditions is still not returned, if there are more
labels with annotations of a type that has precedence over $A$. If the modifier
$m$ (\emph{maximal}) is present, no other type, regardless of precedence, may
have more labels with annotations at the node. Since $m$ excludes strictly more
nodes than $e$, a query with both $m$ and $e$ returns the same nodes as that
query without $e$.

\begin{figure}[tbp]
  \centering 
  \includegraphics[width=.6\columnwidth]{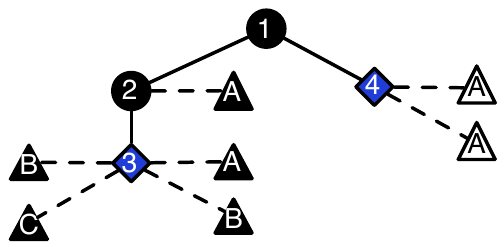}
  \caption{Example Form Labeling}
  \label{fig:form-labeling-ex}
\end{figure}

Consider the form labeling of Figure~\ref{fig:form-labeling-ex} under a
schema with $\type{B} \PRECEDENCE
\type{A}$. Labels are denoted with triangles, fields with diamonds, segments
with circles. Labels are further annotated with matching annotation types (here
always only one), with value labels  drawn as outlines only.  Then, $X@\type{A}\{\}$
matches $3,4$; $X@\type{A}\{e,d\}$ matches $4$, but not $3$ as $3$ has more
labels of $\type{B}$ than of $\type{A}$ and the
exclusive modifier $e$ is present; $X@\type{A}\{e,p\}$ matches $3$, but not
$4$ as the proper modifier $p$ prevents the value labels in white to be
considered. The latter matches $3$ despite the presence of $e$, as we consider
also the labels of the parent of $3$ (since the direct modifier $d$ is absent)
and thus there are two $\type{A}$ labels. 

\begin{figure}[tbp]
  \centering
	\begin{tabular}{cc} 
	  \subfloat[]{\label{fig:annotation-example-d}\includegraphics[width=.75\columnwidth]{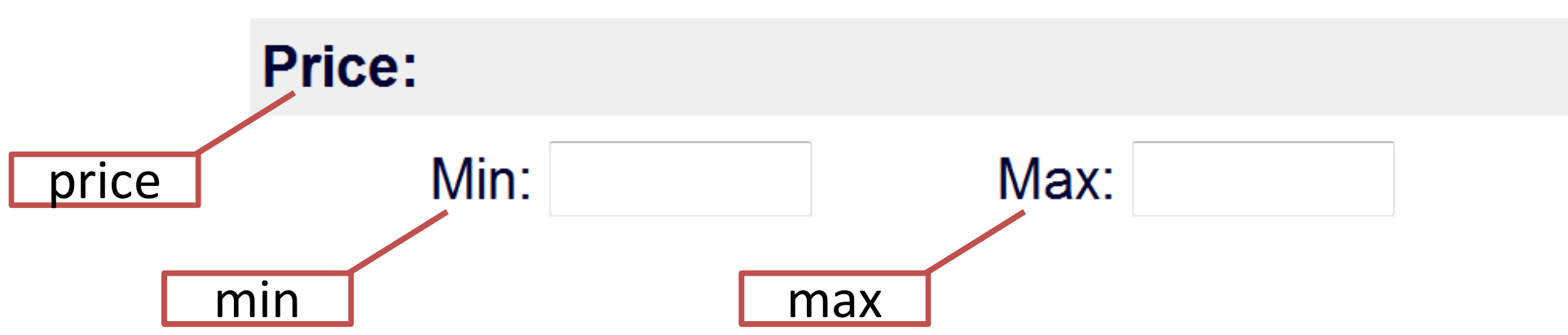}}\\
	  \subfloat[]{\label{fig:annotation-example-e}\includegraphics[width=.75\columnwidth]{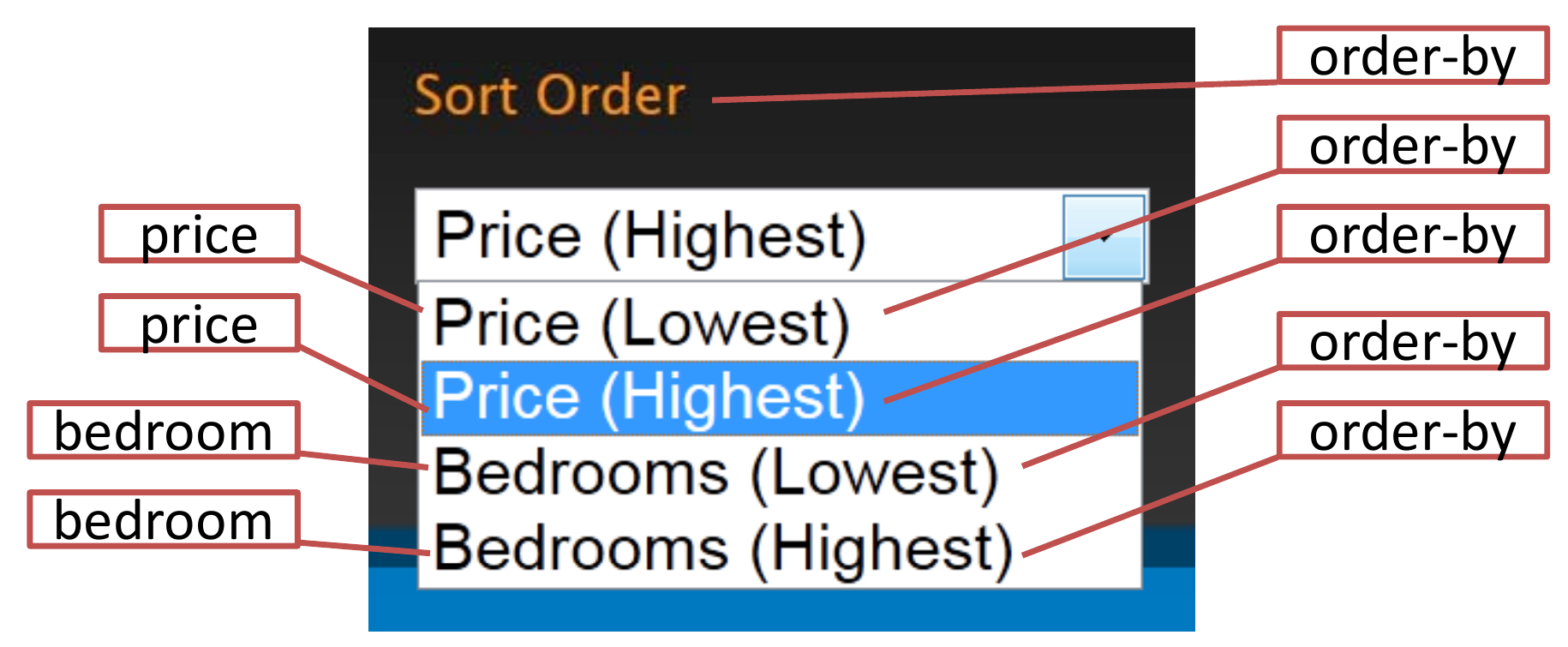}}
	\end{tabular}
  \caption{Label Annotation Examples}
  \label{fig:annotation-example-de}
\end{figure}
 
Figure~\ref{fig:annotation-example-de} shows a real-life example with the
annotations produced by a typical set of
annotators. In~\ref{fig:annotation-example-d}, there are two text inputs for min
and max price. However, the two labels ``min'' and ``max'' are the only directly
associated text boxes and do not carry any information that indicates that these
fields are about prices. This is available only when considering the segment
(and thus indirect) label ``Price:''. Thus, $X@\type{price}\{d\}$ returns the
emptyset, but $X@\type{price}\{\}$ returns the two
fields. In~\ref{fig:annotation-example-e}, the drop-down menu for result
ordering receives two price annotations, two bedroom annotations, and five
order-by annotations. With $\type{order-by} \PRECEDENCE \type{price}$,
$X@\type{price}\{e\}$ returns the emptyset, as the price annotations are
``blocked'' by the order-by annotations.

\paragraph{OPAL-TL templates.}
\OPALlang is a Datalog-based language for the definition of 
reusable templates of domain concepts. Examples of such templates are
basic classification rules deriving a domain type from a conjunction of annotation
types or min-max range templates where we look for multiple fields with related
annotations in a group and some clue that they represent a range. In general,
there are two types of such templates, one for classification constraints, one
for structural constraints. The former specify relationships between domain and
annotation types, the latter the abstract structure of domain concepts.

\begin{compactdef}
  An \textbf{\OPALlang template} is an expression of the form: #TEMPLATE $N$<$T_1, \ldots,
  T_k$> { $p_1$ :- $\mathit{expr}_1$. $\ldots$ }# where $N$ is the template name, $T_1,
  \ldots, T_k$ are template variables, $p_1$ is a template atom, 
  $\mathit{expr}_1$ a boolean formula over template atoms and annotation queries.
  A \emph{template atom} $p$#<#${\textbf{t}}$#>#$({\bf s})$ consists of a first-order
  predicate $p$, a sequence of terms ${\bf }=t_1, \ldots, t_n\}$ (where $t_i$ is
  either a constant or a template variable), and a sequence of terms ${\bf s}=s_1,
  \ldots, s_n$ where each $s_i$ is either a constant or a first-order
  variable. Template and first-order variables constitute two disjoint
  sets. Note that, if ${\bf t}$ is empty, then a template atom is a normal
  first-order atom. Moreover, when all terms ${\bf t}$ are constants, we say
  that the template atom is \emph{template-ground}.
\end{compactdef}
Multiple rules with the same head can be used to express disjunction of their
bodies. For convenience, we use $\lor$ and $\neg$ over conjunctions, which are
translated to Datalog$^\neg$ as usual.

As an example, the following template defines a family of constraints
that associate the concept (domain type) $C$ to a node $N$ whenever $N$ is
labeled by an exclusive direct and proper annotation of type $A$.
\begin{lstlisting}
TEMPLATE basic_concept<C,A>{ concept<C>(N):-N@A{d,e,p} }
\end{lstlisting}

An \emph{instantiation} of a template $\mathit{tpl}$ produces a set of rules
where the template variables $C_1, \ldots, C_k$ are assigned to
values $v_1^{i}, \ldots, v_k^{i}$ defined by a \emph{template instantiation}
expression of the form:
\begin{lstlisting}
INSTANTIATE $\mathit{tpl}$<$T_1,\ldots,T_k$> using {<$v_1^{1},\ldots,v_k^{1}$> $\ldots$ <$v_1^{n},\ldots,v_k^{n}$>}
\end{lstlisting}

For example, the following expression instantiates
#basic_concept# replacing $C$ with type #radius# and $A$ with
annotation type \type{radius}
\begin{lstlisting}
INSTANTIATE basic_concept<C,A> using {<radius, $\type{radius}$>}
\end{lstlisting}
and produces the following instantiated rule:
\begin{lstlisting}
concept<radius>(N):-N@$\type{radius}${d,e,p}
\end{lstlisting}


The full syntax of \OPALlang is given in Figure~\ref{fig:opal:syntax} (with
\synt{string}, \synt{id}, and \synt{var} as in Datalog and \synt{tvar},
\synt{type-id}, \synt{annot-id}, \synt{tag} template variables, domain types,
annotation types, and HTML tags, respectively). 

\setlength{\grammarindent}{5.5em}
\setlength{\grammarparsep}{.25\baselineskip}
\renewcommand{\litleft}{\textquoteleft\color{blue}\ttfamily}
\renewcommand{\litright}{\normalfont\normalcolor\textquoteright}
\begin{figure}[tbp]
  \centering
\begin{grammar}\small

  <program> ::= (<template> | <inst> | <trule> )+

  <template> ::= `TEMPLATE' <id> `<' <tvar>+ `>' `{' <trule>+ `}'

  <inst> ::= `INSTANTIATE' <id> `<' <tvar>+ `>' \\
             \hspace*{6em} `using' `{' (`<' <const>+ `>')+ `}'

  <trule> ::= <tatom> \mathlit{\leftarrow} <tbody> | <inst>

  <tbody> ::= <texpr> (`,' <texpr>)*

  <texpr> ::= <atom> | <annot> | <tatom> | <neg> | <disj>

  <annot> ::= <var>`@' `{' (`d' | `e' | `p' | `m')* `}'

  <tatom> ::= <id> `<' (<tvar> | <const>)+ `>' `(' <par>* `)' 
             \alt `<' <tvar> `>' `(' <par>* `)'
  
  <par>   ::= <var> | <tvar> | <const>

  <const> ::= <type-id> | <annot-id> | <tag> | <string> | <id>

  <neg>   ::= \mathlit{\neg} `(' <tbody> `)'

  <disj>  ::= `(' <tbody> \mathlit{\lor} <tbody> `)'
  
\end{grammar}
  
  \caption{\OPALlang syntax}
  \label{fig:opal:syntax}
\end{figure}



The semantics of \OPALlang is given by rewriting any set of templates $\Sigma_T$
into Datalog$^{\neg}$ programs, using assignments of template variables to
constants specified by the instantiation rules, and by considering every
template-ground predicate name as a new first-order predicate. Due to
possible occurrences of #INSTANTIATE# within templates, the instantiation must be repeated
until there are no applicable #INSTANTIATE# rules. To ensure termination of the
instantiation procedure, we do not allow recursive template
instantiations. Properties such as safety can be easily extended from
Datalog$^\neg$ to \OPALlang:

\begin{compactdef}
  A \OPALlang template is \textbf{safe}, if every template variable that occurs in
  the body also occurs in the head of the template and every rule is safe, i.e.,
  all first-order variables that occur in the head or in a negative atom in the
  body, also occur in a positive atom in the body. 
\end{compactdef}

\begin{proposition}
  Let $\Sigma_T$ be a set of safe \OPALlang templates, and let $\mathcal{S}$ be
  an assignment specified by \OPALlang instantiation rules, then any
  instantiation $\tau(\Sigma_T,\mathcal{S})$ is a safe Datalog$^\neg$ program.
  \label{prop:encoding}
\end{proposition}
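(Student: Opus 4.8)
The plan is to prove the claim by induction following the terminating instantiation procedure, maintaining safety as an invariant of every rewriting step. First I would make the termination order explicit: since recursive template instantiations are forbidden, the relation ``template $N$ invokes template $N'$'' (some \texttt{INSTANTIATE} of $N'$ occurs in a rule of $N$) is acyclic, so its transitive closure is a well-founded strict order $\succ$ on the templates of $\Sigma_T$; the rewriting $\tau(\Sigma_T,\mathcal{S})$ therefore terminates after finitely many rounds, and I would induct on $\succ$ (equivalently, on the number of \texttt{INSTANTIATE} rules still to be eliminated). The whole argument then reduces to a single-step lemma about one application of an instantiation.

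The single-step lemma reads: if $r$ is a safe \OPALlang template rule inside a template with declared template variables $T_1,\dots,T_k$, and $\theta$ maps $T_1,\dots,T_k$ to the constants $v_1^{i},\dots,v_k^{i}$ prescribed by some tuple in $\mathcal{S}$, then $r\theta$ — after renaming each template-ground predicate name $p\langle\mathbf{t}\rangle$ to a fresh first-order predicate and expanding $\neg$ and $\lor$ into plain Datalog$^\neg$ in the usual way — yields only safe Datalog$^\neg$ rules. I would establish this from three observations. (i) By the definition of a safe template, every template variable occurring in $r$ already occurs in the head of $r$, hence lies among $T_1,\dots,T_k$; so $\theta$ removes \emph{all} template variables, and $r\theta$ contains only first-order variables and constants — this is exactly the place where the safety hypothesis is indispensable, since otherwise a stray template variable could survive and the construction would not even produce a well-formed Datalog rule. (ii) $\theta$ substitutes only template variables, which form a set disjoint from the first-order variables; hence for every (template) atom, the set of first-order variables it mentions — and in particular the partition of first-order variables into those in the head, those in positive body literals, and those in negative body literals — is the same in $r$ and in $r\theta$, and renaming template-ground predicate names or splitting a $\lor$ into several rules likewise leaves argument sequences untouched. (iii) An annotation query $X@A\mu$ counts, for safety, as a positive unary atom binding $X$, since its denotation $\sem{A\mu}$ is a fixed unary relation on \FIELDS. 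Combining (i)--(iii), the Datalog safety condition for $r$ — every first-order variable appearing in the head or in a negative body literal also appears in a positive body literal — transfers verbatim to each disjunct of $r\theta$, which is what we need. Predicate-variable and template-group features are absorbed in the same way, as instantiating a predicate variable still only replaces a constant predicate symbol and does not alter any argument list.

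With the single-step lemma in hand, the induction on $\succ$ is routine. For the base case, a template all of whose rules contain no nested \texttt{INSTANTIATE} generates exactly the rules $r\theta$ of the lemma, all safe. For the inductive step, a nested \texttt{INSTANTIATE} names a template strictly $\succ$-below the current one, so by the induction hypothesis the sub-program it produces is safe; the rules produced directly are safe by the lemma; and a finite union of safe Datalog$^\neg$ programs is again safe. Top-level plain rules of $\Sigma_T$, if present, are handled identically. The step that needs the most care is the single-step lemma, and within it two easily-overlooked points: that the hypothesis ``every template variable in the body occurs in the head'' is precisely what guarantees no template variable survives instantiation, and that ``every rule is safe'' in the definition of a safe template must be read as a condition on each $\lor$-disjunct, not merely on the combined body, so that disjunction expansion cannot silently break per-rule safety. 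Once those readings are pinned down, the remainder is a direct syntactic transfer.
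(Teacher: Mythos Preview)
The paper does not actually prove this proposition: it is introduced by the remark that ``properties such as safety can be easily extended from Datalog$^\neg$ to \OPALlang'' and then simply stated, with the text moving on immediately to stratification. So there is no proof in the paper to compare against; you are supplying what the authors leave implicit.

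Your argument is correct and is the natural one. The decomposition into (a) termination of the rewriting via acyclicity of template invocation and (b) a single-step preservation lemma is exactly how one makes the claim precise, and your observations (i)--(iii) capture the relevant invariants: instantiation eliminates all template variables because of the first clause of the safety definition, it leaves the first-order variable occurrences untouched because template and first-order variables are disjoint, and annotation queries behave as positive unary atoms. One small wording point: you write that a template variable in $r$ ``occurs in the head of $r$, hence lies among $T_1,\dots,T_k$''. The paper's definition says it occurs ``in the head of the template'', which is most naturally the declaration $\texttt{TEMPLATE}\ N\langle T_1,\dots,T_k\rangle$ rather than the head atom of an individual rule; your conclusion is unaffected, but the intermediate step as phrased is not quite what the definition gives you. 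Your closing remark that per-rule safety must be read per $\lor$-disjunct is a genuine subtlety that the paper's one-line definition leaves to the reader, and it is good that you flag it.
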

 
In contrast to safety, stratification depends also on the instantiation and is
therefore defined over the expanded program as
usual.

A natural question is now the complexity of computing the form model using
\OPALlang. This is related to the complexity of \emph{fact inference} in
\OPALlang.

\begin{proposition}
  Fact inference in \OPALlang is \textsc{PTIME}-complete in data complexity
  (when $\Sigma_T$ and $\mathcal{S}$ are fixed) and \textsc{EXPTIME}-complete in
  combined complexity.
\end{proposition}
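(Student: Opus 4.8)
The plan is to reduce, in both directions, to the classical complexity bounds for (stratified) Datalog$^\neg$, leaning on Proposition~\ref{prop:encoding} and on the fact that the instantiation procedure terminates because recursive template instantiation is forbidden. Recall that fact inference amounts to evaluating the program and testing membership of the queried fact in the resulting model, so it suffices to analyse the cost of evaluating an \OPALlang program against an input consisting of a DOM $P$, a form labeling $F$, and the annotation relations $\PROPER_a,\VALUE_a$ (which we treat as EDB, given or at most \textsc{PTIME}-computable from $P$), and then to exhibit matching hardness.

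\textbf{Upper bounds.} First I would bound the size of the expansion $\tau(\Sigma_T,\mathcal{S})$. Since recursive template instantiation is disallowed, the template-dependency graph is a DAG, so the nesting depth of \texttt{INSTANTIATE} is at most $|\Sigma_T|$, and each \texttt{INSTANTIATE} statement spawns at most $|\mathcal{S}|$ copies of a template body; hence $\tau(\Sigma_T,\mathcal{S})$ contains at most $|\mathcal{S}|^{|\Sigma_T|}$ rules, i.e., it is of size at most singly exponential in the input and computable within that bound. Crucially, instantiation substitutes template variables by constants and never introduces new first-order variables, so the maximum number $v$ of first-order variables per rule is unchanged and stays bounded by the input size. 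By Proposition~\ref{prop:encoding} the expansion is a safe Datalog$^\neg$ program, and by the remark following it the expansion is stratified; grounding it over the active domain of size $n$ yields at most $|\mathcal{S}|^{|\Sigma_T|}\cdot n^{v}$ ground rules — again singly exponential in the input — and evaluating a ground stratified Datalog$^\neg$ program stratum by stratum takes time polynomial in the size of the ground program. Overall this is singly exponential, giving \textsc{EXPTIME} membership in combined complexity. When $\Sigma_T$ and $\mathcal{S}$ are fixed, the expansion is a \emph{fixed} stratified Datalog$^\neg$ program and $v$ is a constant, so grounding is polynomial in $n$ and evaluation is in \textsc{PTIME}, which gives the data-complexity upper bound.

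\textbf{Lower bounds.} For both hardness claims I would invoke the corresponding results for Datalog: evaluation of a suitable \emph{fixed} Datalog program is \textsc{PTIME}-hard in the size of the input database, and evaluation of Datalog is \textsc{EXPTIME}-hard in combined complexity. An \OPALlang template whose body contains no \texttt{INSTANTIATE} and no template variables is just an ordinary safe (and stratifiable) Datalog$^\neg$ rule, so \OPALlang conservatively extends Datalog$^\neg$; it then remains only to observe that the EDB relations available on a DOM/form labeling ($R_{\textsf{child}}$, $R_{\textsf{next-sibl}}$, the tag and text relations, and the XPath relations derived from them) are expressive enough to encode an arbitrary finite input structure with only a polynomial blow-up — e.g., representing each tuple of an input relation by a small labelled gadget of DOM nodes whose tag identifies the relation — so the standard reductions carry over. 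Combining the matching bounds yields \textsc{PTIME}-completeness in data complexity and \textsc{EXPTIME}-completeness in combined complexity.

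\textbf{Main obstacle.} The only genuinely delicate point is the combined-complexity upper bound: one must verify that non-recursive template instantiation blows the program up by only a \emph{single} exponential (bounded nesting depth times bounded branching) and, equally importantly, that it does not increase the number of first-order variables per rule; otherwise the subsequent grounding step could push the bound up to \textsc{2EXPTIME}. Once this accounting is in place, the rest is a routine appeal to the textbook complexity of stratified Datalog$^\neg$.
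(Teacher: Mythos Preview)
Your proposal is correct and follows the same strategy as the paper: reduce via Proposition~\ref{prop:encoding} to stratified Datalog$^\neg$ and inherit its complexity bounds. The accounting differs only in emphasis --- the paper bounds the \emph{signature} blow-up (at most $|\mathcal{R}_T|\cdot|\Gamma_T|^{k}$ template-ground predicate symbols, hence $\mathcal{O}(2^{k}\cdot 2^{w})$ derivable atoms), whereas you bound the number of \emph{rules} via the depth of the template-dependency DAG; both yield a single exponential. Your version is more explicit than the paper's on two points it leaves implicit: that instantiation does not increase the number of first-order variables per rule (needed so that grounding stays single-exponential), and that the lower bounds require encoding arbitrary input structures into the available EDB; the paper sidesteps the latter by treating the input $D$ as an abstract set of atoms and inheriting Datalog$^\neg$ hardness directly.
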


\begin{proof}
  Consider a set of template atoms $D$, a set of \OPALlang templates $\Sigma_T$
  over a set of template predicates $\mathcal{R}_T$ of at most arity $k$, and an
  assignment $\mathcal{S}$ of template variables to constants in a set
  $\Gamma_T$ specified by \OPALlang instantiation rules. The \emph{fact
    inference} problem for $D$, $\Sigma_T$, and $\mathcal{S}$ is to decide
  whether $D \cup \langle \Sigma_T, \mathcal{S} \rangle \models \underline{a}$,
  where $\underline{a}$ is a template atom. According to
  Proposition~\ref{prop:encoding}, the problem can be reduced to fact inference
  in Datalog$^{\neg}$, i.e., deciding whether $D \cup \Sigma_D \models
  \underline{a}$ where $\Sigma_D = \tau(\Sigma_T, \mathcal{S})$ is the rewritten
  program. Clearly, the data complexity is \textsc{PTIME}-complete as for
  Datalog$^{\neg}$. Regarding the combined complexity, recall that fact
  inference for a Datalog$^{\neg}$ program $\Sigma_D$ and a set of atoms $D$ is
  \textsc{EXPTIME}-complete since the maximum number of atoms that can be
  inferred is $\mid \mathcal{R}_D \mid \cdot (dom(D))^{w}$ where $\mathcal{R}_D$
  is the set of predicates of $\Sigma_D$, $dom(D)$ is the domain of $D$ and $w$
  is the maximum arity of predicates in $\mathcal{R}_D$. The rewriting
  $\tau(\Sigma_T, \mathcal{S})$ can generate at most $\mid \mathcal{R}_T \mid
  \cdot (\Gamma_T)^{k}$ template-ground atoms that contribute to the signature
  of $\Sigma_D$. Therefore, the number of atoms that can be generated is
  $\mathcal{O}(2^{k} \cdot 2^{w})$ that is still exponential. The claim follows.
\end{proof}

\subsection{Classification}
\label{subsec:classification-constraints}

Classification is based on the classification constraints of the domain
schema. In $\OPAL$ these constraints are specified using \OPALlang to enable
reuse of domain concepts and templates. For instance, in the real estate and used car
domains, we identify four templates that suffice to describe nearly all
classification constraints. These templates effectively capture very common
semantic entities in forms and are parametrized using domain knowledge. The
building blocks are a domain type (or concept) $C$ and an annotation type $A$
that is used to define a classification constraint for $C$. None of these
templates uses more than one annotation type as template parameter, though many
query additional (but fixed) annotation types in their bodies.

\begin{figure}
\begin{lstlisting}[numbers=left]
TEMPLATE concept_by_proper<C,A> {concept<C>(N):-N@A{d,e,p}}

TEMPLATE concept_by_segment<C,A>{concept<C>(N):-N@A{e,p}}

TEMPLATE concept_by_value<C,A>  {concept<C>(N):-N@A{m}, 
       !(A$_1$ $\neq$ A, N@A$_1${d,e,p} or N@A$_1${e,p}) }

TEMPLATE concept_minmax<C,C$_M$,A> {
 concept<C$_M$>(N$_1$):-child(N$_1$,G),child(N$_2$,G),adjacent(N$_1$,N$_2$),
   N$_1$@A{e,d},(concept<C>(N$_2$) or N$_2$@A{e,d})
 concept<C$_M$>(N$_2$):-child(N$_1$,G),child(N$_2$,G),follows(N$_2$,N$_1$),
   concept<C>(N$_1$),N$_2$@range_connector{e,d},!(A$_1$<<$A$,N$_2$@A$_1${d})
 concept<C$_M$>(N$_1$):-child(N$_1$,G),child(N$_2$,G),adjacent(N$_1$,N$_2$),
   N$_1$@A{e,p},N$_2$@A{e,p},$\bigl($(N$_1$@min{e,p},N$_2$@max{e,p}) 
     or (N$_1$@max{e,p},N$_2$@min{e,p})$\bigr)$
\end{lstlisting}
\caption{{\OPALlang} classification templates}
\label{fig:classification-constraints}
\label{tab:class-c}
\end{figure}

Figure~\ref{tab:class-c} shows the classification templates 
for real-estate and used car:
\begin{inparaenum}[\bfseries(1)]
\item \emph{Concept by proper label.} The first template captures direct classification of
  a node $N$ with type $C$, if $N$ matches #X@A{d,e,p}#, i.e., has more proper
  labels of type $A$ than of any other type $A'$ with $A' \PRECEDENCE A$. This
  template is used by far most frequently, primarily for concepts with unambiguous
  proper labels. 
\item \emph{Concept by segment label.} The second template relaxes the requirement by
  considering also indirect labels (i.e., labels of the parent segment). In the
  real estate and used car domains, this template is instantiated primarily for
  control fields such as #order_by# or #display_method# (grid, list, map)
  where the possible values of the field are often misleading (e.g., an
  #order_by# field may contain \emph{``price''}, \emph{``location''}, etc. as
  values).
\item \emph{Concept by value label.} The third template also considers value
  labels, but only if neither the first nor the second template can match. In
  that case, we infer that a field has type $C$, if the majority of its direct
  or indirect, value or proper labels are annotated with $A$.
\item \emph{Min-max concept.} Web forms often show pairs of fields representing
  min-max values for a feature (e.g., the number of bedrooms of a property). We
  specify this template with three simple rules (line 5--12), that describe three
  configurations of segments with fields associated with value labels only
  (proper labels are captured by the first two templates). It is the only
  template with two concept template parameters, $C$ and $C_M$ where $C_M
  \SUBCLASS C$ is the ``minmax'' variant of $C$.  The first locates, adjacent
  pairs of such nodes or a single such node and one that is already classified
  as $C$. The second rule locates nodes where the second follows directly the
  first (already classified with $C$), has a #range_connector# (e.g., ``from''
  or ``to''), and is not annotated with an annotation type with precedence over
  $A$.  The last rule also locates adjacent pairs of such nodes and classifies
  them with $C_M$ if they carry a combination of #min# and #max# annotations.
\end{inparaenum}

In addition to these templates, there is also a small number of specific
rules. In the real estate domain, e.g., we use the following rule to describe
forms that use links (\texttt{a} elements) for submission (rather than submit
buttons). Identifying such a link (without probing and analysis of Javascript
event handlers) is performed based on an annotation type for typical content,
\texttt{title} (i.e., tooltip), or \texttt{alt} attribute of contained
images. This is mostly, but not entirely domain independent (e.g., in real
estate a ``rent'' link).

\begin{lstlisting}
concept<link_button>(N$_1$):-form(F),descendant(N$_1$,F),link(N$_1$),
  N$_1$@link_button{d},!$\bigr($descendant(N$_2$,F),
    (concept<button>(N$_2$) or follows(N$_1$,N$_2$))$\bigr)$ 
\end{lstlisting}

\subsection{Model Repair}
\label{para:completion-constraints}

With fields and segments classified, \OPAL verifies and repairs the structure of
the form according to structural constraints on the segments, such that it fits
to the domain schema.
As for classification constraints, we use \OPALlang to specify the structural
constraints. The actual verification and repair is also implemented in
\OPALlang, but since it is not domain independent, it is not exposed to the user
for modification. Here, we first introduce typical structural constraints and
their templates and then outline the model repair algorithm, but omit the
\OPALlang rules.

\begin{figure}
\begin{lstlisting}[numbers=left]
TEMPLATE segment<C>{
  segment<C>(G):-outlier<C>(G),child(N$_1$,G),!$\bigl($child(N$_2$,G),
    !(C$_1$ $\PARTOF$ C, concept<C$_1$>(N$_2$) or segment<C$_1$>(N$_2$))$\bigr)$ }

TEMPLATE segment_range<C,C$_M$> {
  segment<C>(G):-outlier<C>(G),concept<C$_M$>(N$_1$),
    concept<C$_M$>(N$_2$),N$_1$ != N$_2$,child(N$_1$,G),child(N$_2$,G) }
 
TEMPLATE segment_with_unique<C,U> { 
  segment<C>(G):-outlier<C>(G),child(N$_1$,G),concept<U>(N$_1$,G),
    !$\bigl($C$_1$ $\PARTOF$ C, child(N$_2$,G), N$_1$ != N$_2$,
                      !(concept<C$_1$>(N$_2$)$\lor$segment<C$_1$>(N$_2$))$\bigr)$.}

TEMPLATE outlier<C>{
  outlier<C>(G):-child(G,P),child(G$'$,P),!(segment<C>(G$'$)) }
\end{lstlisting}

\caption{{\OPALlang} structural constraints}
\label{tab:compl-c}
\end{figure}

\begin{figure*}[tbp]
  \centering
  \includegraphics[width=.7\textwidth]{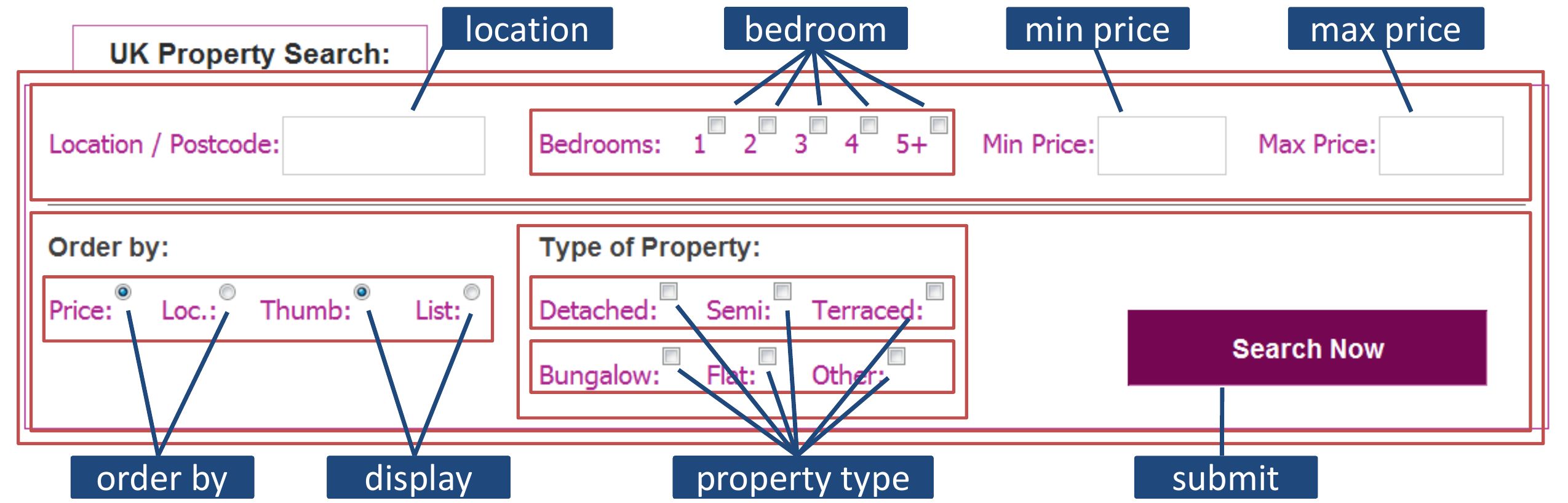} 
  \caption{{\em Farlowestates} before model repair}
  \label{fig:farlowestates-segment-origin}
\end{figure*}

\paragraph{Structural constraints.}
The structural constraints and templates in the real estate and used car
domains are shown in Figure~\ref{tab:compl-c} (omitting only the instantiation as
in the classification case). All segment templates require that there is an
outlier among the siblings of the segment: #outlier<C>(G)# holds if at least one
of $G$'s siblings is not a $C$ segment. 
\begin{inparaenum}[\bfseries(1)]
\item \emph{Basic segment.} A segment is a $C$ segment, if its children are only
  other segments or concepts typed with $C$. This is the dominant segmentation rules,
  used, e.g., for #room#, #price#, or #property_type# in the real estate
  domain. 
\item \emph{Minmax segment.} A segment is a $C$ segment, if it has at least two
  field children typed with $C_M$ where $C_M \SUBCLASS C$ is the minmax type for
  $C$. This is used, e.g., for #price# and #bedroom# range segments.
\item \emph{Segment with mandatory unique.} A segment is a $C$ segment, if its
  children are only segments or concepts typed with $C$ except for one
  (mandatory) field child typed with $U$ where $U \not\SUBCLASS C$. This is
  used, e.g., for #geography# segments where only one #radius# may occur. 
\end{inparaenum}
\paragraph{Repairing form interpretations.}
The classification yields a form interpretation $F$, that is, however,
not necessarily a model under $\Sigma$, and may contain violations of
structural constraints. We adapt the types of fields and segments and
the segment hierarchy of $F$ with the rewriting rules described below
to construct a form model compliant with $\Sigma$. \OPAL performs the
rewriting in a stratified manner to guarantee termination and
introduces at most $n$ new segments where $n$ is the number of fields
in the form.
\begin{asparaenum}[\bfseries(1)]
\item \emph{Under Segmentation:} If there is a segment $n$ with type $t$ such
  that $\mathcal{C}_\mathcal{T}(t)$ requires additional child segments of type
  $t_1, \ldots, t_k \not\in\childTypes(n)$, we try to partition the children of
  $n$ into $k+1$ partitions $P_1, \ldots, P_k, P_n$ such that $P_i \models
  \mathcal{C}_{\mathcal{T}}(t_i)$ and $P_n \cup \{t_1, \ldots, t_k\} \models
  \mathcal{C}_\mathcal{T}(t)$. For each $P_i$ we add a new segment node as child
  of $n$, classify it with $t_i$, and move all nodes assigned to $P_i$ from $n$
  to that segment.  If there is a segment $n$ without type or with type $t$, but
  for which $\childTypes(n) \not\models \mathcal{C}_\mathcal{T}(t)$ and the
  above case can not be applied, then that segment may be split: If there are
  non-overlapping subsequences $c_i$ of children of $n$, such that all children
  of $n$ are covered and, for each $c_i$, there is a type $t_i$ such that the
  types of $c_i$ satisfy the constraints for $t_i$, then we replace $n$ with a
  sequence of segments, one for each $c_i$ typed with $t_i$. 
  
  In practice, few cases of multiple under segmentations occur at the same node
  and we can limit the search space using a total order on $\mathcal{T}$. We
  observe that the number of segments is bounded by the number of fields in the
  form and provide a pool of unused segments in the segmentation. This avoids
  the need for value invention in the model repair.
\item \emph{Over Segmentation:} If there is a segment $n$ of type $t$ with
  children $c_1, \ldots, c_k$ such that $\bigcup \childTypes(c_i) \cup
  \bigcup_{n' \in C} \tau(n') \models \mathcal{C}_{\mathcal{T}}(t)$ where $C$ is
  the set of children of $n$ without $c_1\dots c_k$, then we move the
  children of each $c_i$ to $n$ and delete all $c_i$.
\item \emph{Under Classification:} If there is a segment $n$ of type $t$
  with untyped children $c_1, \ldots, c_k$ and corresponding types $t_1, \ldots,
  t_k$ such that $\childTypes(n) \cup \{t_1, \ldots, t_k\} \models
  \mathcal{C}_{\mathcal{T}}(t)$ and, for each $c_i$, $\childTypes(c_i)
  \models \mathcal{C}_{\mathcal{T}}(t_i)$ holds, then we type $c_i$ with $t_i$.
\item \emph{Over Classification:} If there is a segment node $n$ of type $t$
  with child $c$ typed $t_1$ and $t_2$ such that $\{t_1 \} \cup \bigcup_{c' \in
    C} \tau(c') \models \mathcal{C}_{\mathcal{T}}(t)$ where $C$ is the set of
  children of $n$ without $c$, we drop $t_2$ from $\tau(c)$.
\item \emph{Miss Classification:} If there is a node $n$ of type $t$ where
  $\childTypes(n) \not\models \mathcal{C}_{\mathcal{T}}(t)$, then we delete the
  classification of $n$ as $t$.
\end{asparaenum}


\begin{figure}[tbp]
	\centering 
	  \subfloat[price range]{\label{fig:farlowestates-range}\includegraphics[width=.36\textwidth]{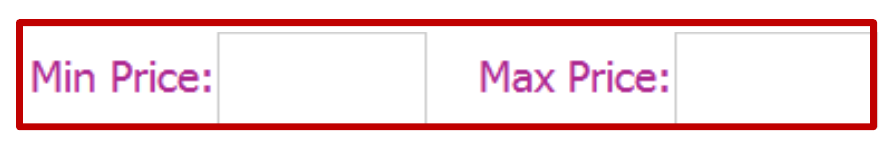}}\\
	  \subfloat[order-by and display-method]{\label{fig:farlowestates-segment-under}\includegraphics[width=.3\textwidth]{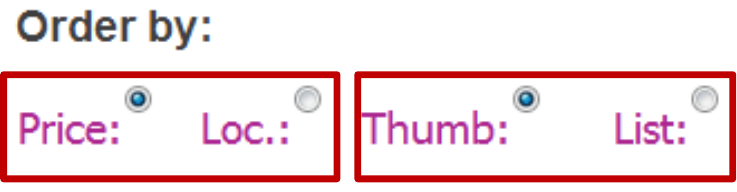}}\\
	  \subfloat[property type]{\label{fig:farlowestates-segment-over}\includegraphics[width=.3\textwidth]{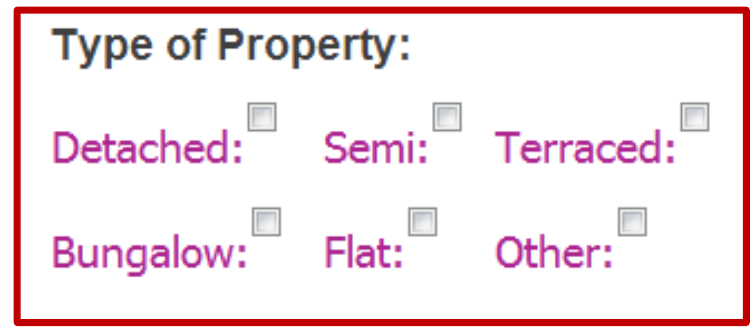}}
	\caption{Model Repair on {\em Farlowestates} Real Estate Form}
	\label{fig:farlowestates-segment}
\end{figure}


Figure~\ref{fig:farlowestates-segment-origin} shows the segmentation and
classification \OPAL obtains for this form before model repair. There are
several problems with this segmentation:
\begin{asparaenum}[\bfseries(1)]
\item The #min_price# and #max_price# fields are not arranged into a range
  segment as no such node is present in the DOM. This is a case of under
  segmentation. Following the #segment_range# constraint, \OPAL introduces a
  price range segment to include both fields as in
  Figure~\ref{fig:farlowestates-range}.
\item The four radio buttons under ``order by'' are of two different domain
  types, i.e., #order_by# for the first two and #display# for the last two. Due
  to #concept_by_segment# from Figure~\ref{fig:classification-constraints} and
  the segment label ``order by'', the last two would also get classified as
  #order_by#, if not for #display# $\PRECEDENCE$ #order_by#.  This is an example
  of the second case of under segmentation, where \OPAL needs to split the
  existing segment as it is not supported by a structural constraint, but there
  are subsequences of children that can form valid segments
  (Figure~\ref{fig:farlowestates-segment-under}).
\item As a result of the original segment with four radio buttons grouped
  together, the last two radio buttons in the four are also typed as #order_by#
  in addition to their #display# type. \OPAL resolves this over classification
  by removing the #order_by# following the restructuring of the segment.
\item The #property_type# segment is subdivided into two segments in the
  original segmentation, since \OPAL identifies no style-equivalence among the
  six check boxes due to lack of similarity. However, two segments of
  #property_type# can not be contained in a single parent segment (due to
  #outlier#). Thus, the two segments are removed with all their children
  directly contained in the larger segment
  (Figure~\ref{fig:farlowestates-segment-over}). This is an example of over
  segmentation.
\item The segmentation obtained at segment scope preserves the two DOM nodes
  representing two form rows. However, in the domain schema, these nodes do not
  carry meaning, and thus are treated as over segmentation and removed.
\end{asparaenum}

\subsection{Domain Instantiation: Methodology and Example}
\label{sec:domain:methodology-example}

In this section, we demonstrate how to derive an \OPAL domain schema, which
includes form specific concepts, from a given standard ontology of a domain. This is
the typical way to instantiate a domain for use with \OPAL.

\begin{figure*}[tbp]
  \centering
  \includegraphics[width=.8\linewidth]{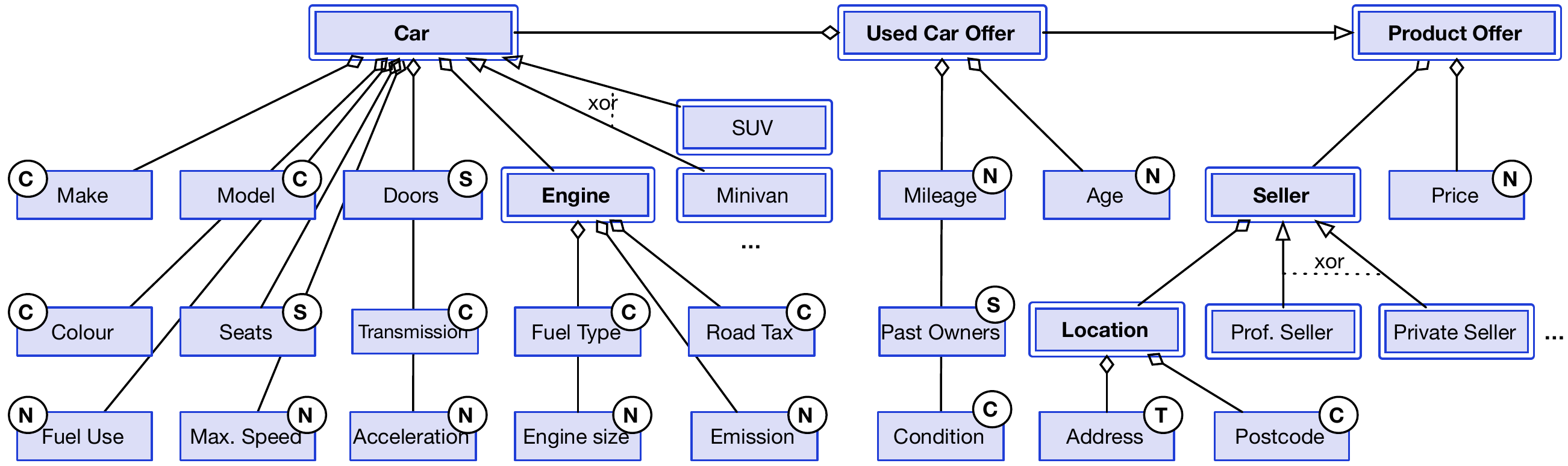}
  \caption{Used car ontology}
  \label{fig:used-car:ontology}
\end{figure*}

Figure~\ref{fig:used-car:ontology} shows a simple ontology for the used car
domain (in the UK). Note, that most search forms are about searching for
entities (double border in Figure~\ref{fig:used-car:ontology}) by their
properties (single border) such as price or mileage of a car. Therefore, most of
the types in an \OPAL domain schema correspond to such properties of entities in
the domain.

We observe that properties can be roughly distinguished into numerical,
categorical, and free text according to their range and that these distinctions
dictate to a large extent the expected form fields for searching by those
properties. For a numerical property we expect, e.g., either a single text
input or slider, two min-max fields for entering a range, or a set of
checkboxes to select common values or ranges. Categorical properties, on the
other hand, never exhibit range inputs. 

These observations are codified in the derivation templates of
Figure~\ref{fig:meta-model:templates}. These templates group typical
instantiations for the above kinds of properties as well as for
compound object types
such as \TYPE{location} in Figure~\ref{fig:used-car:ontology}: 
\begin{asparaenum}[\bfseries(1)]
\item For an \textbf{object type} (#engine#), we instantiate only the
  #segment<C># template, i.e., we allow segments, but not fields of this
  type. Such segments typically collect multiple properties of the object type,
  e.g., #engine_size# and #fuel_type#.
\item For a \textbf{free text type} (e.g., #address#), we instantiate
  only the #concept_by_proper<C,A># and #concept_by_value<C,A># templates that
  allows fields, but not segments of that type. There is usually no need for a
  segment in this case, as there are rarely multiple occurrences of fields for
  such a type. In the rare case where that is nevertheless possible, we
  instantiate #segment<C># separately. 
\item For a \textbf{categorial type} (#make# or #colour#), we instantiate in
  addition to #concept_by_proper<C,A># also #segment<C># and the
  #concept_by_segment<C,A>#. Categorical types are often represented as single
  select boxes or lists of radio buttons or check boxes. For the latter, an
  enclosing segment is desirable and #concept_by_segment<C,A># allows us to
  propagate the segment labels to the fields.
\item For a \textbf{numerical type} (#price# or #seats#), we also instantiate
  the #segment_range# and #concept_minmax# templates, enabling the 
  classification of range segments and fields. 
\end{asparaenum}

\begin{figure}
\begin{lstlisting}[numbers=left]
TEMPLATE object_type<C> {
  INSTANTIATE segment<C> using { <C> } }
 
TEMPLATE free_text_type<C,A> {
  INSTANTIATE concept_by_proper<C,A> using { <C,A> } 
  INSTANTIATE concept_by_value<C,A> using { <C,A> } }
 
TEMPLATE categorical_type<C,A> {
  INSTANTIATE concept_by_proper<C,A> using { <C,A> }
  INSTANTIATE concept_by_segment<C,A> using { <C,A> }
  INSTANTIATE concept_by_value<C,A> using { <C,A> }
  INSTANTIATE segment<C> using { <C> } }
 
TEMPLATE numeric_type<C,C$_M$, A> {
  INSTANTIATE concept_by_proper<C,A> using { <C,A> }
  INSTANTIATE concept_by_segment<C,A> using { <C,A> }
  INSTANTIATE concept_by_value<C,A> using { <C,A> }
  INSTANTIATE concept_minmax<C,C$_M$,A> using { <C,C$_M$,A> }
  INSTANTIATE segment<C> using { <C> }
  INSTANTIATE segment_range<C,C$_M$> using { <C,C$_M$> } }
\end{lstlisting}
\caption{Template for different property kinds}
\label{fig:meta-model:templates}
\end{figure}

With these templates, we can derive an \OPAL annotation and domain schema very
quickly from a given domain schema such as
Figure~\ref{fig:used-car:ontology}. 

First, we normalize the ontology: If a class $C$ has sub-classes without
additional properties (type classes), we generate a new categorical property
$C\TYPE{\_type}$, add all labels from the sub-classes to that property, and
remove the sub-classes. 

Second, we derive the annotation schema and, in particular, the necessary
annotators as follows:
\begin{asparaenum}[\bfseries(1)]
\item For each concept or property $c$ of the ontology, we create an annotation
  type $\type{c}$. All \emph{labels} of $c$, possibly enriched with synonyms
  from an external knowledge base such as Wordnet, form an annotator for
  the proper labels of the concept ($\PROPER_{\type{c}}$).
\item For categorical concepts or properties, we require a given list of
  instances, an existing annotator, or another entity recogniser, again
  possibly provided by an external knowledge base such as DBPedia or
  LinkedGeoData. Numerical values are treated  similarly, though these often take simply
  the form of number in a certain range. This provides $\VALUE_{\type{c}}$.
\end{asparaenum}

\begin{figure*}[tbp]
  \centering
  \includegraphics[width=1\linewidth]{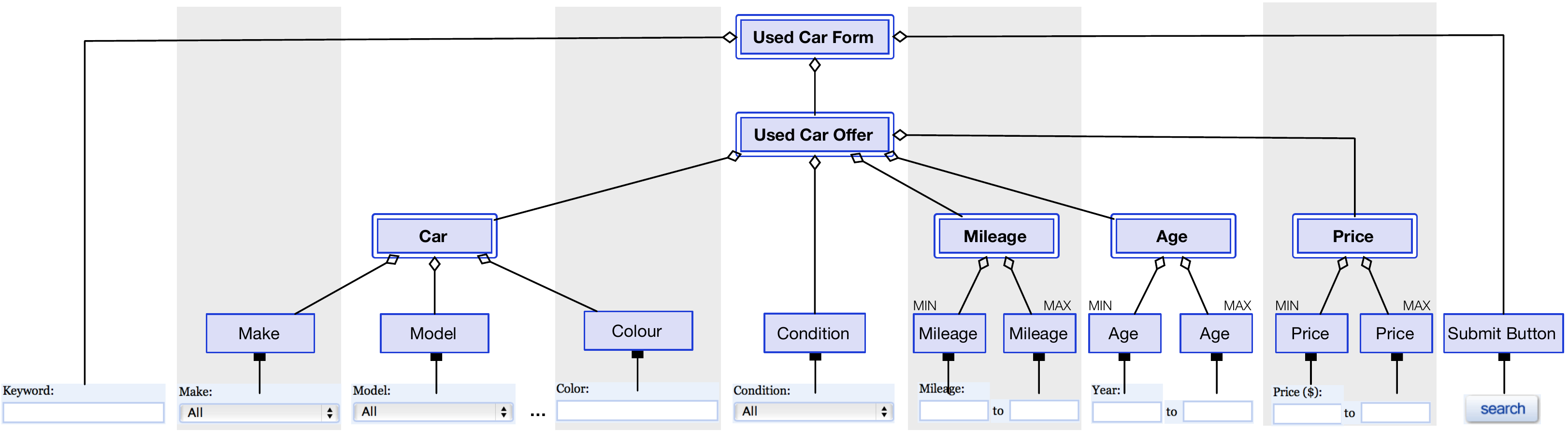}
  \caption{Used car: classified form}
  \label{fig:used-car:classified-form}
\end{figure*}

Third, we derive the domain schema in four steps:
\begin{asparaenum}[\bfseries(1)]
\item For each \textbf{class} $C$, add an instantiation rule for
  #object_type<C>#. In our example, this yields 6 instantiations
  (recall, that type classes are normalised to properties above). 
\item For each \textbf{property}, add an instantiation rule of corresponding
  type, e.g., 
  \begin{lstlisting}
INSTANTIATE numeric_type<C,C$_M$,A> using {<price,price$_M$,$\type{price}$>}
  \end{lstlisting}
  In our example, this yields 22 instantiations (20 properties from
  Figure~\ref{fig:used-car:ontology} and two $\ldots\_\type{type}$ properties). 
\item Determine which ``presentational'' fields and segments occur in
  the given domain and
  add them to the domain schema. A field or segment is presentational, if it
  determines the way the results are represented. In the used car and real
  estate domains, we  identify two types of presentational fields:
  ``order-by'' and ``pagination'' which control the order in which the results
  are presented as well as the number of results per page. These
  presentational types are mostly shared between domains and can be easily
  reused thanks to \OPALlang templates:
  \begin{lstlisting}
INSTANTIATE categorical_type<C, A> using 
  { <order_by, $\type{order\_by}$> <pagination, $\type{pagination}$> }
  \end{lstlisting}
  In this step, we also add generic rules that are independent of the domain,
  e.g., for the form itself and domain-independent form facilities such as
  submit buttons or generic keyword search fields.
\item Sometimes small manual adjustments are necessary. For example, 
  numerical types may occur with multiple units of measure or other
  modifiers, e.g., prices with different currencies or locations with a search
  radius. Such modifier fields are usually unique in their corresponding segment
  and thus added using the #segment_with_unique<C,U># template. In the used car
  domain, we can observe this for #currency# and #radius#:
  \begin{lstlisting}
INSTANTIATE TEMPLATE segment_with_unique<C,U> using
  { <price, currency> <location, radius>  }
INSTANTIATE TEMPLATE concept_by_proper<C,A> using
  { <currency,$\type{currency}$>, <radius,$\type{radius}$> }
INSTANTIATE TEMPLATE concept_by_value<C,A> using
  { <currency,$\type{currency}$>, <radius,$\type{radius}$> }
  \end{lstlisting}
  Some object types, in particular #location#, may also be entered as a whole
  through free text fields and accordingly instantiate the #free_text_type#
  template for them:
  \begin{lstlisting}
INSTANTIATE TEMPLATE free_text_type<C,A> using
  { <location,$\type{location}$ }
  \end{lstlisting}
\end{asparaenum}

Finally, we need to determine part-of and precedence between types. The part-of
relation is derived from the associations of the domain schema, e.g., #address#
$\PARTOF$ #location#, #postcode# $\PARTOF$ #location#, #fuel_type# $\PARTOF$
#engine# for our case. Precedence requires some observation of cases where
annotations for different types overlap. Typically, we want to give
presentational types precedence over all domain types (as they often contain
values such as ``sort by price''). For the used car domain, we observe that
#pagination# \PRECEDENCE #order_by# and that both have precedence over all domain
types. We also observe that #mileage# and #radius# (in locations) can have
overlapping values. Though radius is only used in #segment_with_unique<C,U>#, for
#location# segments which disallow #mileage# elements, we add #mileage#
\PRECEDENCE #radius# to express a bias for #mileage#.

Figure~\ref{fig:used-car:classified-form} shows a form from the used car domain
fully classified according to this domain schema.

\section{Light-weight Form Integration}
\label{sec:filling}

\OPAL's form models allow the easy implementation of many types of applications
that require automatic understanding and interaction with forms, such as form
integration and filling, data extraction, or web automation. As discussed in
Section~\ref{sec:approach}, we focus here on \emph{form integration} (or
filling), i.e., the part of a web integration system
\cite{He:2005:TBM:1053724.1054143} that translates a query on the global schema
(\OPAL's domain schema) to a query against concrete forms. In this section, we
introduce a light-weight form integration system that performs this task fully
automatically for thousands of forms in a domain, given only an \OPAL domain
schema. We have instantiated this system for the real estate and used car
domain, but \OPAL is  as easily  applied to other domains, since only
a very limited amount of additional customisation is needed (on type variations
and, possibly, similarities).

Recall, that we focus on the optimistic, single-query variant of the form
integration problem: We aim for a single-query that returns all results
matching the global (or \emph{master}) query, but allow to return also
non-matching results, if there is no more specific query that returns all
matching ones.

\OPAL's form integration translates the master query into concrete queries
through a small set of translation rules supported by a notion of similarity on
property values.  \OPAL can perform form integration without any other
information than what is provided by an \OPAL domain schema and corresponding
form model. However, it can be further improved by providing additional
domain-specific information.

\textbf{Similarity on values} is represented as a real-valued function on pairs
of values and is based on the property type: For free-text and categorical
properties, \OPAL uses a mix of Levenshtein and longest common substring
distance, for numeric properties a difference-based similarity.  A domain schema
can be enhanced by property-specific similarity function, e.g., to deal with
different units of measure. A small set of such functions is provided with
\OPAL: for price, for distance properties, and for dates.

\textbf{Translation rules} use these similarity functions to translate the
constraints of the master query $Q$ into queries on the concrete forms. For each
form $F$ with form model $M$ and constraint $C \in Q$ on type $T$, we retrieve
the fields $f_1, \ldots, f_n$ classified with $T$. Let $\VALUES(C)$ be the
(possibly infinite) set of values for which $C$ holds.
\begin{compactenum}[\bfseries(1)]
\item \emph{Single field, single value:} If $n=1$, $\VALUES(C) = \{v\}$, and
  \begin{compactenum}[\sffamily(i)]
  \item $f_1$ is a free text input, return $f_1 = v$.
  \item $f_1$ is a select box, return $f_1 = v'$ where $v'$ is the
    option of $f_1$ most similar to $v$.
  \end{compactenum}
\item \emph{Multi field:} If $n \geq 1$, 
  \begin{compactenum}[\sffamily(i)]
  \item $\VALUES(C) = \{v\}$, and all $f_i$ are radio buttons (exclusive options), return $f_k =
    \TRUE$ for the $f_k$ that is most similar to $v$.
  \item $\VALUES(C) = \{v_1, \ldots, v_k\}$ and all $f_i$ are check boxes
    (non-exclusive options), return $f_k = \TRUE$ for each $f_k$ where a $v_i$
    exists such that the similarity of $f_k$ and $v_i$ is minimal among all such
    pairs.
  \item and all $f_i$ are free-text range input fields (i.e., of type $T_M$,
    where $T_M$ is the minmax type to $T$), then return $f_s = v_1$ for each
    $f_s$ that is a minimum input and $f_e = v_k$ for each $f_k$ that is a
    maximum input.
  \item and all $f_i$ are select-box range input fields, then return $f_s =
    v_1'$ for each $f_s$ that is a minimum input where $v_1'$ is the most
    similar option of $f_s$ to $v_i$ that is smaller or equal to $v_1$. Analog
    for $f_e$. 
  \end{compactenum}
\end{compactenum}
In all other cases (e.g., a select box for a set inclusion constraint), we
return no constraints to avoid false negatives.

In many domains, we can observe that the same information is represented in
alternative ways on different sites. E.g., the age of a car is represented by
the manufacturing year on same sites. Similarly, the location of property may be
given as a street address, a postcode, or even just a town, in particular for
rural agencies. To treat this cases, we need to be able to translate a
constraint such as ``$\TYPE{age} = 6$'' to a constraint ``$\TYPE{year} = 2006$''
or ``$\TYPE{postcode} =$ OX1'' to ``$\TYPE{town} =$ Oxford''. We call \TYPE{age}
and \TYPE{year} \emph{type variants} and amend the domain schema with a value
mapping for each pair of type variants. Value mappings for numerical properties
are typically simple conversion functions, e.g., from different units of
measure. Value mappings for categorical properties are typically realised by a
query to an external database or service such as DBPedia. In our example
domains, we use value mappings for conversions of metric and imperial distances
as well as of postcodes to towns and other locations. To treat type variants we
perform the following test and translation before the aforementioned translation
rules:
\begin{compactenum}[\bfseries(1)]\setcounter{enumi}{-1}
\item \emph{Type variants.} If $n = 0$ and there is a field $f'$ with type $T'$ such that $T'$ is a
  variant type of $T$, we translate the values in $C$ to $T'$ and continue with
  that constraint.
\end{compactenum}

\begin{figure}[tbp]
  \centering
  \includegraphics[width=\columnwidth]{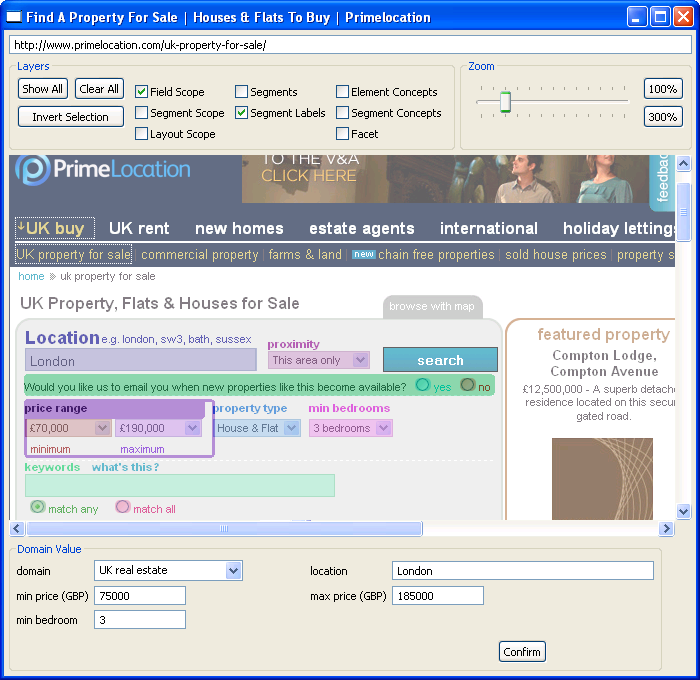} 
  \caption{\OPAL Testing Tool}
  \label{fig:opal-gui}
\end{figure}

With those simple rules, \OPAL's form integration manages to translate most
constraints as shown in Section~\ref{sec:evaluation}. There are, of course,
still cases where the translation fails, e.g., if categorical values are mapped
to ranges by some ordering such as road tax brackets or iPhone models (ordered
according to year of introduction). But as demonstrated in
Section~\ref{sec:evaluation}, this light-weight simple form integration already
provides us with a successful translation of a master query in the vast majority
of cases.

To illustrate \OPAL's form integration, we consider the form of
\url{primelocation.com} as shown in the middle of Figure~\ref{fig:opal-gui}. The
figure shows the \OPAL testing tool that we use to test and verify the accuracy
of \OPAL domain schemas. It allows the user to visualize the form labels, form
segments, and classifications derived by \OPAL and to track down, where, e.g.,
there are problems with the classification constraints or the annotations. It
also provides a master query in the lower third. The concrete form is
automatically filled according to the values provided in the master form. This
allows the user to visually verify that the query has been translated
correctly. The master form is automatically generated from the domain schema,
but the user can provide additional information on which fields to include. For
space reasons, we have focused in Figure~\ref{fig:opal-gui} on the types 
most commonly used in constraints in the UK real estate domain. 

For the concrete form from \url{primelocation.com}, we highlight form fields and
labels by colouring them with the same color (here, e.g., the ``minimum'' and
the first price field). Form segments are shown as boxes with no filling except
for their labels (a price segment with ``price range'' label).  The figure shows
the form \emph{after} \OPAL has filled it according to the values from the
master query. Notice, how for the three select boxes for minimum and maximum
price, as well as bedroom number, \OPAL picks the closest value to the one
specified in the master form. 

\section{Evaluation}
\label{sec:evaluation}


\begin{figure}[tbp]
  \centering
  \includegraphics[width=.9\columnwidth]{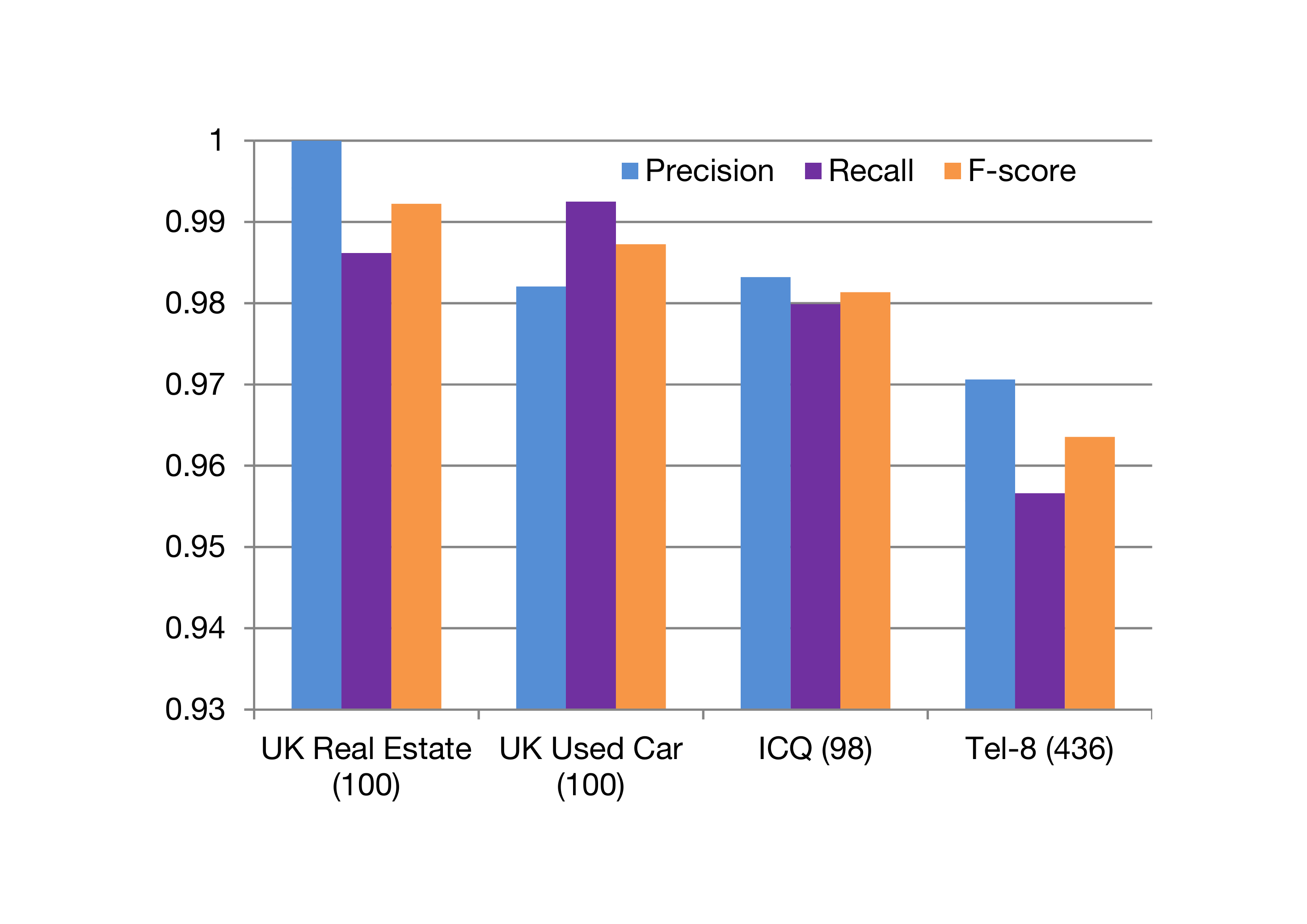} 
  \caption{\OPAL on 734 forms}
  \label{fig:comparison-result}
\end{figure}


We perform experiments on several domains across four different
datasets. Two datasets are randomly sampled from the UK real estate
and UK used-car domains, respectively.
%
%
We compare with existing approaches via ICQ and TEL-8, two public benchmark
sets, on which we only evaluate \OPAL's form labeling. This limitation necessary
to allow a comparison that is fair to existing approaches, that only label forms
and do not use domain knowledge.
Even with this limitation, however, \OPAL outperforms previous approaches in
most domains by at least 5\%.
We also perform an introspective analysis of \OPAL to show
\begin{inparaenum}[\bfseries(1)]
\item the impact of field, segment, layout, and repair in the form interpretation, 
\item \OPAL's performance and scalability with increasing page size, and
\item the effectiveness of the form integration in \OPAL. 
\end{inparaenum}

We evaluate the proper assignment of text nodes to form fields using standard
notions of precision, recall and F-score (harmonic mean $F = F_1=2PR/(P+R)$ of
precision and recall). For form labeling (classification), precision $P$ is
measured as the proportion of correctly labeled (classified) fields over total
labeled fields, while recall $R$ is the fraction of correctly labeled fields
over total number of fields. For form filling precision and recall do not apply
and we therefore report the error rate as portion of total fields that are not
correctly filled (i.e., either filled but with a wrong value or not filled at
all, despite a corresponding constraint in the master query).  For all
considered datasets, we compare the extracted result to a manually constructed
gold standard. We evaluate segmentation through their impact on classification,
see Figure~\ref{fig:scopes-contribution}, and the improved performance on the
two datasets where we perform form interpretation (UK real estate and used car)
versus the ICQ and TEL-8 datasets.

\paragraph{Datasets.}
For the UK real estate domain, we build a dataset randomly selecting $100$ real
estate agents from the UK yellow pages (\url{yell.com}).  Similarly, we randomly
pick $100$ used-car dealers from the UK largest aggregator website
\url{autotrader.co.uk}.  The forms in these two domains have significantly
different characteristics than the ones in ICQ and TEL-8, mainly due to changes
in web technology and web design practices.  The usage of CSS stylesheets for
layout and AJAX features are among the most relevant.
 
The ICQ and TEL-8 datasets cover several domains.  ICQ presents forms from five
domains: air traveling, (used) cars, books, jobs, (U.S.) real estate.  There are
20 web pages for each of the domains, but two of them are no longer accessible
and thus excluded from this evaluation.  TEL-8, on the other hand, contains
forms from eight domains: books, car rental, jobs, hotels, airlines, auto, movies
and music records.  The dataset amounts to $477$ forms, but only $436$ of them
are accessible (even in the cached version).

\begin{figure*}[tbp]
  \centering
  \subfloat[ICQ results]{\label{fig:icq-result}\includegraphics[width=.45\textwidth]{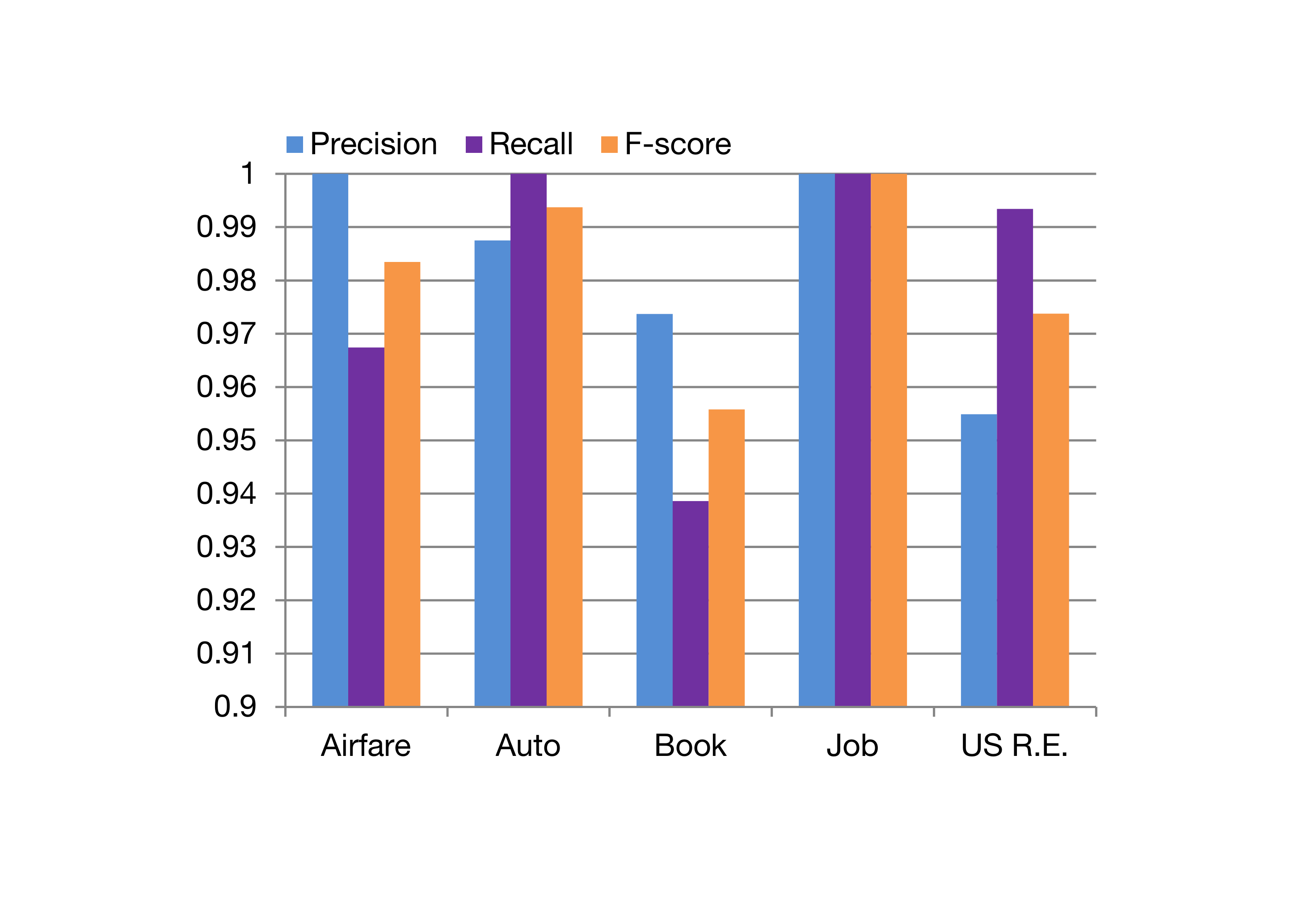}}
  \subfloat[TEL-8 results]{\label{fig:tel8-result}\includegraphics[width=.5\textwidth]{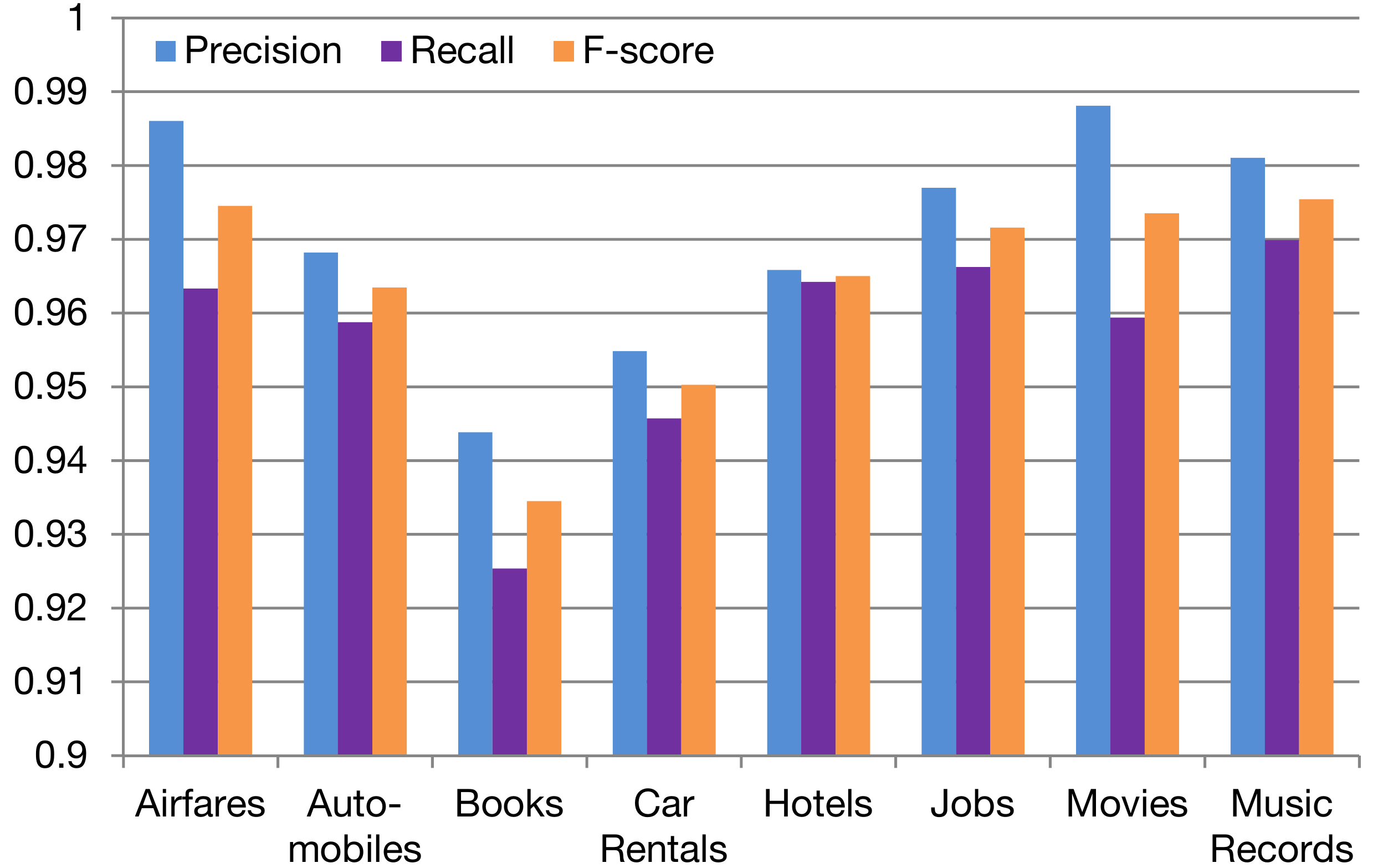}}
  \caption{\OPAL on ICQ and TEL-8 benchmark}
\end{figure*}

\subsection{Field Labeling}
\label{sec:case-study}
In our first experiment we evaluate the accuracy of \OPAL's field labeling on
all four datasets, but only in the UK real estate and used car domain we employ
the form interpretation to further improve the field labeling.
Figure~\ref{fig:comparison-result} shows the results.  The first two bars are
for the random sample datasets. For the real estate domain, \OPAL classifies
fields with perfect precision and $98.6\%$ recall. Overall we obtain a
remarkable $99.2\%$ F-score.  The result is similar for the used car domain,
where \OPAL obtain $98.2\%$ precision and $99.2\%$ recall, that amount to
$98.7\%$ F-score. \OPAL achieves lower precision than recall in the used car
domain due to the fact that web forms in this domain are more interactive:
certain fields are enabled only when some other field is filled properly, yet
non-field placeholders are present in the HTML to indicate that a field will
appear when the other field is filled.  This introduces noise to field
labeling and thus classification.  

For the real estate domain, our domain schema consists of a few dozen field
and segment types and about $40$ annotation types.  Similarly, in the used car
domain, there are about $30$ annotation types. Creating an
initial domain schema (including gazetteers and testing) 
takes a single person familiar with a domain and $\OPALlang$ roughly $1$
week.

The other two entries in Figure~\ref{fig:comparison-result} regard field
labeling on ICQ and TEL-8 datasets.  On these, \OPAL applies only its
domain-independent scopes (field, segment, scope) as no domain schema is
available for these domains.  Nonetheless, \OPAL reports very high accuracy also
on these forms, confirming the effectiveness of our domain-independent analysis.
Not unexpected, \OPAL performs better in the UK real estate and used
car domain where domain knowledge is present, even though the forms in those
datasets are on average more modern and contain more fields ($10.4$ and $9.2$
fields per form in the real-estate and used-car dataset versus $6.5$ and $7.9$
fields per form for ICQ and Tel-8).

\paragraph{Cross Domain Comparison.}
\label{sec:cross-domain-eval}
We use ICQ and TEL-8  to compare field labeling in \OPAL against
existing approaches, on a wide set of domains.
%
%
%
Figure~\ref{fig:icq-result} details the result of \OPAL on each domain of the
ICQ dataset.  It shows perfect F-score values for the jobs domain ($100\%$) as
well as auto and air travelling ($99.3\%$ and $98.3\%$). For comparison,
\cite{dragut09:_hierar_approac_to_model_web} reports $92\%$ F-score for labeling
on ICQ on average, which we outperform even in the domain most difficult for \OPAL
(books). \cite{wu09:_model_and_extrac_deep_web_query_inter} reports slightly
better precision and recall than \cite{dragut09:_hierar_approac_to_model_web},
but \OPAL still outperforms it by several percents. 



The results for the TEL-8 dataset are depicted in
Figure~\ref{fig:tel8-result}. Here, the overall F-score is $96.3\%$, again
mostly affected by the performance in the books domain. Note that, especially on
TEL-8, \OPAL obtains very high precision compared to recall.  Indeed, lower
recall means \OPAL is not able to assign labels to all fields, missing some of
them. 
For comparison, \cite{dragut09:_hierar_approac_to_model_web} reports $88-90\%$
overall F-score, which we outperform by a wide
margin. \cite{nguyen08:_learn_to_extrac_from_label} reports F-scores between
$89\%$ and $95\%$ for four domains in the TEL-8 dataset.  Though they perform
slightly better on books, we significantly outperform them on the three other
domains included in their results, as well as on average.

In Section~\ref{sec:domain-dependent}, we discuss that \OPAL prioritises field
over segment over layout scope and we claim that this is due to the decreasing
precision. Table~\ref{tab:scope-fp} shows the total number of fields labeled in
each scope, as well as the number and percentage of false positives among those
labels. It illustrates that, indeed, the field scope produces almost no false
positives ($2$ out of $762$ fields labeled in this scope, i.e., $0.3\%$),
the segment scope also produces very few ($3$ out of $154$ labeled fields), and
the layout scope produces most ($8$ out of $72$ labeled fields). 
\begin{table}[tbp]
\centering
\begin{tabular}{ r r@{}l r@{}l r@{}l}
  \toprule
  & \multicolumn{6}{c}{\textbf{labeling}} \\
  & \multicolumn{2}{c}{\textbf{field}} & 
  \multicolumn{2}{c}{\textbf{segment}} & 
  \multicolumn{2}{c}{\textbf{layout}}\\
  \midrule
  total & 761& & 154& & 72& \\
  false positives & 2& & 3& & 8& \\
  \midrule
  \quad\quad $= \%$ & 0&.3$\%$ & 1&.9$\%$ & 11&.1$\%$\\
  \bottomrule
\end{tabular}
\caption{False positives}
\label{tab:scope-fp}
\end{table}
%


\lstset{emph=[1]{},emph=[2]{},emph=[1]{nbsp},alsoletter={&},emphstyle=[1]{\normalfont\ttfamily\small\color{lightgray}}}

\newsavebox{\listingBookbeat} 
\begin{lrbox}{\listingBookbeat}
\end{lrbox} 
\begin{figure}[tbp]
	\centering 
\includegraphics[width=.8\columnwidth]{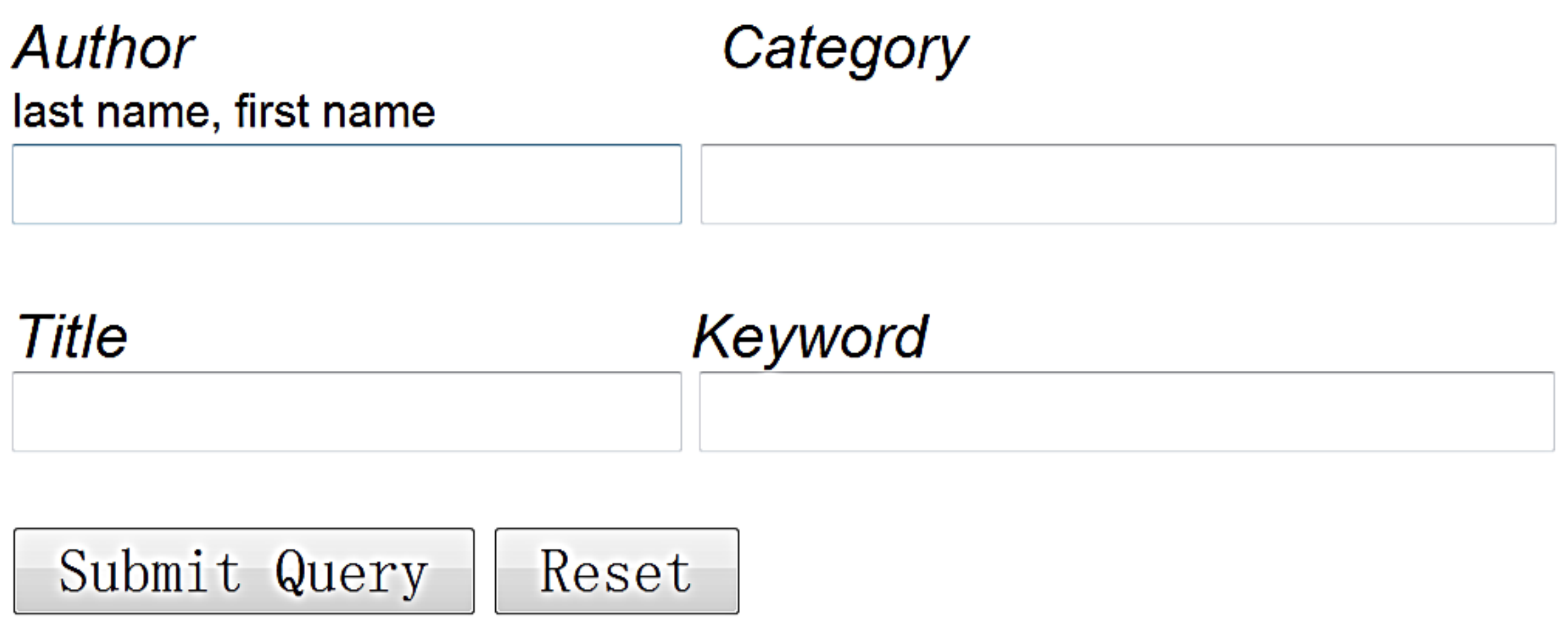}
\begin{lstlisting}[language=Pascal,basicstyle=\normalfont\ttfamily\footnotesize]
<font face="Arial,Helvetica,sans-serif" size="2">
  <i>Title&nbsp;&nbsp;&nbsp;&nbsp;&nbsp;&nbsp;&nbsp;...
     &nbsp;&nbsp;&nbsp;Keyword
    <br />
    <input name="title" > 
    <input size="26" name="keyword" /> 
  </i>
</font>
\end{lstlisting}
	\caption{{\em The Bookbeat} form with source}
	\label{fig:bookbeat}
\end{figure}

\newsavebox{\listingAbbey} 
\begin{lrbox}{\listingAbbey}
\end{lrbox} 
%

What keeps \OPAL from achieving $100\%$ accuracy? Most of the cases are due to
\OPAL's assumption that form labels are separate text nodes. This is evidently
the case in most forms, as demonstrated by near perfect accuracy, but there are
some outliers that use image only labels or merge multiple labels into one node
and use whitespace to achieve the desired result.
Figure~\ref{fig:bookbeat}, e.g., shows a form where ``Title'' and ``Keyword'' are a
single HTML node with \texttt{\&nbsp;} spaces in between. While both cases are
easy enough to address, they do require specific treatment and we omitted them
from the version of \OPAL presented here to illustrate that even without any
such specifically tailored heuristics, we can achieve nearly perfect form
labeling and interpretation.


\subsection{Form Interpretation}
The quality of \OPAL's form interpretation depends on the quality of the form
labeling and that of the annotators. As discussed above, for this evaluation we
use annotators that have been created in about 1 week for the UK real estate and
used car domain. The location related annotators are based on standard sources
(GeoNames and LinkedGeoData) and thus have reasonable recall, but precision is
fairly low, due to the high number of locations in the UK that are homonyms to
common English words (e.g., the town of ``Selling''). Such noise in the value
annotators, however, affects \OPAL very little, as the values of form fields are
only used if the labels are inconclusive and we only use the most frequent
annotation type. Noise in the label values is far more likely to lead to
classification errors. However, typical annotators are small lists of $5-10$
typical labels which are easy to create and have very low noise. E.g., for
$\type{bedroom}$ labels we use just ``bedroom'', ``bed'', and their plural
forms, for \type{make}, \type{model}, \type{mileage} and many more just
``make'', ``model'', ``mileage'', and their plural form, resp.

With this, we achieve near perfect classification, correctly classifying most of
the fields, see Table~\ref{tab:scope-fp}: Precision is $97.3\%$ over all fields
in the real estate data set (with just $24$ out of $931$ classified
fields incorrectly classified) and recall $97.4\%$. This excludes $56$ (or
$5.5\%$) fields for which our domain schema does not contain a concept (usually
as they appear only very rarely).

\begin{figure}[tbp]
  \centering
  \includegraphics[width=.5\linewidth]{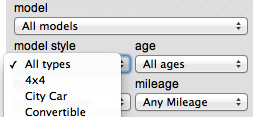}
  \caption{Classification error example}
  \label{fig:classification:error}
\end{figure}

Classification errors are mostly caused by ambiguity in the used form
labels. For example, Figure~\ref{fig:classification:error} shows a form, where
the ``model style'' field is erroneously classified as a \TYPE{model} field by
\OPAL. The field has a proper label ``model style'' which is correctly assigned
to the field in the field labeling, as are the field values ``4x4'', ``City
Car'', etc. In the classification, we prioritise proper labels over values (as
value annotators are more noisy). In most cases, this is indeed preferable, but
here the proper label ``model style'' is annotated with \type{model} and we
classify the field as \type{model} rather than \type{car\_type}, as ``model
style'' is not recognised as a label for \type{car\_type}. A probabilistic
classifications that combines classifications from labels and values (with a
lower weight) would allows us to choose the most likely global form classification
and thus to address such outliers. However, this would also increase the effort
in creating a domain schema.


\subsection{Contributions of Scopes}
\label{sec:scopes-contribution}
We demonstrate the effectiveness of combining different types of analysis by
measuring to what extent each of our four scopes contributes to the overall
quality of form understanding. We use again the two
domain datasets from the previous experiment.
%
%
For both we show the results for recall, though the picture is similar for
precision and F-score, cf.\@ Figure~\ref{fig:comparison-result}. As illustrated
in Figure~\ref{fig:scopes-contribution}, for the field labeling in the
real-estate dataset, the field scope already contributes significantly (67\%).
The Segment scope increases recall by 18\%, layout scope and the repair in the
form interpretation add together another 13\%. Note that, the contribution of
the repair in the form interpretation is more significant than that of the
layout scope, indicating the importance of domain knowledge to achieve very high
accuracy form understanding. In the used car domain, field scope alone is even
more significant 85\% (as many of the websites use modern web technologies and
frameworks with reasonable structure). 


\begin{figure}[tbp]
  \centering
  \includegraphics[width=.75\columnwidth]{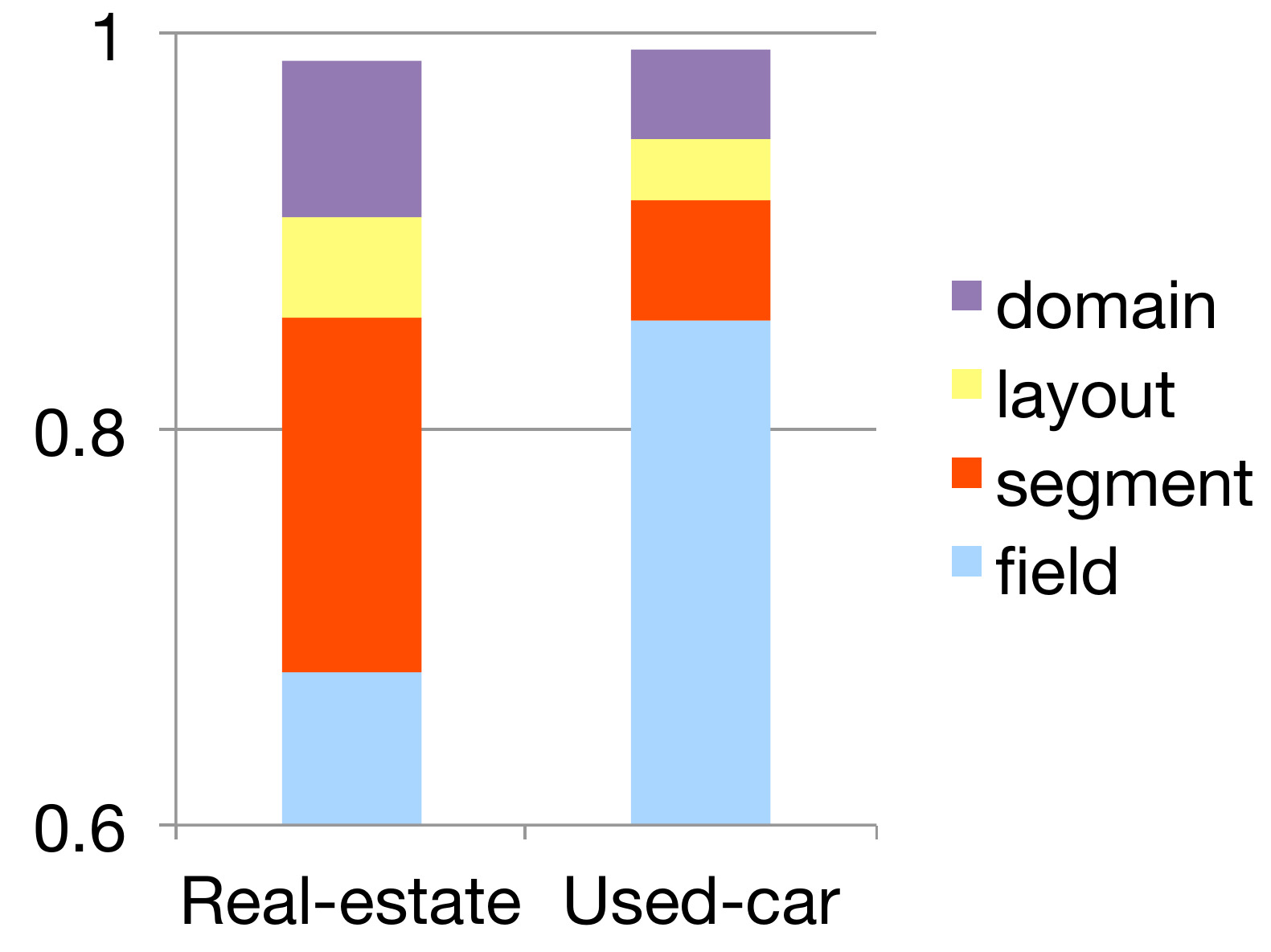} 
  \caption{Scopes}
  \label{fig:scopes-contribution}
\end{figure}





\begin{figure}[tbp]
  \centering
  \includegraphics[width=.75\columnwidth]{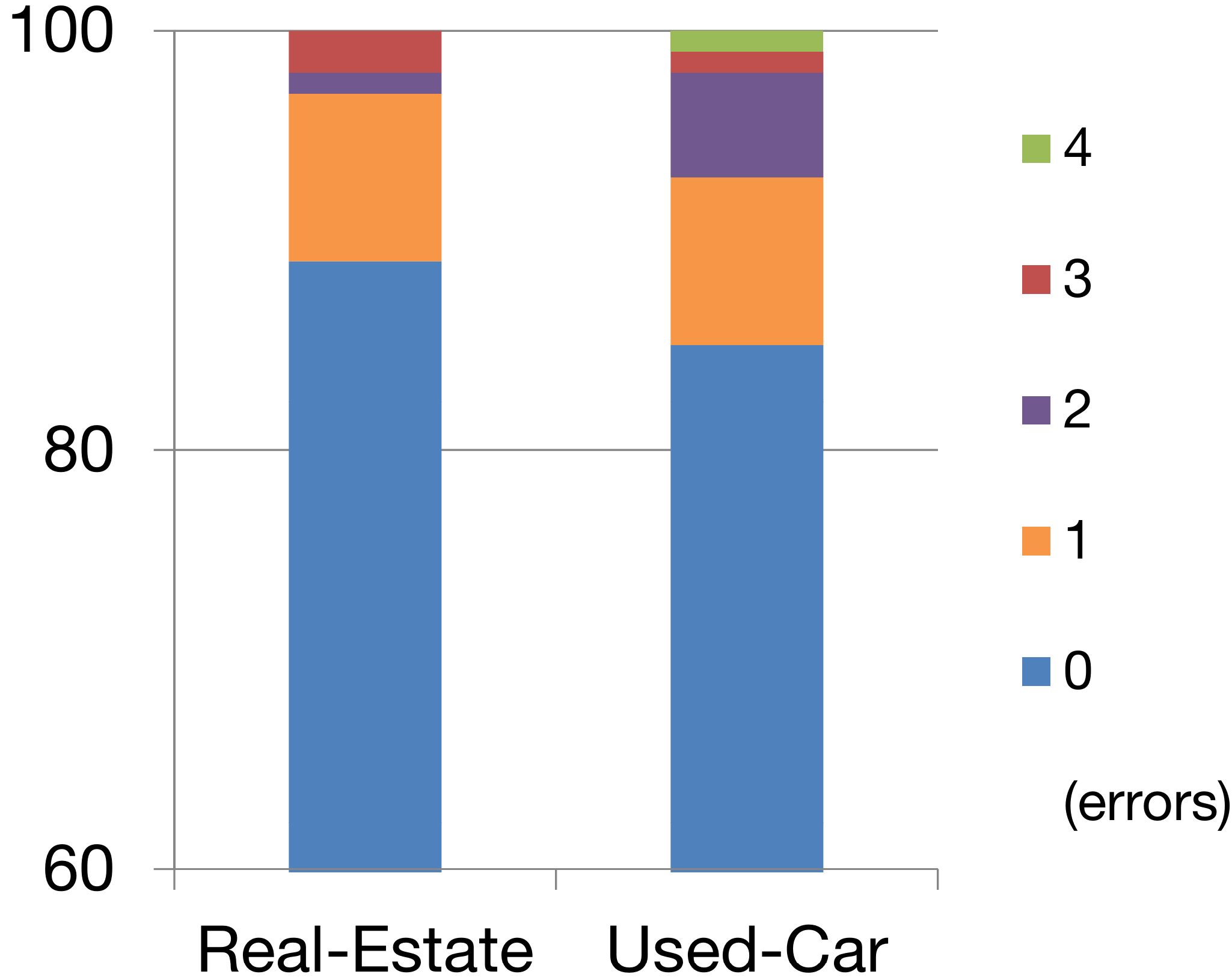} 
  \caption{Form integration errors (per form)}
  \label{fig:filling-failure}
\end{figure}

\begin{figure}[tbp]
  \centering
  \includegraphics[width=.95\columnwidth]{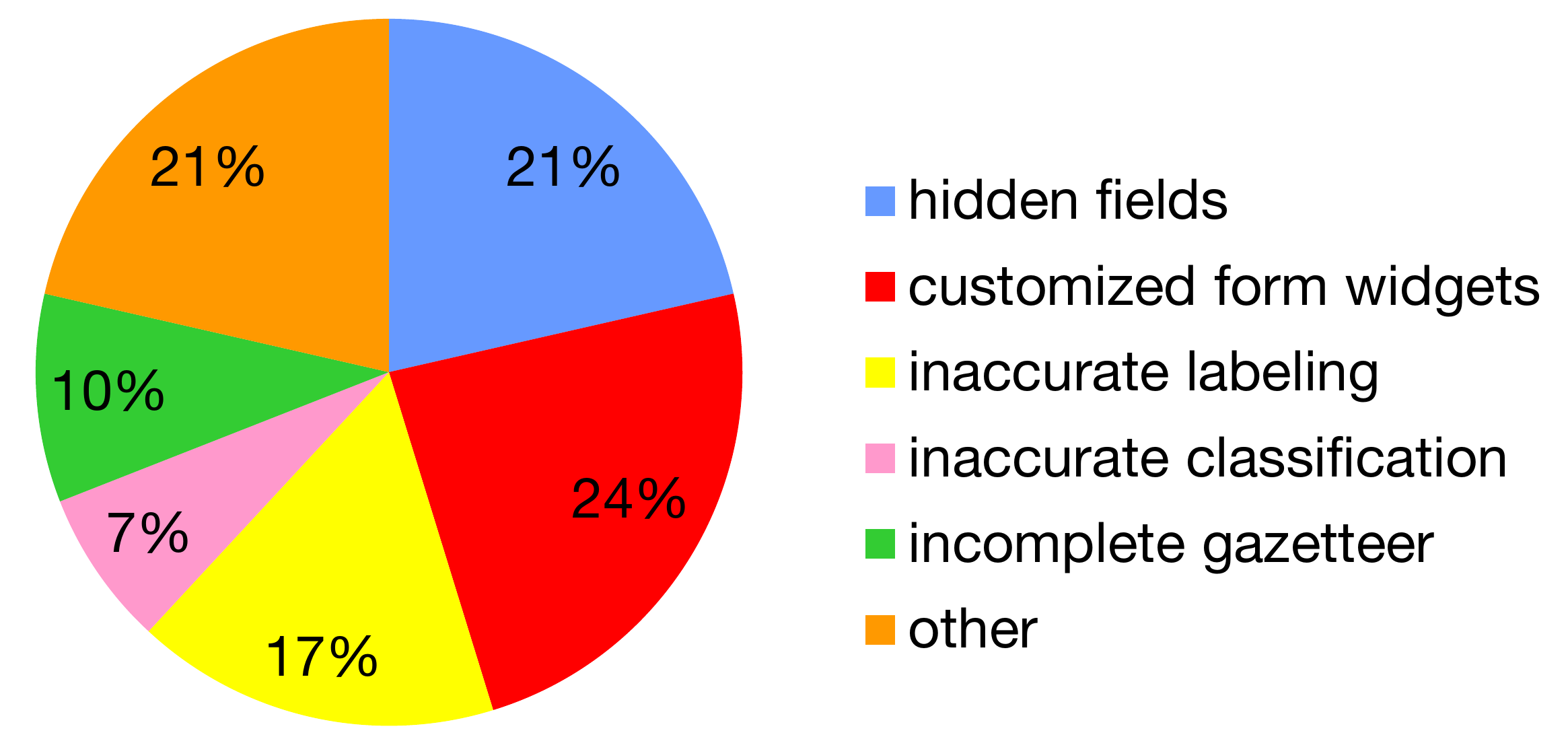} 
  \caption{Types and distribution of form integration errors}
  \label{fig:filling-error}
\end{figure}

\subsection{Form Integration}
For the evaluation of the form integration, we determine the error rate in the
query translation for all forms in the used car and real estate datasets. We use
multiple master queries in both cases, using for the real estate domain
combinations of location, min price, max price, and min bedroom. For the used
car domain, we use combinations of location, make, model, min price, and max
price. We evaluate the constraints separately and consider a constraint
correctly translated, if it involves the right field on the concrete form and
uses the best matching value. Overall, \OPAL generates 95.6\% and 93.8\%
correctly translated constraints. 

Figure~\ref{fig:filling-failure} presents the number of web forms where \OPAL
fails to translate one or more constraints incorrectly.  Overall, 87\% of the
forms were filled perfectly, and 95\% of the forms have no more than one
failure. Figure~\ref{fig:filling-error} presents the major causes for \OPAL's
failure in translating constraints: Most of the errors are caused by scripted
forms with hidden ($21\%$) or heavily customised form controls ($24\%$). The
remaining cases divide rather evenly between errors in the form labeling
($17\%$), in the classification or annotation (incomplete gazetteer), and an
assortment of other issues, mostly browser related (e.g., scripted popovers that
block access to the form fields or fields that can only be filled in a certain
order). 


\begin{figure}[tbp]
  \centering
  \includegraphics[width=.8\columnwidth]{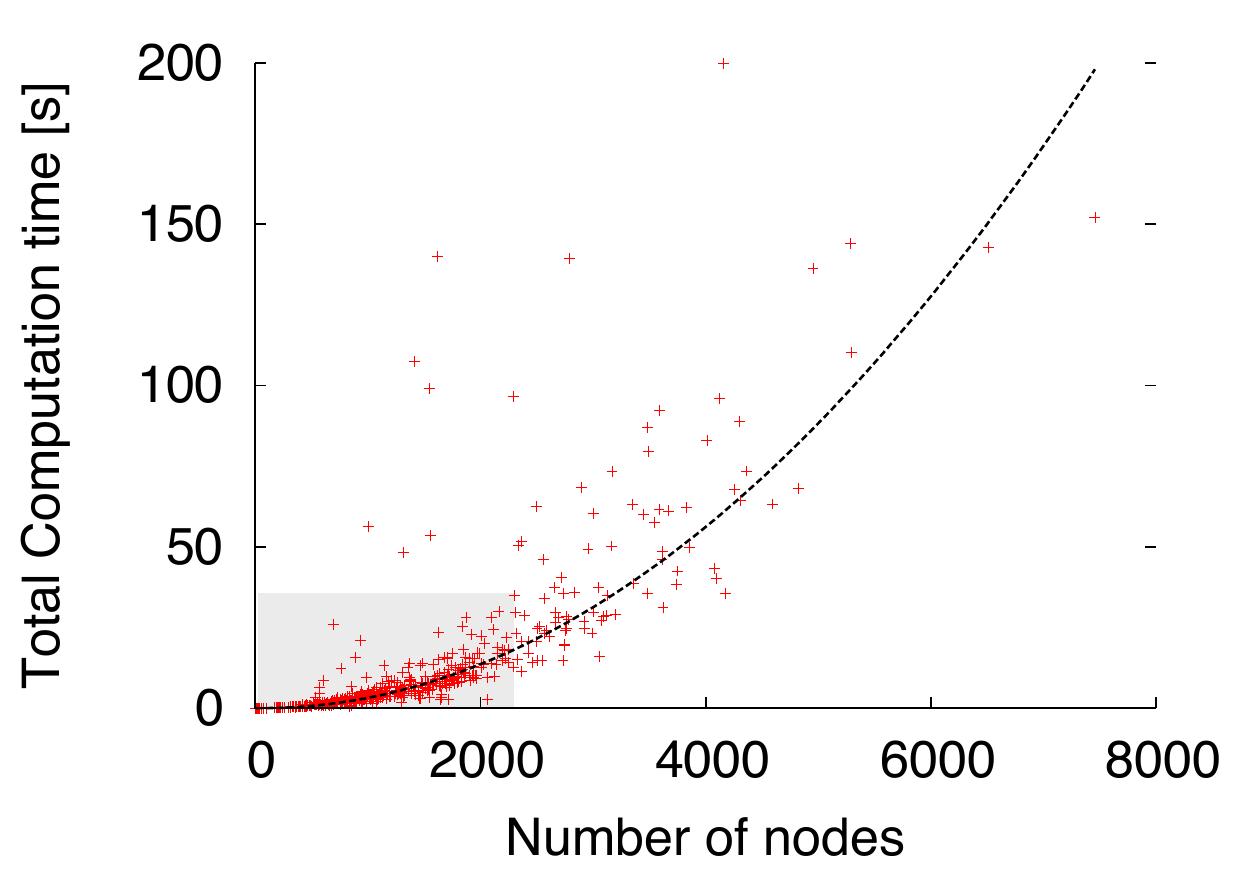} 
  \caption{Time}
  \label{fig:scaling-opal-nodes}
\end{figure}

\subsection{Scalability}
As discussed in Sections~\ref{sec:domain-independent} and
\ref{sec:domain-dependent}, overall the analysis of \OPAL is bounded by $O(n^2)$
due to the layout scope. As expected actual performance follows a quadratic
curve, but with very low constants. There is a significant amount of outliers,
partially due to long page rendering time and partially due to variance in the
depth and sophistication of the HTML
structure. Figure~\ref{fig:scaling-opal-nodes} reports \OPAL performance on all
534 forms in the combined TEL-8 and ICQ datasets. The highlight area covers
$80\%$ of the forms with 2200 nodes. \OPAL requires at most $30s$ for the
analysis (including page rendering) of these forms. Further analysis on the
effect of increasing field or form numbers confirms that these have little
effect and page size is the dominant factor.

\section{Related Work}
\label{sec:related-work}


Form understanding has attracted a number of approaches motivated by deep web
search~\cite{%
  madhavan08:_googl_deep_web_crawl,%
  raghavan01:_crawl_hidden_web,%
  shestakov2005deque}, meta-search engines and web form integration~\cite{%
  he07:_towar_deeper_unter_of_searc,%
  dragut09:_hierar_approac_to_model_web,%
  Wang:2004:ISM:1316689.1316726,%
  wu09:_model_and_extrac_deep_web_query_inter,%
  wu04:_inter_clust_based_approac_to,%
  zhen04:_under_web_query_inter} and web
extraction~\cite{su09:_ode,Wang:2003:DEL:775152.775179}. We focus here on
differences to \OPAL, for a complete survey see
\cite{khare10:_under_deep_web_searc_inter,DBLP:series/synthesis/2012Dragut}. We
present related work for form understanding and form integration separately, as
not all approaches consider both aspects.

\subsection{Form Understanding}
Form understanding approaches can be roughly categorised by the fundamental
approach to the problem: 

\begin{inparaenum}[\bfseries(1)]
\item The most common type encodes (mostly domain independent) observations on
  typical forms into implicit heuristics or explicit rules
  \tool{MetaQuerier}~\cite{zhen04:_under_web_query_inter},
  \tool{ExQ}~\cite{wu09:_model_and_extrac_deep_web_query_inter},
  \tool{SchemaTree}~\cite{dragut09:_hierar_approac_to_model_web},
  \tool{LITE}~\cite{raghavan01:_crawl_hidden_web},
  \tool{Wise-iExtractor}~\cite{he07:_towar_deeper_unter_of_searc},
  \tool{DEQUE}~\cite{shestakov2005deque}, and
  \tool{CombMatch}~\cite{kalijuvee01:_effic_web_form_entry_pdas}.
\item Alternatively, some approaches
  \tool{LabelEx}~\cite{nguyen08:_learn_to_extrac_from_label} and
  \tool{HMM}~\cite{DBLP:conf/cikm/KhareA09} use machine learning from a set of
  example forms (possibly of a specific domain).
\item Form understanding is often done to surface the results hidden behind the
  form and approaches such as
  \cite{madhavan08:_googl_deep_web_crawl,Wang:2004:ISM:1316689.1316726,raghavan01:_crawl_hidden_web}
  exploit the extracted results for form understanding. 
\end{inparaenum}

Aside of system design, \OPAL primarily differs from these approaches in two
aspects:
\begin{inparaenum}[\bfseries(1)]
\item They mostly incorporate only one or two of \OPAL's scopes (and
  feature classes): \tool{MetaQuerier}, \tool{ExQ}, and
  \tool{SchemaTree} mostly ignore the HTML structure (and thus field
  and segmentation scope) and rely on visual heuristics only;
  \tool{CombMatch}, \tool{LITE}, \tool{DEQUE}, and \tool{LabelEx} 
  ignore field grouping. \tool{HMM} ignores visual
  information. \cite{madhavan08:_googl_deep_web_crawl,Wang:2004:ISM:1316689.1316726,raghavan01:_crawl_hidden_web}
  use only the HTML structure, but exploit probing information, i.e., whether a
  submission is successful or not. 
\item None of the approaches provides a proper form model classifying the form
  fields according to a given schema. Furthermore, no approach uses domain
  knowledge is used to improve the labeling or verify the classification, though
  \tool{LabelEx} analyses domain specific term frequencies of label texts and
  \tool{HMM} checks for generic terms, such as ``min''.
\end{inparaenum}
As evident in our evaluation, each of the scopes in \OPAL considerably affects
the quality of the form labeling and classification. The fact, that each of
these approaches omits at least one of the domain-independent scopes, explains
the significant advantage in accuracy \OPAL exhibits on Tel-8 and ICQ. Notice
also that not using domain knowledge keeps these approaches out of reach of the
nearly perfect field classification achieved by \OPAL.

\paragraph{Form understanding by observation and heuristics.}
Most closely related in spirit to \OPAL, though very different in realisation
and accuracy, is \tool{MetaQuerier}~\cite{zhen04:_under_web_query_inter}. It is
built upon the assumption that web forms follow a ``hidden syntax'' which is
implicitly codified in common web design rules. To uncover this hidden syntax,
\tool{MetaQuerier} treats form understanding as a parsing problem, interpreting
the page a sequence of ``atomic visual elements'', each coming with a number of
attributes, in particular with its bounding box.  In a study covering $150$ forms, the
authors of \tool{MetaQuerier} identified $21$ common design patterns. These
patterns are captured by production rules in grammar with preferences. 
\tool{Metaquerier} is not parameterisable for a specific domain. In contrast,
the domain independent part of \OPAL achieves nearly perfect accuracy with only
$6$ generic patterns by combining visual, structural, and textual features, and
a simple prioritisation of these patterns by scope. \OPAL's domain dependent
part allows us to adjust it for patterns specific to a domain. 

\tool{ExQ}~\cite{wu09:_model_and_extrac_deep_web_query_inter} is similarly based
primarily on visual features such as a bias for the top-left located labels
comparable to \OPAL, but disregards most structural clues, such as explicit
\texttt{for} attributes of \texttt{label} tags and does not allow for any domain
specific patterns.

Also \tool{SchemaTree}~\cite{dragut09:_hierar_approac_to_model_web} uses only 
visual features (and the \texttt{tabindex} and \texttt{for} attributes for 
fields and labels). It exploits nine observations on form design, e.g., 
that query interfaces are organised top-down and left-to-right or that fields 
form semantic groups. It uses a hierarchical alignment between fields and text 
nodes similar to \OPAL's segment scope and a ``schema tree'' where the nine 
observations are observed. Again, no adaptation to a specific domain is possible.
%
%
%

\tool{Wise-iExtractor}~\cite{he07:_towar_deeper_unter_of_searc} firstly
tokenizes the form to obtain a high-level visual layout description (an
\emph{interface expressions (IEXP)}), distinguishing text fragments, form
fields, and delimiters, such as line breaks.  It then associates texts and
fields by computing the \emph{association weight} between any given field and
the texts in the same line and the two preceding lines, exploiting ending
colons, similarities between the text and the field's HTML name attribute, and
the text-field distance.  In addition, \tool{Wise} also identifies generic
relationships between fields: range (e.g.~from, to), part (e.g.~first and last
name), group (e.g.~radio buttons), or constraint (e.g.~exact match required).
However, in contrast to \OPAL their form labeling only explores limited visual
and textual information relying mainly on weight computation. Moreover, their
domain-independent typing shares some similarities with \OPAL's templates but
lacks the flexibility provided by \OPAL's domain schemata that allow us to
adjust these generic types to a given domain. Though these adjustments are often
small, their impact is significant, as shown in Section~\ref{sec:evaluation}.



In \cite{Yuan:2009:USI:1605395.1605889}, a (manually derived) domain schema is
used to guide understanding. In contrast to \OPAL, it segments a form purely
based on the domain schema (called schema tree). They evaluate on a fragment
(around 100-150 forms) of TEL-8 using domain schemata derived from the rest of
TEL-8 (about 250 forms). This yields on the considered fragment similar accuracy
as \OPAL achieves on the full TEL-8, yet \OPAL does not use any domain schema in
this case, let alone domain schemata specifically trained on TEL-8.

\paragraph{Form understanding by learning from example forms.}
%
Where the above approaches rely on humans to derive heuristics and rules for
form understanding, the following approaches use machine learning on a set of
example forms. Therefore, they can also be trivially adapted to a specific
domain by using domain-specific training data. The evaluation in
\cite{DBLP:conf/cikm/KhareA09}, however, shows little effect of domain-specific
training data: a training set from the biological domain outperforms
domain-specific training set in four out of five other domains.

\tool{LabelEx}~\cite{nguyen08:_learn_to_extrac_from_label} uses limited domain
knowledge when considering the occurrence frequencies of label terms. Domain
relevance of the terms occurring in a label, measured as the occurrence
frequency in previous forms, is one feature used to score field-label
candidates. Field-label candidates are otherwise created primarily using
neighbourhood and other visual features, as well as their HTML markup.
%
However, \tool{LabelEx} does not consider field groups and thus is unable to
describe segments of semantically related fields or to align fields and labels
based on the group structure and does not use any domain knowledge aside of term
frequency. 


\tool{HMM}~\cite{DBLP:conf/cikm/KhareA09} uses predefined knowledge on
typical terms in forms, such as ``between'', ``min'', or ``max'', but does
not adapt these for a specific domain.
\tool{HMM} employs two hidden Markov models to model an ``artificial
web designer''. 
During form analysis, the HMMs are used to explain the phenomena observed on
the page: The state sequences, that are most likely to produce the given web
form, are considered explanations of the form. Compared to \OPAL, \tool{HMM}
uses no visual features and no domain knowledge.


\paragraph{Form understanding by probing.}
All the above approaches conduct their analysis based purely on information
available on the web forms. Alternatively, there is also an indirect route for
form understanding by submitting the forms and analysing the query results,
which often are much easier to classify (as there are many instances compared to
a single form). The price is, however, that a certain amount of analysis of
those result pages is necessary. Therefore, this is primarily used in a context
where such analysis is anyway required, e.g., in crawlers or data extraction
systems. Typically, these approaches use an incremental approach, identifying
inputs for some fields, submitting the form, analysing the result page, and then
possibly restarting the whole process, now with, e.g., an increased set of input
values for the form. For example, \cite{madhavan08:_googl_deep_web_crawl}
determines whether a field must be filled or is a ``free'' input by iterating
over possible templates and selecting those that return sufficiently distinct
result pages. This is driven by the desire to surface some representative, but
not necessarily complete set of results from the web
form. 
None of these approaches produces a sophisticated form model, but at best rough
classifications of the fields and whether they are mandatory.

\subsection{Form Filling and Integration}

Form integration has been considered in many shapes, either as ``meta-search''
where a master query on a given global schema is translated to concrete forms as
in \OPAL, as ``interface matching'' where many concrete forms are integrated
without a global schema (often using schema matching), or as ``query
generation'' in the context of data extraction or crawling where the aim is to
generate a set of queries to extract all or most of the data, but often not even
full form understanding is performed.

Though some query generation and most interface matching approaches use form
understanding, they are focused on different issues than \OPAL's form
integration which is a type of ``meta-search'': How to find an optimal query set
that uncovers as much deep content as
possible~\cite{Barbosa04siphoninghidden-web}, how to determine if a query will
produce relevant data even if only partial information about the data is
available~\cite{Benedikt:2011:DRA:1989284.1989309}, how to maximize the
diversity of the extracted content~\cite{madhavan08:_googl_deep_web_crawl}, or
how to identify semantic equivalences between fields from different
forms~\cite{Nguyen:2010:PPS:1871437.1871627}.

Similar to \OPAL, \cite{Araujo:2010:CDA:1884110.1884134} fills web forms by
connecting fields at the conceptual level, but with
WordNet~\cite{Pedersen:2004:WMR:1614025.1614037} instead of proper
annotations. Furthermore, \OPAL produces more structured form model that is
verified against a domain
schema. Metaquerier~\cite{Chang:2004:MSL:1046456.1046465}, targets the
integration of web sources and tackles query translation for form filling in
that context. As \OPAL, Metaquerier selects values closest to the constraint in
the source query (similar to our master query). They also perform type-based
query translation to map a source query to a target query considering numeric
and text types, but achieve only $87\%$ accuracy.  \OPAL performs form filling
in a similar fashion, but also considers the number of fields for each domain
type in the master query and performs significantly better ($93\%$).

\section{Conclusion and Future Work}
\label{sec:conclusion}

To the best of our knowledge, \OPAL is the first comprehensive approach to form
understanding and integration. Previous form understanding approaches has been
limited mainly by overly generic, domain independent, monolithic algorithms
relying on narrow feature sets.
\OPAL pushes the state of the art significantly, addressing these limitations
through a very accurate domain independent form labeling, exploiting visual,
textual, and structural features, by itself already outperforming existing
approaches.
This domain independent part is complemented with a domain dependent form field
classification that significantly improves the overall quality of the form
understanding and produces nearly perfect form interpretations.
Accurate form interpretations enables form integration: \OPAL successfully
realizes a light-weight form integration system, able to translate  master queries to forms of a domain with nearly no errors.
  
Nevertheless, there remain open issues in \OPAL and form understanding in
general that need to be addressed for form understanding to become a reliable
tool to access web data through forms with little more effort than through APIs:

\begin{asparaenum}[\bfseries(1)]
\item \textbf{Dynamic, scripted forms:} \OPAL is able to understand most static
  forms with near perfect accuracy, but performs much worse on dynamic forms. We
  are already working on an extension of \OPAL for dealing with dynamic, heavily
  scripted interfaces that combines ideas from state exploration and crawling
  with form understanding. 
\item \textbf{Customised form widgets:} More and more forms use customised
  widgets such as tree views or sliders. Though most of these cases use hidden
  form fields that can be analysed by \OPAL, the use of fully scripted cases
  increases. We plan to extend \OPAL to allow the customisation of the form
  widgets that it can recognise. However, if these cases become more common, it
  may become necessary to automatically explore and learn such new widget
  types. 
\item \textbf{Probing-based understanding:} One of \OPAL's virtues is that it
  achieves its near perfect accuracy without any probing, but thus from the form
  page alone. However, this also limits the information that \OPAL can provide, 
  and prevents the verification and repair of the form model through the results
  returned by a form submission. For applications that need to access the result
  pages (e.g., data extraction and surfacing), we plan to integrate \OPAL
  with the result page analysis system AMBER
  \cite{furche11:_littl_knowl_rules_web} to further improve accuracy and to
  address integrity and access constraints.
\item \textbf{Integrity and access constraints.} \OPAL produces some integrity
  constraints through the domain schema and it's form segmentation, e.g.,
  dependencies between min and max fields in a range segment. We see an increase
  in the use of integrity constraints in forms thanks to the availability of
  easy-to-use client-side validation libraries.  Light-weight methods for
  analysing and exploiting such client side validation would allow us to extend our form models
  with more detailed integrity constraints. This is in addition to integrity and
  access constraints derived from probing. 
\end{asparaenum}

\section{Acknowledgements}
\ACKNOWLEDGEMENTS




\end{document}
